\documentclass[oneside,11pt]{article}
\usepackage[utf8]{inputenc}
\usepackage[T1]{fontenc}

\usepackage{amsthm}
\usepackage{amsmath}
\usepackage{amsfonts}
\usepackage{amssymb}

\usepackage{graphicx} 
\usepackage{graphbox}
\usepackage{float}
\usepackage{booktabs}
\usepackage{multirow}
\usepackage{rotating}
\usepackage{array}
\usepackage{xcolor}
\usepackage{subcaption}
\usepackage{tabularx}
\usepackage{caption}
\usepackage{bm}
\usepackage{algpseudocode,algorithmicx,algorithm}

\usepackage{comment}

\newcolumntype{C}[1]{>{\centering\arraybackslash}p{#1}}
\newcolumntype{R}[1]{>{\raggedleft\let\newline\\\arraybackslash\hspace{0pt}}m{#1}}
\newcolumntype{L}[1]{>{\raggedright\let\newline\\\arraybackslash\hspace{0pt}}m{#1}}

\usepackage{breakcites}

\usepackage{nowidow}

\usepackage{enumitem}

\usepackage[top=1.5cm,bottom=2cm,right=2.5cm,left=2.5cm]{geometry}
\usepackage{titling}

\usepackage{natbib}

\usepackage{ifplatform}

\usepackage{textcomp}
\ifwindows
\fi

\allowdisplaybreaks

\newcommand{\Var}{\operatorname{\mathbb{V}ar}}
\newcommand{\Cov}{\operatorname{\mathbb{C}ov}}

\DeclareMathAlphabet\mathbfcal{OMS}{cmsy}{b}{n}

\DeclareFontFamily{OT1}{pzc}{}
\DeclareFontShape{OT1}{pzc}{m}{it}{<-> s * [1.10] pzcmi7t}{}
\DeclareMathAlphabet{\mathpzc}{OT1}{pzc}{m}{it}

\newtheorem{theorem}{Theorem}[section]

\newtheorem{remark}{Remark}[section]
\newtheorem{proposition}{Proposition}[section]

\usepackage[pdftex,final,bookmarksnumbered,bookmarksopen=false,breaklinks,colorlinks]{hyperref}
\hypersetup{
	final,
	bookmarksnumbered=true,
	bookmarksopen=true,
	bookmarksopenlevel=0,
	unicode=false,
	pdftoolbar=true,
	pdfmenubar=true,
	pdffitwindow=false,
	pdftitle={Blessing of dimensionality in cross-validated bandwidth selection on the sphere},
	pdfdisplaydoctitle=true,
	pdftoolbar=true,
	pdfmenubar=true,
	pdflang={English},
	pdfauthor={Jose E. Chacon, Eduardo Garcia-Portugues, Andrea Meilan-Vila},
	pdfsubject={arXiv paper},
	pdfcreator={José E. Chacón},
	pdfproducer={José E. Chacón},
	pdfkeywords={Directional data}{High-dimensional data}{Nonparametric statistics}{Smoothing},
	pdfnewwindow=true,
	breaklinks=true,
	hidelinks,
	linkcolor=black,
	citecolor=black,
	filecolor=magenta,
	urlcolor=cyan
}

\newif\ifmain
\maintrue
\newif\ifsupplement
\supplementtrue

\newif\iffigstabs
\figstabstrue

\begin{document}

\ifmain

%-----------------------------------------------%
\title{Analyzing directional errors in spatial orientation using nonparametric circular regression with mixed covariates}
\setlength{\droptitle}{-1cm}
\predate{}%
\postdate{}%
\date{}
%-----------------------------------------------%

%-----------------------------------------------%
\author{Mario Francisco-Fernández$^{1,3}$ and Andrea Meil\'an-Vila$^{2}$}
\footnotetext[1]{Department of Mathematics, Universidade da Coruña (Spain).}
\footnotetext[2]{Department of Statistics, Universidad Carlos III de Madrid (Spain).}
\footnotetext[3]{Corresponding author. e-mail: \href{mailto:mario.francisco@udc.es}{mario.francisco@udc.es}.}
\maketitle
%-----------------------------------------------%

\begin{abstract}
Spatial orientation is a fundamental cognitive skill that relies on sensory information to update perceived direction.
Understanding how sensory conditions influence directional accuracy is important for both cognitive science and
the design of assistive technologies. We analyze experimental data in which blind, low-vision, and sighted
participants performed spatial updating tasks under five sensory conditions, with signed angular error as the
response.
To model these data, we propose a nonparametric circular regression framework that accommodates both
continuous and categorical predictors via a product-kernel estimator. Bandwidth selection is crucial in this
setting, yet developing practical data-driven methods remains
challenging. We derive asymptotic bias and variance expressions for the estimator, though these results do not
directly lead to a feasible plug-in bandwidth selector. To address this, we develop a bootstrap bandwidth selection criterion tailored to the
cosine loss and compare it with cross-validation and rule-of-thumb approaches in simulation studies.
Applied to the spatial updating data, the proposed framework reveals nonlinear, condition-specific patterns and
quantifies uncertainty via simultaneous bootstrap confidence bands. Across the scenarios considered, the proposed
bootstrap selector achieves a favorable bias-variance trade-off and yields stable inference relative to the
competing methods. An implementation is available in the \textsf{R} package \texttt{circMixedReg}.
\end{abstract}
\begin{flushleft}
	\small\textbf{Keywords:} Kernel smoothing; Bootstrap methods; Bandwidth selection; Circular data analysis; Human orientation.
\end{flushleft}

\section{Introduction}
\label{sec:intro}

Spatial orientation enables humans to locate themselves in their environment and navigate effectively by integrating multiple sensory inputs while moving. Accurate estimation of direction (and distance) is crucial for everyday activities, from moving safely through familiar spaces to exploring novel ones. These processes can become particularly challenging for individuals with visual impairments or in situations where sensory cues are limited, degraded, or conflicting, leading to systematic directional biases and increased uncertainty. Quantifying how sensory conditions affect spatial updating is therefore relevant both for advancing our understanding of cognitive and neural mechanisms and for informing the design of accessible environments, assistive technologies, and rehabilitation strategies.

A growing body of research suggests that spatial representations can be preserved and updated without vision, albeit with increased cognitive demands and modality-specific limitations. For instance, \citet{klatzky2006cognitive} showed that navigation without visual input imposes substantially greater cognitive demands, even in familiar settings. Complementarily, \citet{giudice2011functional} provided evidence that spatial images derived from touch and vision can function equivalently in supporting spatial cognition. Non-visual modalities such as audition and proprioception may therefore partially compensate for the absence of vision: \citet{papadopoulos2012working} highlighted the role of auditory spatial working memory and proprioception for navigation in blind individuals, and \citet{schinazi2016navigation} reviewed findings indicating that congenitally blind individuals can navigate effectively using non-visual cues, although certain aspects of spatial learning may be limited without visual experience. At the same time, early visual input appears to contribute to the calibration and refinement of spatial representations; for example, \citet{gori2014impairment} reported deficits in auditory spatial localization among congenitally blind individuals. Taken together, these results support a multimodal, experience-dependent view of spatial orientation and motivate statistical tools that can quantify condition-specific directional error patterns while accommodating heterogeneous participant groups.

% \begin{figure}[!htb]
%   \centering
%   \includegraphics[width=0.8\textwidth]{legge_combined_figure_rec.pdf}
%   \caption{Spatial-updating task used by \citet{legge2016indoor}. \textbf{(A)} Plan view of a three-segment path with two turns. The target is a beanbag dropped at the first waypoint, and the start location is the doorway. \textbf{(B)} Sensory conditions considered in the experiment, indicating which cues were available (visual preview, vision during walking, auditory information) and whether looking back was allowed. After completing the path, participants reported distance and direction to both the start and the target.}
%   \label{fig:data}
% \end{figure}

In the present paper, we analyze data from a spatial-updating experiment conducted by \citet{legge2016indoor}.
In that study, participants traversed predefined indoor routes composed of three straight segments with two turns and, at the end of each route, indicated the direction (and distance) to a previously viewed target (Figure~\ref{fig:data}). The design was intended to emulate indoor wayfinding demands under reduced sensory input. The dataset comprises 679 trials from 64 participants (32 sighted, 16 low-vision, and 16 blind), with each participant completing multiple trials under five sensory conditions that manipulate the availability of visual and/or auditory information: \textit{Control} (full vision and hearing), \textit{Preview} (a brief visual preview of the target followed by blindfolding during locomotion and estimation), \textit{Forward Facing} (vision available during locomotion but restricted during the pointing/estimation phase), \textit{Auditory} (blindfolded, relying primarily on self-generated auditory cues such as footsteps and echoes), and \textit{Deprivation} (blindfolded with noise-canceling headphones). For each trial, the dataset records the true and reported target direction and distance, along with trial- and participant-level identifiers. This experiment is part of a broader line of work on indoor navigation under reduced sensory input \citep[see also][]{legge2016plos}.

 \begin{figure}[!htb]
     \centering
     \includegraphics[width=0.75\linewidth]{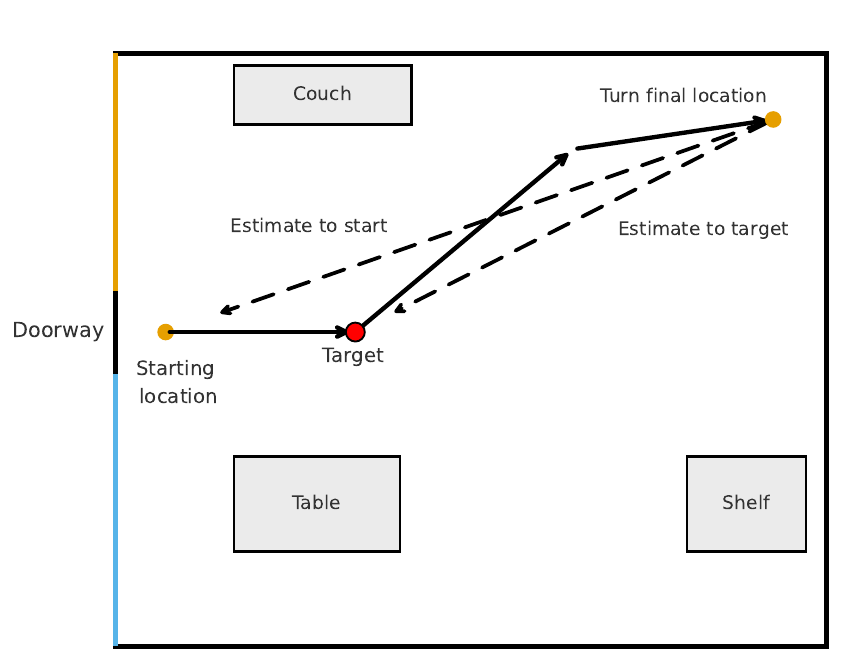}
     \caption{Spatial-updating task used by \citet{legge2016indoor}. Participants are guided along a three-segment path with two turns to a final location. The target is a beanbag dropped at the first waypoint, and the start location is the doorway. After completing the path, participants reported distance and direction to both the start and the target.
     }
     \label{fig:data}
 \end{figure}

%  \begin{table}[!htb]
% \centering
% \small
% \begin{threeparttable}
% \caption{Sensory conditions in the spatial-updating task. Availability of visual and auditory cues and whether looking back was allowed.}
% \label{tab:sens_cond}

% \setlength{\tabcolsep}{6pt}
% \renewcommand{\arraystretch}{1.15}
% \begin{tabular}{lccccc}
% \toprule
%  & Control & Preview & Forward Facing & Auditory & Deprivation \\
% \midrule
% Vision during walk            & \cmark & \xmark & \cmark & \xmark & \xmark \\
% 10\,s preview at doorway      & \xmark & \cmark & \xmark & \xmark & \xmark \\
% Audio available (not masked)  & \cmark & \xmark & \cmark & \cmark & \xmark \\
% Looking back permitted           & \cmark & \xmark & \xmark & \xmark & \xmark \\
% \bottomrule
% \end{tabular}

% \begin{tablenotes}[flushleft]
% \footnotesize
% \item \textit{Note:} ``Audio available (not masked)'' indicates whether participants could use ambient/self-generated auditory cues.
% In \textit{Preview} and \textit{Deprivation}, auditory cues were masked (earmuffs + white noise). In \textit{Auditory}, participants were blindfolded but auditory cues were available.
% \end{tablenotes}
% \end{threeparttable}
% \end{table}

In our analysis, the primary outcome is the \emph{signed} circular error, defined as the angular difference between the reported and true target directions, wrapped to $(-\pi,\pi]$. This choice contrasts with the \emph{absolute} directional error summaries reported in \citet{legge2016indoor}. Keeping the sign preserves information about systematic directional bias (clockwise vs.\ counterclockwise), which is essential for modeling the conditional mean direction in a circular regression framework. We examine how directional error varies with task demand and sensory restriction using sensory condition as a categorical predictor and target distance as a continuous covariate. In additional analyses, we also consider the distance-estimation error (reported minus true distance), a signed measure that distinguishes under- from overestimation either as an additional covariate or as an alternative continuous predictor. We use the signed version \citep[rather than the absolute summaries emphasized in][]{legge2016indoor} to retain information about systematic bias. Because each participant contributes multiple trials, we conduct the analysis at the trial level, with interest in the population-averaged relationship between directional error and the covariates across trials. We interpret the results accordingly.

Modeling circular deviations presents unique challenges because the response lies on the circle,
making standard linear regression inappropriate \citep{fisher1995statistical,jammalamadaka2001topics}.
Moreover, in the spatial-updating setting described above, the predictors naturally combine a
continuous measure of task demand with a categorical sensory-condition factor, calling for regression
methods that accommodate mixed covariates. Circular nonparametric regression provides a flexible
alternative, enabling smooth estimation of the relationship between a circular response and covariates
without restrictive parametric assumptions. In practice, local smoothing estimators such as
Nadaraya--Watson (NW) and local linear (LL) methods have been studied for a single continuous covariate
\citep{dimarzio2013nonparametric} and for multiple continuous covariates \citep{meilan2021nonparametricTEST},
with extensions to functional predictors, time series, and spatially correlated errors
\citep{meilan2024functional,di2012non,meilan2021trendSURF}; see \citet{alonso2025review} for an overview.
However, to the best of our knowledge, practical methodology for circular regression with mixed covariates, particularly data-driven bandwidth selection and uncertainty quantification, remains largely unexplored.

In this paper, we develop a nonparametric circular regression framework for mixed covariates, accommodating both
continuous and categorical predictors via a product-kernel NW estimator, motivated by
applications such as the spatial-updating data described above. The approach is inspired by mixed-covariate
smoothing in Euclidean settings \citep{racine2004nonparametric} and adapted to circular responses. Our main
contributions are: (i) derivation of the asymptotic bias and variance of the  proposed estimator under general
regularity conditions; (ii) development and comparison of three bandwidth selection strategies: a simple, fully explicit rule-of-thumb, in the spirit of classical plug-in rule, a leave-one-out cross-validation criterion, and a bootstrap criterion
based on a cosine loss \citep{kim2017multivariate}, with the bootstrap approach offering improved stability
empirically; (iii) construction of simultaneous confidence bands via bootstrap, including an iterative
calibration procedure to improve simultaneous coverage and compatible with any data-driven bandwidth choice; and
(iv) an application to the spatial orientation data described above \citep{legge2016indoor}, illustrating
nonlinear, condition-specific directional error patterns. All methods are implemented in the \textsf{R} package \texttt{circMixedReg} \citep{circMixedReg2025}.

The rest of the paper is organized as follows. Section~\ref{sec:statistical_model} introduces the circular regression model, the proposed product-kernel estimator, and its main asymptotic properties. Section~\ref{sec:bw} presents the bandwidth selection strategies, including the bootstrap-based method for circular responses, as well as cross-validation and rule-of-thumb approaches. Section~\ref{sec:bands} describes the construction of simultaneous confidence bands for the circular regression function. Section~\ref{sec:simulations} reports a simulation study assessing the performance of the competing selectors across several scenarios. Section~\ref{sec:results} presents the real-data analysis of spatial orientation under different sensory conditions, and Section~\ref{sec:discussion} concludes with a discussion and directions for future work. %Proofs, technical assumptions, and additional real-data results are provided in the Supplementary Material (SM).
Proofs, technical assumptions, and additional real-data results are provided in the Supplementary Material.

\section{The circular regression model and its estimator}
\label{sec:statistical_model}

Let \( \{(\bm{X}_i, \bm{Z}_i, \Theta_i)\}_{i=1}^n \) be a random sample of \((\bm{X}, \bm{Z}, \Theta)\), where $\Theta$ is a circular random variable taking values on $[0, 2\pi)$. The vector $\bm{X}_i = (X_{i1}, \ldots, X_{ik}) \in \mathbb{R}^k$ represents continuous covariates, and $\bm{Z}_i = (Z_{i1}, \ldots, Z_{ip}) \in \mathcal{Z}_1 \times \cdots \times \mathcal{Z}_p$ contains categorical covariates, with each \(\mathcal{Z}_l\) a finite set. We assume that the explanatory variables \((\bm{X}, \bm{Z})\) have a joint distribution with density \(f(\bm{x}, \bm{z})\). Although \(\bm{Z}\) is discrete, we follow standard convention in the literature \citep[see, e.g.,][]{racine2004nonparametric} and refer to \(f(\bm{x}, \bm{z})\) as a \textit{density}. 

We consider the following circular regression model:
\begin{equation}
\Theta_i = [m(\bm{X}_i, \bm{Z}_i) + \varepsilon_i] \quad (\operatorname{mod}\, 2\pi), \quad i = 1, \ldots, n,
\label{eq:model}
\end{equation}
where \( \varepsilon_i \), $i=1, \ldots, n$ are independent realizations of a circular variable $\varepsilon$ satisfying
$\mathbb{E}[\sin(\varepsilon)\mid \bm{X}=\bm{x},\bm{Z}=\bm{z}] = 0$ and
$\mathbb{E}[\cos(\varepsilon)\mid \bm{X}=\bm{x},\bm{Z}=\bm{z}]>0$.

Given the circular regression model defined in Equation~\eqref{eq:model}, our objective is to estimate the circular regression function $m$, which is defined as the minimizer of the angular risk:
\begin{equation}
 \mathbb{E}\left\{1 - \cos\left[\Theta - m(\bm{X}, \bm{Z})\right] \mid \bm{X} = \bm{x}, \bm{Z} = \bm{z}\right\}.
\label{eq:risk}
\end{equation}
 The solution to this optimization problem is given by \citep[see][]{jammalamadaka2001topics}:
\begin{equation}\label{eq:reg}
m(\bm{x}, \bm{z}) =\operatorname{atan2}\left[ m_1(\bm{x}, \bm{z}), m_2(\bm{x}, \bm{z})\right],
\end{equation}
where $m_1(\bm{x}, \bm{z})=\mathbb{E}[\sin(\Theta) \mid \bm{X} = \bm{x}, \bm{Z} = \bm{z}] $, $m_2(\bm{x}, \bm{z})=\mathbb{E}[\cos(\Theta) \mid \bm{X} = \bm{x}, \bm{Z} = \bm{z}]$ and the function $\operatorname{atan2}(y,x)$ returns the principal angle between the positive $x$-axis and the vector from the origin to $(x,y)$ in the Cartesian plane. 
We assume that $(m_1(\bm{x},\bm{z}),m_2(\bm{x},\bm{z}))\neq (0,0)$ so that $m(\bm{x},\bm{z})$ is well-defined.

These functions can be seen as conditional expectations in two auxiliary models with responses \( \sin(\Theta) \) and \( \cos(\Theta) \), respectively,
\begin{align}
\sin(\Theta_i) &= m_1(\bm{X}_i, \bm{Z}_i) + \xi_i, \quad i=1,\dots,n, \label{model1} \\
\cos(\Theta_i) &= m_2(\bm{X}_i, \bm{Z}_i) + \zeta_i, \quad i=1,\dots,n, \label{model2}
\end{align}
where \( \xi_i \) and \( \zeta_i \) are bounded error terms with conditional mean zero (since $\sin(\Theta)$ and $\cos(\Theta)$ take values in $[-1,1]$). Their conditional variances and covariance are defined as \( s_1^2(\bm{x}, \bm{z}) =\Var(\xi \mid \bm{X} = \bm{x}, \bm{Z} = \bm{z}) \), \( s_2^2(\bm{x}, \bm{z}) =\Var(\zeta \mid \bm{X} = \bm{x}, \bm{Z} = \bm{z}) \), \( c(\bm{x}, \bm{z}) = \mathbb{E}(\xi \zeta \mid \bm{X} = \bm{x}, \bm{Z} = \bm{z}) \).

A NW estimator for \( m(\bm{x}, \bm{z}) \)  in \eqref{eq:reg} can be constructed by replacing the unknown functions $m_1(\bm{x}, \bm{z})$ and $m_2(\bm{x}, \bm{z})$ in its expression with suitable estimators, as follows:
\begin{equation}
\hat{m}_{\bm{H}}(\bm{x}, \bm{z}) =\operatorname{atan2}\left[ \hat{m}_{1,\bm{H}}(\bm{x}, \bm{z}), \hat{m}_{2,\bm{H}}(\bm{x}, \bm{z})  \right],
\label{eq:est}
\end{equation}
where
\begin{equation}
\hat{m}_{1,\bm{H}}(\bm{x}, \bm{z})=\sum_{i=1}^n w^{i}_{\bm{H}}(\bm{x}, \bm{z}) \sin (\Theta_i)
\label{est_m1}
\end{equation} 
and
\begin{equation}
\hat{m}_{2,\bm{H}}(\bm{x}, \bm{z})=\sum_{i=1}^n w^{i}_{\bm{H}}(\bm{x}, \bm{z}) \cos (\Theta_i),
\label{est_m2}
\end{equation}
with
\begin{equation}
w^{i}_{\bm{H}}(\bm{x}, \bm{z}) = \frac{K_{\bm{h}}(\bm{x}, \bm{X}_i)\,L_{\boldsymbol{\lambda}}(\bm{z}, \bm{Z}_i)}{\sum_{j=1}^n K_{\bm{h}}(\bm{x}, \bm{X}_j)\,L_{\boldsymbol{\lambda}}(\bm{z}, \bm{Z}_j)}.
\label{eq:pesos}
\end{equation}

Here, $\bm{H}= (\bm{h}, \boldsymbol{\lambda})$, where \(\bm{h} = (h_1, \ldots, h_k)\) and \(\boldsymbol{\lambda} = (\lambda_1, \ldots, \lambda_p)\) are vectors of bandwidths for the continuous and categorical covariates, respectively. The kernel product is defined by:
\[
K_{\bm{h}}(\bm{x}, \bm{X}_i) = \prod_{j=1}^k h_j^{-1} K_{j}[(x_j - X_{ij})/h_j], \quad L_{\boldsymbol{\lambda}}(\bm{z}, \bm{Z}_i) = \prod_{l=1}^p L_l(z_l, Z_{il};\lambda_l),
\]
where \(K_{j}\) are univariate kernels for Euclidean data (e.g., Gaussian or Epanechnikov), and \(L_{l}\) are categorical kernels such as Aitchison--Aitken or order-sensitive ones \citep{racine2004nonparametric}. This structure follows the product-kernel methodology proposed in \cite{racine2004nonparametric} and extends circular response models, such as those in \cite{dimarzio2013nonparametric} and \cite{meilan2021nonparametricTEST}, to accommodate both continuous and categorical covariates.

%It is important to note that adopting a product-kernel structure for the weights does not entail assuming independence between the covariates, and potential interactions between the continuous and categorical components in the underlying data-generating mechanism are allowed. This structure rather provides a convenient way to separate the smoothing across different types of predictors. The resulting estimator remains consistent under general conditions, even when the covariates exhibit dependence, as established by \citet{racine2004nonparametric}.

It is important to note that using a product-kernel structure for the weights does not imply any independence assumption on the covariates. The product form is simply a convenient way to perform joint smoothing over predictors of different types (continuous and categorical), while the regression function $m(\bm{x},\bm{z})$ remains fully nonparametric and is not restricted to be additive or separable in its arguments. In particular, the effect of a continuous covariate may vary across categories, and vice versa.

\begin{remark}
\label{rem:LL}
A LL version of the proposed estimator in~\eqref{eq:est} can be obtained by replacing the
local-constant fits of $m_1$ and $m_2$ with the fitted intercepts from weighted local-linear regressions in the
continuous covariates, using product-kernel weights
$K_{\bm h}(\bm x,\bm X_i)\,L_{\boldsymbol\lambda}(\bm z,\bm Z_i)$ in the corresponding weighted least-squares
criterion. In this formulation, local linearity is applied only to the continuous covariates, whereas the
categorical predictors enter solely through the multiplicative kernel component.
This LL construction parallels local-linear circular regression with continuous predictors
\citep{dimarzio2013nonparametric,meilan2021nonparametricTEST} while adopting the standard mixed-covariate
product-kernel weighting of \citet{racine2004nonparametric}. Although local linear smoothing can reduce bias,
in our simulations and real-data application the NW and LL fits were very similar, and NW was numerically more
stable, particularly for small bandwidths and sparse category combinations. For clarity, we adopt the NW estimator
as our default, while the \texttt{circMixedReg} package implements both NW and LL versions.
\end{remark}

We now derive the asymptotic properties of the NW estimator in~\eqref{eq:est} under regularity conditions (A1)--(A6), stated in Section~\ref{app:theory} of the Supplementary Material. Under these conditions, we first study the NW estimators of \(m_j(\bm{x},\bm{z})\), \(j=1,2\). Building on \citet{racine2004nonparametric}, we obtain expansions for the bias and variance of \(\hat m_{j,\bm H}(\bm{x},\bm{z})\), as well as the asymptotic covariance between \(\hat m_{1,\bm H}(\bm{x},\bm{z})\) and \(\hat m_{2,\bm H}(\bm{x},\bm{z})\). These results are collected in Proposition~\ref{prop1}. 
For simplicity in presentation and proof, and following \citet{racine2004nonparametric}, we consider common kernels and bandwidths across coordinates, namely \(K=K_1=\cdots=K_k\), \(L=L_1=\cdots=L_p\), \(\bm h=(h,\ldots,h)\), and \(\boldsymbol\lambda=(\lambda,\ldots,\lambda)\). This restriction is adopted for notational convenience and does not affect the applicability of the methods, which readily extend to coordinate-specific choices.

In what follows, \(\nabla_{\bm{x}} g(\bm{x},\bm{z})\) and \(\nabla_{\bm{x}}^2 g(\bm{x},\bm{z})\) denote the gradient and Hessian of a sufficiently smooth function \(g\) with respect to \(\bm{x}\). For any matrix \(\bm A\), \(\bm A^\top\) and \(\operatorname{tr}(\bm A)\) denote its transpose and trace, respectively.

\begin{proposition}
\label{prop1}
Let  \( \{(\bm{X}_i, \bm{Z}_i, \Theta_i)\}_{i=1}^n \) be a random sample from $(\bm X,\bm{Z}, \Theta)$.  Under assumptions  $({\rm A}1)$--$({\rm A}6)$, if $(\bm{x}, \bm{z})$ is an interior point of the support of $f$, then, for $j=1,2$,
\begin{align*}
\mathbb{E}[\hat m_{j, \bm{H}}(\bm{x},\bm{z})- m_j(\bm{x}, \bm{z})]
=&\,\mu_2(K)h^2 \left\{
\frac{\nabla_{\bm{x}} f(\bm{x}, \bm{z})^\top \nabla_{\bm{x}} m_j(\bm{x}, \bm{z})}{f(\bm{x}, \bm{z})}
+ \frac{1}{2} \operatorname{tr} \left[ \nabla_{\bm{x}}^2 m_j(\bm{x}, \bm{z}) \right]
\right\} \\
&\quad+\lambda \sum_{\substack{\tilde{\bm{z}} \in \mathcal{Z} \\ d_H(\tilde{\bm{z}}, \bm{z}) = 1}}
\left[ m_j(\bm{x}, \tilde{\bm{z}}) - m_j(\bm{x}, \bm{z}) \right]
 \frac{f(\bm{x}, \tilde{\bm{z}})}{f(\bm{x}, \bm{z})}
+ o(h^2 + \lambda),\\
\Var[\hat m_{j, \bm{H}}(\bm{x}, \bm{z})]
=&\,\frac{R(K)s_j^2(\bm{x}, \bm{z})}{n h^k f(\bm{x}, \bm{z})} + o\left( \frac{1}{n h^k} \right),\\
\Cov[\hat{m}_{1, \bm{H}}(\bm{x}, \bm{z}),\hat{m}_{2, \bm{H}}(\bm{x}, \bm{z})]
=&\,\frac{R(K)c(\bm{x}, \bm{z})}{n h^k  f(\bm{x}, \bm{z})}
+ o\left( \frac{1}{n h^k} \right).
\end{align*}
\noindent
Here \( \mu_2(K) = \int_{\mathbb{R}} u^2 K(u)\, du \), \( R(K) =  \left( \int K^2(u)\, du \right)^k<\infty \), and
$d_H(\tilde{\bm{z}}, \bm{z})$ is the Hamming distance between the categorical vectors $\tilde{\bm{z}}$ and $\bm{z}$.
\end{proposition}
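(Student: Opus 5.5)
The plan is to treat each $\hat m_{j,\bm H}$ as a ratio $\hat g_j/\hat f$, where
\[
\hat f(\bm x,\bm z)=n^{-1}\sum_{i=1}^n K_{\bm h}(\bm x,\bm X_i)L_{\boldsymbol\lambda}(\bm z,\bm Z_i),
\]
and $\hat g_j$ is the same sum with $K_{\bm h}L_{\boldsymbol\lambda}$ multiplied by $\sin\Theta_i$ or $\cos\Theta_i$. We then follow the linearization argument of \citet{racine2004nonparametric}, applied componentwise to the auxiliary models \eqref{model1}--\eqref{model2}.

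The first step is the standard decomposition
\[
\hat m_{j,\bm H}-m_j=\frac{\hat g_j-m_j\hat f}{f}\Bigl(1-\frac{\hat f-f}{\hat f}\Bigr).
\]
We show that the correction term is of smaller order, using $\hat f-f=O_p(h^2+\lambda+(nh^k)^{-1/2})$ and the boundedness of the responses. Boundedness of $\sin\Theta$ and $\cos\Theta$ is what makes the moment version of this argument legitimate. We may need (A1)--(A6) to supply $f$ bounded away from zero near $(\bm x,\bm z)$, or else interpret the expectation of the leading term as in Racine--Li. The leading term is
\[
A_j=\frac{1}{nf}\sum_{i=1}^n K_{\bm h}L_{\boldsymbol\lambda}\bigl[m_j(\bm X_i,\bm Z_i)-m_j(\bm x,\bm z)+\xi_i^{(j)}\bigr],
\]
with $\xi^{(1)}=\xi$ and $\xi^{(2)}=\zeta$.

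For the bias, we compute $\mathbb E A_j$ by splitting the categorical sum according to Hamming distance. We use $L(z_l,\tilde z_l;\lambda)=1-O(\lambda)$ if $z_l=\tilde z_l$ and $O(\lambda)$ otherwise; for Aitchison--Aitken-type kernels, normalized as in Racine--Li, the off-diagonal weight is exactly $\lambda$. With this:
\begin{itemize}[leftmargin=*]
\item For $\tilde{\bm z}=\bm z$, the weight is $1+O(\lambda)$. The substitution $\bm u=(\bm X-\bm x)/h$ and a second-order Taylor expansion of $m_j(\cdot,\bm z)f(\cdot,\bm z)$, using kernel symmetry and $\int K=1$, gives
\[
\mu_2(K)h^2\Bigl[\nabla_{\bm x} f^\top\nabla_{\bm x} m_j+\tfrac12 f\operatorname{tr}\nabla^2_{\bm x}m_j\Bigr]+o(h^2).
\]
\item For $d_H(\tilde{\bm z},\bm z)=1$, the weight is $\lambda$. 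Here only zeroth order in $h$ is needed, giving $\lambda[m_j(\bm x,\tilde{\bm z})-m_j(\bm x,\bm z)]f(\bm x,\tilde{\bm z})+o(\lambda)$.
\item For $d_H\ge 2$, the contribution is $O(\lambda^2)$.
\end{itemize}
The $\xi$ term has mean zero. Dividing by $f$ gives the stated bias.

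For the variance and covariance, the $m_j$-difference terms contribute $O(n^{-1}(h^2+\lambda)^2h^{-k})=o((nh^k)^{-1})$. The noise term, by independence, has variance
\[
n^{-1}f^{-2}\,\mathbb E\bigl[K_{\bm h}^2L_{\boldsymbol\lambda}^2 s_j^2(\bm X,\bm Z)\bigr].
\]
The dominant category is $\tilde{\bm z}=\bm z$ with $L^2\to1$. The substitution gives $h^{-k}R(K)s_j^2f+o(h^{-k})$, hence the stated variance formula. The covariance is identical with $\mathbb E[\xi\zeta\mid\cdot]=c$ replacing $s_j^2$. Cross terms between bias and noise vanish because the noise has conditional mean zero. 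The main obstacle is making the linearization rigorous at the level of moments rather than in probability, i.e. controlling $\mathbb E[(\hat f-f)(\hat g_j-m_j\hat f)/(\hat f f)]$. I would handle it either by noting that $\hat m_j\in[-1,1]$ is bounded, so a truncation on the event $\{|\hat f-f|<f/2\}$, whose complement has exponentially small probability by Bernstein's inequality, gives the remainder, or by accepting the convention in Racine--Li of stating these as moments of the asymptotic leading term.
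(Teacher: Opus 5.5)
Your proposal is correct and follows the paper's argument. The paper simply applies Racine and Li's (2004) Theorem~2.1 to the sine and cosine components, with the shared weights giving the covariance. You carry out that same linearization explicitly, and usefully add the Bernstein-truncation control of the ratio remainder at the level of moments. You also correctly note that the exact coefficient $\lambda$ requires the Racine--Li normalization of the off-diagonal categorical weight, which (A3) alone does not fix.

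One harmless slip: the variance of the $m_j$-difference part of $A_j$ is $O\bigl((h^2+\lambda^2)/(nh^k)\bigr)$, not $O\bigl((h^2+\lambda)^2/(nh^k)\bigr)$, because $[m_j(\bm X,\bm z)-m_j(\bm x,\bm z)]^2$ is only $O(h^2)$ on the kernel window. This is still $o\bigl((nh^k)^{-1}\bigr)$, so the conclusion is unchanged.
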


\begin{remark}
In Proposition~\ref{prop1}, the bias has the usual continuous-component term of order $O(h^2)$ (as in standard multivariate NW smoothing) and, in addition, a categorical-smoothing contribution of order $O(\lambda)$. The latter arises from borrowing information across categories and involves only those $\tilde{\bm{z}}$ that differ from $\bm{z}$ in exactly one categorical component (i.e., $d_H(\tilde{\bm{z}},\bm{z})=1$). 
Regarding variability, the presence of categorical smoothing does not change the leading order $1/(n h^k)$. It enters only through a multiplicative factor depending on the categorical kernel and the distribution of $\bm Z$, under standard mixed-kernel regularity conditions (uniformly bounded kernels and a design density $f(\bm x,\bm z)$ bounded away from zero on the region where the kernel weights are non-negligible). These conclusions follow from the mixed-kernel results in \citet[Theorem~2.1]{racine2004nonparametric}. The same asymptotic orders hold when using vector bandwidths.
\end{remark}

Using Proposition~\ref{prop1}, the following theorem provides the leading terms of the conditional bias and variance of $\hat m_{\bm H}(\bm x,\bm z)$ in~\eqref{eq:est}.
%Proofs of Proposition~\ref{prop1} and Theorem~\ref{teo1} are provided in Section~S1 of Supplement~A.

\begin{theorem}\label{teo1}
	 Let  \( \{(\bm{X}_i, \bm{Z}_i, \Theta_i)\}_{i=1}^n \) be a random sample from $(\bm X,\bm{Z}, \Theta)$.  Under assumptions  $({\rm A}1)$--$({\rm A}6)$, the asymptotic conditional  bias and variance of estimator  $\hat{m}_{\bm{H}}(\bm{x}, \bm{z})$, at  a fixed interior point $(\bm{x}, \bm{z})$ in the support of $f$, are given by:
	\begin{align*}
	\mathbb{E}[\hat{m}_{\bm{H}}(\bm{x}, \bm{z})-m(\bm{x}, \bm{z})]
=&\,\mu_2(K)h^2\left\{
\frac{1}{\ell(\bm{x}, \bm{z}) f(\bm{x}, \bm{z})}\,
\nabla_{\bm{x}} (\ell f)(\bm{x}, \bm{z})^\top \nabla_{\bm{x}} m(\bm{x}, \bm{z})
\right.\\
&\left.
+ \frac{1}{2}\operatorname{tr}\!\left[ \nabla_{\bm{x}}^2 m(\bm{x}, \bm{z}) \right]
\right\} \\
&+\lambda\frac{1}{\ell(\bm{x}, \bm{z})}
\sum_{\substack{\tilde{\bm{z}} \in \mathcal{Z} \\ d_H(\tilde{\bm{z}}, \bm{z}) = 1}}
\ell(\bm{x}, \tilde{\bm{z}})\sin\!\left[ m(\bm{x}, \tilde{\bm{z}}) - m(\bm{x}, \bm{z}) \right]
\frac{f(\bm{x}, \tilde{\bm{z}})}{f(\bm{x}, \bm{z})}\\
&+ o(h^2 + \lambda).\\
	\Var[\hat m_{\bm{H}}(\bm{x}, \bm{z})]=&\,\frac{R(K)\sigma^2_1(\bm{x}, \bm{z})}{n h^k  \ell^2(\bm{x},\bm{z}) f(\bm{x}, \bm{z})} + o\left( \frac{1}{n h^k} \right),
			\end{align*}
		where \( \ell(\bm{x}, \bm{z}) = \mathbb{E}[\cos(\varepsilon) \mid \bm{X} = \bm{x}, \bm{Z} = \bm{z}]>0 \) and
\( \sigma_1^2(\bm{x}, \bm{z}) =\Var[\sin(\varepsilon) \mid \bm{X} = \bm{x}, \bm{Z} = \bm{z}] \).
        %$$ \ell(\bm{x}, \bm{z}) = [m_1^2(\bm{x}, \bm{z}) + m_2^2(\bm{x}, \bm{z})]^{1/2}$$ and 
	%\[
%\sigma_1^2(\bm{x}, \bm{z}) = \frac{s_1^2(\bm{x}, \bm{z}) m_2^2(\bm{x}, \bm{z}) + s_2^2(\bm{x}, \bm{z}) m_1^2(\bm{x}, \bm{z}) - 2 c(\bm{x}, \bm{z}) m_1(\bm{x}, \bm{z}) m_2(\bm{x}, \bm{z})}{m_1^2(\bm{x}, \bm{z}) + m_2^2(\bm{x}, \bm{z})}.
%\]
%representing the variance of the angular estimator induced by the propagation of noise through the sine and cosine components. This term combines the variances of \( \sin(\theta) \) and \( \cos(\theta) \), along with their covariance, and is normalized by the square of the local concentration.
\end{theorem}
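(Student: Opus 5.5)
We derive the result from Proposition~\ref{prop1} with a first-order (delta-method) linearization of $\operatorname{atan2}$ around $(m_1,m_2)$. Since $(m_1,m_2)\neq(0,0)$, the map $g(u,v)=\operatorname{atan2}(u,v)$ is smooth there, with $\partial_u g = v/(u^2+v^2)$ and $\partial_v g=-u/(u^2+v^2)$. Writing $\hat m_j-m_j=\delta_j$, we have
\begin{equation*}
\hat m_{\bm H}-m = \frac{m_2\delta_1-m_1\delta_2}{m_1^2+m_2^2}+O(\delta_1^2+\delta_2^2).
\end{equation*}
By Proposition~\ref{prop1}, $\mathbb{E}(\delta_j^2)=O\{(h^2+\lambda)^2+(nh^k)^{-1}\}$, which is negligible relative to the leading terms. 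To make the remainder's expectation rigorous, we use that $\hat m_{j,\bm H}\in[-1,1]$ is bounded. We also need a truncation/large-deviation argument on the event where $(\hat m_1,\hat m_2)$ is far from $(m_1,m_2)$; this event has small probability by the consistency implicit in (A1)--(A6). This step, controlling the nonlinear remainder in expectation rather than in probability, is the main technical obstacle. If a full moment bound is unavailable, I would state the result for the asymptotic (linearized) bias and variance, as is customary.

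Next, we re-express the coefficients through the error structure. Under model~\eqref{eq:model} with $\mathbb{E}[\sin\varepsilon\mid\cdot]=0$, we get $m_1=\ell\sin m$ and $m_2=\ell\cos m$, so $m_1^2+m_2^2=\ell^2$. The linear term then becomes $\ell^{-1}(\cos m\,\delta_1-\sin m\,\delta_2)$.

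\textbf{Bias.} Substitute the bias of Proposition~\ref{prop1}. For the continuous part, we need
\begin{equation*}
\cos m\,\nabla m_1-\sin m\,\nabla m_2=\ell\nabla m,\qquad \cos m\,\operatorname{tr}\nabla^2 m_1-\sin m\,\operatorname{tr}\nabla^2 m_2=\ell\operatorname{tr}\nabla^2 m+2\nabla\ell^\top\nabla m.
\end{equation*}
Both follow by differentiating $m_1=\ell\sin m$ and $m_2=\ell\cos m$ directly. Combining the $\nabla f^\top\nabla m/f$ term with $\nabla\ell^\top\nabla m/\ell$ then yields $\nabla(\ell f)^\top\nabla m/(\ell f)$. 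For the categorical part, we compute
\begin{equation*}
\cos m(\bm z)[m_1(\tilde{\bm z})-m_1(\bm z)]-\sin m(\bm z)[m_2(\tilde{\bm z})-m_2(\bm z)]=\ell(\tilde{\bm z})\sin[m(\tilde{\bm z})-m(\bm z)],
\end{equation*}
since the $\bm z$-terms cancel, which gives the stated $\lambda$ term.

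\textbf{Variance.} The linear term has variance
\begin{equation*}
\ell^{-2}\,\frac{R(K)\,(\cos^2 m\,s_1^2+\sin^2 m\,s_2^2-2\sin m\cos m\,c)}{nh^kf}.
\end{equation*}
The numerator equals $\Var(\cos m\,\sin\Theta-\sin m\cos\Theta\mid\cdot)=\Var(\sin(\Theta-m)\mid\cdot)=\sigma_1^2$. The cross terms with the remainder are $o\{(nh^k)^{-1}\}$ by Cauchy--Schwarz, which completes the plan.
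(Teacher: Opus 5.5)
Your proposal is correct and follows essentially the same route as the paper's proof. Both use a delta-method linearization of $\operatorname{atan2}$ at $(m_1,m_2)$, substitute the component biases and (co)variances from Proposition~\ref{prop1}, apply the same cancellation identity to the Hamming-neighbour term, and obtain the variance from the same quadratic form $\nabla g^\top \Var(\hat{\bm m}_{\bm H})\nabla g$; your identification of the numerator as $\Var[\sin(\Theta-m)\mid\bm X=\bm x,\bm Z=\bm z]$ is a shorter version of the paper's simplification via $s_1^2$, $s_2^2$, $c$, and your explicit derivative identities fill in a step the paper only asserts. The remainder issue you flag is handled no more carefully in the paper, which takes expectations of an $o_p$ remainder, and for the bias it is easy: if $\hat m_{\bm H}-m$ is read as the wrapped angular difference, then $|R|\le C\|\delta\|^2$ holds globally (Taylor on $\|\delta\|\le \ell/2$, and $\pi+\|\delta\|/\ell\le (4\pi+2)\|\delta\|^2/\ell^2$ outside that ball), so $\mathbb{E}|R|=O(\mathbb{E}\|\delta\|^2)=o(h^2+\lambda)$ without any truncation argument.
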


\begin{remark}
The asymptotic bias in Theorem~\ref{teo1} decomposes into three components: two terms of order $O(h^2)$ associated with smoothing over the continuous covariates and a term of order $O(\lambda)$ induced by smoothing across categorical levels. The $O(h^2)$ contributions coincide with the bias structure established for multivariate kernel regression with circular responses \citep{meilan2021nonparametricTEST}. Specifically, the term involving $\nabla_{\bm x}(\ell f)^\top \nabla_{\bm x} m$ arises from the propagation of the smoothing error of the Cartesian components through the nonlinear $\operatorname{atan2}$ map and reflects the interaction between the local design
quantity $(\ell f)$ and the local variation of the regression function in the continuous covariates (as measured by
$\nabla_{\bm x} m$). The term involving $\operatorname{tr}(\nabla_{\bm x}^2 m)$ corresponds to the standard curvature contribution. The $O(\lambda)$ component is specific to the smoothing of categorical covariates. It aggregates directional discrepancies among categories within Hamming distance one of $\bm z$ and vanishes when no categorical covariates are smoothed. The asymptotic variance is of order $1/(n h^k)$ and depends on the local design density $f(\bm x,\bm z)$ and the effective noise level $\sigma_1^2(\bm x,\bm z)$, scaled by the factor $1/\ell(\bm x,\bm z)^2$. Consequently, larger circular residual dispersion (larger $\sigma_1^2$) or weaker local alignment (smaller $\ell$) increases the variability of the estimator $\hat m_{\bm H}(\bm x,\bm z)$.
%The asymptotic bias in Theorem~\ref{teo1} has three components: two $O(h^2)$ contributions driven by the continuous covariates (the $\nabla_{\bm x}(\ell f)^\top\nabla_{\bm x} m$ and curvature terms) and an $O(\lambda)$ contribution induced by categorical smoothing. The two $O(h^2)$ terms match the bias structure known for multivariate kernel regression with circular responses \citep{meilan2021nonparametricTEST}. In particular, the term involving $\nabla_{\bm x}(\ell f)^\top \nabla_{\bm x} m$ arises from propagating the smoothing error of the Cartesian components through the nonlinear $\operatorname{atan2}$ map and reflects the interaction between the local design $(\ell f)$ and the slope of $m$, whereas the $\operatorname{tr}(\nabla_{\bm x}^2 m)$ term is the usual curvature contribution. The $O(\lambda)$ term is specific to smoothing over categorical levels: it aggregates directional differences between ``neighboring'' categories (in Hamming distance) and vanishes when no categorical covariates are smoothed.
%The variance decays at rate $1/(n h^k)$ and depends on the local design density $f(\bm x,\bm z)$ and on the effective noise level $\sigma_1^2(\bm x,\bm z)$, amplified by the factor $1/\ell(\bm x,\bm z)^2$. Thus, higher circular residual dispersion (large $\sigma_1^2$) or weaker local alignment (small $\ell$) lead to higher uncertainty in $\hat m_{\bm H}(\bm x,\bm z)$.
\end{remark}

The practical performance of $\hat m_{\bm H}(\bm x,\bm z)$ depends critically on the bandwidth vector
$\bm H=(\bm h,\boldsymbol\lambda)$, where $\bm h=(h_1,\ldots,h_k)$ governs smoothing over the continuous predictors
and $\boldsymbol\lambda=(\lambda_1,\ldots,\lambda_p)$ controls smoothing across the categorical factors.
If the bandwidths are too small, the estimator undersmooths and exhibits high variability; if they are too large, it oversmooths and can mask sensory-condition effects. This bias-variance trade-off motivates the use of data-driven bandwidth selectors designed to yield stable, practically reliable choices.
In principle, plug-in bandwidth selectors can be obtained by minimizing asymptotic approximations of the mean squared error based on the leading bias and variance terms in Theorem~\ref{teo1}. This minimization can be performed either locally or globally by integrating the approximation over the covariate space. In our setting, however, the resulting criteria involve several unknown and nonlinearly coupled quantities (e.g., $m$, $\ell$, $f$, and derivatives of $\ell f$ and $m$), so closed-form minimizers are unavailable, and the corresponding numerical optimization would require estimating multiple nuisance functions with sufficient accuracy.
For these reasons, and to keep the methodology practical for real-data analyses, we instead consider three practical selectors: a leave-one-out cross-validation criterion, a bootstrap-based procedure tailored
to the circular cosine loss, and a simple rule-of-thumb selector. Their performance is evaluated in Sections~\ref{sec:simulations} and~\ref{sec:results}.

\section{Bandwidth selection}
\label{sec:bw}

In this section, we describe the cross-validation, bootstrap, and rule-of-thumb selectors in detail.

\subsection{Cross-validation}
A common approach for selecting the bandwidth vector is to minimize the leave-one-out circular prediction loss
\[
\operatorname{CV}(\bm{H}) = \frac{1}{n} \sum_{i=1}^n \left\{1 - \cos\left[ \Theta_i - \hat{m}^{(-i)}_{\bm{H}}(\bm{X}_i, \bm{Z}_i) \right] \right\},
\]
where $\hat{m}^{(-i)}_{\bm{H}}(\bm{X}_i, \bm{Z}_i)$ denotes the estimator computed without the $i$th observation. This criterion is a circular analogue of the classical leave-one-out mean squared error and aims to approximate the expected cosine prediction loss for a new observation.
We denote by $\bm{H}_{\text{CV}}$ a minimizer of $\operatorname{CV}(\bm{H})$.

\subsection{Bootstrap-based bandwidth selection}
\label{sec:bw_boot}

We propose a residual bootstrap bandwidth selector tailored to circular regression with mixed covariates.
As in the rest of the paper, discrepancy is measured using the usual circular cosine loss \citep[see][]{kim2017multivariate}.
Our goal is to choose the bandwidth vector $\bm H$ to perform well in a \emph{global} sense, averaging this loss over the covariate distribution.
Since that distribution is unknown, we approximate the corresponding risk by the empirical average over the observed design points $(\bm X_i,\bm Z_i)$.
Algorithm~\ref{alg_3} summarizes the steps.

\begin{algorithm}[!htb]
\caption{Bootstrap-based bandwidth selection}
\label{alg_3}
\begin{algorithmic}
\State \textbf{Step 1.} Compute circular residuals using $\bm H_0$:
\[
\hat\varepsilon_i=\big[\Theta_i-\hat m_{\bm H_0}(\bm X_i,\bm Z_i)\big]\ (\operatorname{mod}\, 2\pi),
\]
and center them:
\[
\tilde\varepsilon_i=\big[\hat\varepsilon_i-\bar\varepsilon\big]\ (\operatorname{mod}\, 2\pi), \quad 
\bar\varepsilon=\operatorname{atan2}\!\left(\frac{1}{n}\sum_{i=1}^n \sin(\hat\varepsilon_i),\,
\frac{1}{n}\sum_{i=1}^n \cos(\hat\varepsilon_i)\right)
\]

\State \textbf{Step 2.} For $b=1,\ldots,B$:
\Statex \hspace{1.5em} (i) Sample $\tilde\varepsilon_1^{*(b)},\ldots,\tilde\varepsilon_n^{*(b)}$ with replacement from
$\{\tilde\varepsilon_1,\ldots,\tilde\varepsilon_n\}$ and generate pseudo-responses
\[
\Theta_i^{*(b)}=\big[\hat m_{\bm H_1}(\bm X_i,\bm Z_i)+\tilde\varepsilon_i^{*(b)}\big]\ (\operatorname{mod}\, 2\pi).
\]
\Statex \hspace{1.5em} (ii) Compute $\hat m_{\bm H}^{*(b)}$ from $\{(\bm X_i,\bm Z_i,\Theta_i^{*(b)})\}_{i=1}^n$
using the candidate bandwidth $\bm H$.

\State \textbf{Step 3.} Choose $\bm H_{\text{boot}}$ as a minimizer of
\begin{equation}
  \frac{1}{B} \sum_{b=1}^B\bigg(\frac{1}{n} \sum_{i=1}^n \left\{ 1 - \cos\left[  \hat{m}_{\bm{H}_1}(\bm{X}_i, \bm{Z}_i) -\hat{m}^{*(b)}_{\bm{H}}(\bm{X}_i, \bm{Z}_i)\right] \right\}\bigg).
\label{eq:boot}
\end{equation}
\end{algorithmic}
\end{algorithm}

The procedure uses two pilot bandwidth vectors, \(\bm{H}_{0} = (\bm{h}_{0} \boldsymbol{\lambda}_{0})\) for residual computation and \(\bm{H}_{1} =(\bm{h}_{1}, \boldsymbol{\lambda}_{1})\) for constructing a smooth reference fit that plays the role of the unknown regression function when generating pseudo-responses.

Centering the residuals via $\operatorname{atan2}$ yields resampled perturbations that are interpretable
signed angular deviations (effectively in $(-\pi,\pi]$), consistent with the signed error convention adopted throughout.
The criterion~\eqref{eq:boot} approximates the global cosine-loss prediction risk by replacing the unknown regression function
with the pilot fit $\hat m_{\bm H_1}$ and averaging the cosine loss over the observed design points.

In all numerical studies in this paper we set $\bm H_0=\bm H_1=\bm H_{\text{CV}}$, yielding a fully data-driven
bootstrap selector without introducing additional tuning parameters.

%This bootstrap criterion mimics the circular prediction risk and is particularly well-suited for circular regression with mixed-type predictors. 
%To the best of our knowledge, this constitutes the first bootstrap bandwidth selection method developed for this context. 
%The corresponding bootstrap-based bandwidth is denoted by \( \bm{H}_{\text{boot}} \).

\subsection{Rule-of-thumb}
\label{sec:bw_rot}

In addition to cross-validation and bootstrap-based selection, we consider a computationally efficient
rule-of-thumb (RoT) choice for the bandwidth vector, in the spirit of classical plug-in heuristics for
kernel regression \citep[e.g.,][]{FanGijbels1996,Hardle1990}. The aim is to provide a simple, fully
explicit default that depends only on the sample size and marginal covariate scales, and that is
straightforward to implement and to reproduce.

For each continuous covariate $X_j$, we set the marginal RoT bandwidth to
\[
h_{j,\text{RoT}} = c_h\,\hat{\sigma}_j\, n^{-1/5},
\]
where $\hat{\sigma}_j$ is the sample standard deviation of $X_j$ and $c_h>0$ is a scaling constant.
The $n^{-1/5}$ factor matches the classical one-dimensional normal-reference scaling and provides a
stable marginal smoothing level. Importantly, this construction is intentionally \emph{modular}, that is,  each
$h_{j,\text{RoT}}$ depends only on the marginal variability of $X_j$ and on $n$, so the same value is
used whether $X_j$ appears alone or together with other predictors. While such marginal calibration
does not explicitly adapt to interactions among covariates, it provides a transparent default that is
easy to interpret and implement. In multivariate settings, this RoT choice is therefore not intended
to be asymptotically optimal, but rather to offer a robust baseline with a clear coordinate-wise meaning.

For each categorical covariate $Z_l$, we use the analogous heuristic
\[
\lambda_{l,\text{RoT}} = c_\lambda\,L_l^{-\gamma}\, n^{-1/5},
\]
where $L_l$ is the number of levels of $Z_l$, $c_\lambda>0$ is a scaling constant, and $\gamma>0$
controls how the amount of categorical smoothing decreases as the number of categories increases.
This scaling is adopted as a pragmatic convention that yields a transparent default across experiments.
While the asymptotic expansion in Theorem~\ref{teo1} indicates that the effect of the categorical smoothing is captured
through an $O(\lambda)$ contribution, the finite-sample behavior of categorical smoothing depends strongly on
the chosen categorical kernel and on the distribution and balance of the levels of $Z_l$. For this
reason, we do not pursue a theoretically optimized rate for $\lambda_l$ within the RoT family.
Instead, we use the above simple scaling as a stable baseline, and rely on the data-driven selectors
(CV and bootstrap) when a more adaptive choice is desired. We denote the resulting bandwidth vector by
$\bm H_{\text{RoT}}$.

In all numerical studies, we fix $\gamma=1$ and use the default NW RoT constants $c_h=1.06$ and $c_\lambda=1.0$, so that
$\bm H_{\text{RoT}}$ is fully specified and directly reproducible. The value $c_h=1.06$ corresponds to the usual normal-reference
constant for one-dimensional Gaussian-kernel smoothing, while the categorical component uses the simple scaling
$\lambda_{l,\text{RoT}} \propto L_l^{-1} n^{-1/5}$. We keep these constants fixed throughout
Sections~\ref{sec:simulations}--\ref{sec:results} for comparability; performance-sensitive
choices are handled by the CV and bootstrap selectors. 

\section{Bootstrap algorithm for simultaneous confidence bands}
\label{sec:bands}

To assess uncertainty in the estimated regression function at a global scale, we propose a bootstrap procedure
to construct simultaneous confidence bands for the circular regression function with target coverage
$1-\alpha$. We focus on the setting with one continuous predictor $X$ and one categorical factor $Z$, for which the circular regression function is denoted by $m(x,z)$. The bands
are reported on a grid of predictor values $x_1,\ldots,x_G$ and separately for each level $z$ of $Z$, while
respecting the circular scale.

Unlike classical approaches based on pointwise intervals or conservative Bonferroni corrections, our
method uses a data-driven calibration to balance coverage and tightness in the circular setting. The
algorithm has three stages: (i) construction of pointwise bootstrap bands; (ii) computation of a
Bonferroni baseline; and (iii) iterative calibration of the simultaneous level.
 A concise summary is
given in Algorithm~\ref{alg_3_concise}.

\begin{algorithm}[!htb]
\caption{Bootstrap for simultaneous confidence bands}
\label{alg_3_concise}
\begin{algorithmic}
\State \textbf{Step 1.}
Fix a grid \(x_1,\ldots,x_G\) and compute \(\hat m_{\bm H}(x_j,z)\) for each level \(z\) using \eqref{eq:est}.

\State \textbf{Step 2.}
Compute centered residuals \(\tilde{\varepsilon}_i\) (as in Algorithm~\ref{alg_3}). For \(b=1,\ldots,B\):
\Statex \hspace{1.5em} (i) Generate pseudo-responses
\[
\Theta_i^{*(b)}=\big[\hat m_{\bm H}(X_i,Z_i)+\tilde\varepsilon_i^{*(b)}\big]\ (\operatorname{mod}\, 2\pi),
\]
where \(\tilde\varepsilon_i^{*(b)}\) are sampled with replacement from
\(\{\tilde\varepsilon_1,\ldots,\tilde\varepsilon_n\}\).
\Statex \hspace{1.5em} (ii) Fit \(\hat m_{\bm H}^{*(b)}(x_j,z)\) on the pseudo-sample using the bandwidth \(\bm H\).

\State \textbf{Step 3.}
For \(j=1,\ldots,G\) and \(b=1,\ldots,B\), compute
\[
\Delta_j^{(b)}=
\big[\hat m_{\bm H}^{*(b)}(x_j,z)-\hat m_{\bm H}(x_j,z)+\pi\big]\ (\operatorname{mod}\, 2\pi)-\pi,
\]
so that \(\Delta_j^{(b)}\in(-\pi,\pi]\).

\State \textbf{Step 4.}
Set \(\alpha_{\text{low}}=\alpha/G\), \(\alpha_{\text{high}}=\alpha\), and choose a tolerance \(\delta>0\).

\State \textbf{Step 5.}
For any \(\alpha'\in[\alpha_{\text{low}},\alpha_{\text{high}}]\), let
\(q_{j,\alpha'/2}\) and \(q_{j,1-\alpha'/2}\) be the empirical quantiles of
\(\{\Delta_j^{(b)}\}_{b=1}^B\), and define
\[
\widehat P_{\text{in}}(\alpha')
=\frac{1}{B}\sum_{b=1}^B
\mathbb{I}\!\left\{
\Delta_j^{(b)}\in\big[q_{j,\alpha'/2},\,q_{j,1-\alpha'/2}\big]
\ \ \forall\, j=1,\ldots,G
\right\}.
\]
\Statex \hspace{1.5em} (i) Compute \(\widehat P_{\text{in}}(\alpha_{\text{low}})\) and \(\widehat P_{\text{in}}(\alpha_{\text{high}})\).
\Statex \hspace{1.5em} (ii) \emph{If} \(\widehat P_{\text{in}}(\alpha_{\text{low}}) < 1-\alpha\), \emph{stop} and set \(\alpha'=\alpha_{\text{low}}\).
\Statex \hspace{1.5em} (iii) \emph{If} \(\widehat P_{\text{in}}(\alpha_{\text{high}}) \ge 1-\alpha\), \emph{stop} and set \(\alpha'=\alpha_{\text{high}}\).
\Statex \hspace{1.5em} (iv) \emph{Otherwise}, iterate until
$|\widehat P_{\text{in}}(\alpha')-(1-\alpha)|<\delta$
(or a maximum number of iterations is reached):
\Statex \hspace{3.0em} 1. Set $\alpha'=(\alpha_{\text{low}}+\alpha_{\text{high}})/2$.
\Statex \hspace{3.0em} 2. Compute $\widehat P_{\text{in}}(\alpha')$.
\Statex \hspace{3.0em} 3. If $\widehat P_{\text{in}}(\alpha') \ge 1-\alpha$, set $\alpha_{\text{low}}=\alpha'$;
\Statex \hspace{3.6em} otherwise set $\alpha_{\text{high}}=\alpha'$.

\State \textbf{Step 6.}
With the final \(\alpha'\), compute \(q_{j,\alpha'/2}\) and \(q_{j,1-\alpha'/2}\) for all \(j\) and define
\[
\mathsf{Band}_j(z;\alpha')=
\Big[\hat m_{\bm H}(x_j,z)+q_{j,\alpha'/2},\ \hat m_{\bm H}(x_j,z)+q_{j,1-\alpha'/2}\Big]\ (\operatorname{mod}\, 2\pi).
\]
\end{algorithmic}
\end{algorithm}

Step~5 in Algorithm~\ref{alg_3_concise} already covers two boundary cases that avoid unnecessary calibration:
(i) if the Bonferroni choice \(\alpha'=\alpha/G\) attains the target simultaneous coverage, the algorithm stops and returns that band;
(ii) if even the narrowest candidate in the search range (\(\alpha'=\alpha\)) attains the target, the algorithm also stops and returns \(\alpha'=\alpha\).
In practice, we recommend reporting the final band either as a ribbon on an unwrapped scale or as paired lower/upper angular bounds at each grid point, to avoid misleading visualizations associated with the periodicity of the circular scale.

Algorithm~\ref{alg_3_concise} is presented for the case of one continuous predictor $X$ and one categorical factor
$Z$. Simultaneity is enforced over the evaluation grid $x_1,\ldots,x_G$, and calibration is performed separately
for each level $z$ of $Z$. Extensions to multiple continuous predictors and/or to joint simultaneity across the
levels of $Z$ can be developed, but would require a different construction (e.g., calibration over
multivariate grids together with explicit multiplicity control across categories) and are beyond the scope of
this work.

\section{Simulation study}
\label{sec:simulations}

To assess the finite-sample performance of the bandwidth selectors introduced in
Section~\ref{sec:bw}, we conducted a simulation study with synthetic circular data for which the
regression function is known. We compared the cross-validation, bootstrap, and rule-of-thumb selectors across a range of scenarios using a two-phase design that separates the
construction of a Monte Carlo risk benchmark from the evaluation of the competing selectors.

In Phase~I, we generated $N_1=100$ independent samples of size $n$. Following the circular average
squared error (CASE) criterion commonly used to assess circular regression fits
\citep{meilan2021trendSURF,kim2017multivariate}, defined as
\begin{equation}
\operatorname{CASE}(\bm H)
=\frac{1}{n}\sum_{i=1}^{n}\left\{1-\cos\!\Big[m(\bm X_i,\bm Z_i)
-\hat m_{\bm H}(\bm X_i,\bm Z_i)\Big]\right\},
\label{eq:CASE}
\end{equation}
for a candidate bandwidth $\bm H$ and the $j$th simulated sample
$\{(\bm X_i^{(j)},\bm Z_i^{(j)},\Theta_i^{(j)})\}_{i=1}^n$, we computed
$\operatorname{CASE}^{(j)}(\bm H)$ by evaluating \eqref{eq:CASE} on that sample, $j=1,\ldots,N_1$.
We then averaged over the $N_1$ replications to obtain the Monte Carlo risk surface
\begin{equation}
\overline{\operatorname{CASE}}(\bm H)
=\frac{1}{N_1}\sum_{j=1}^{N_1}\operatorname{CASE}^{(j)}(\bm H),
\label{eq:mCASE}
\end{equation}
which provides a benchmark approximation to the global circular prediction risk of the estimates
under the data-generating design. Finally, we define $\bm H_{\overline{\operatorname{CASE}}}$ as the minimizer of
\eqref{eq:mCASE} over the considered bandwidth grid, and use it as a Monte Carlo benchmark
in the comparisons reported below.
    
In Phase~II, we generated \(N_2=500\) additional independent samples (not used in Phase~I). For each evaluation
sample \(j=1,\ldots,N_2\), we computed the data-driven selectors \(\bm H_{\text{CV}}^{(j)}, \bm H_{\text{boot}}^{(j)}\), and \(\bm H_{\text{RoT}}^{(j)}\). To assess their performance, we scored each selected bandwidth using the Phase~I risk surface, that is, we recorded
\(\overline{\operatorname{CASE}}\!\big(\bm H_{\text{method}}^{(j)}\big)\) for each method.
This two-phase separation avoids optimistic evaluation due to data reuse.

Each sample consisted of \(n\) i.i.d.\ observations \((X_i,Z_i,\Theta_i)\), where
\(X\sim\mathcal{U}(0,1)\), \(Z\in\{A,B,C\}\) was sampled uniformly, and
\(\Theta=[m(X,Z)+\varepsilon]\;(\operatorname{mod}\, 2\pi)\), with \(\varepsilon\sim\text{von Mises}(0,\kappa)\).
We considered \(n\in\{100,200,500\}\) and two concentration levels: \(\kappa=3\) (moderate noise) and
\(\kappa=10\) (low noise). Two regression functions were used (see also Figure~\ref{fig:regs}):

\noindent
\begin{minipage}[t]{0.48\textwidth}
\[
\text{R1.\ } m(x, z) = 
\begin{cases} 
2\pi x, & z = A, \\ 
2\pi(1 - x), & z = B, \\ 
\pi, & z = C
\end{cases}
\]
\end{minipage}\hfill
\begin{minipage}[t]{0.48\textwidth}
\[
\text{R2.\ } m(x, z) = 
\begin{cases} 
\pi x^2, & z = A, \\ 
\pi(1 - x^2), & z = B, \\ 
\frac{3\pi}{2}|\sin(\pi x)|, & z = C
\end{cases}
\]
\end{minipage}

\begin{figure}[!htb]
\centering
\begin{minipage}{0.48\textwidth}
\centering
\includegraphics[width=\linewidth]{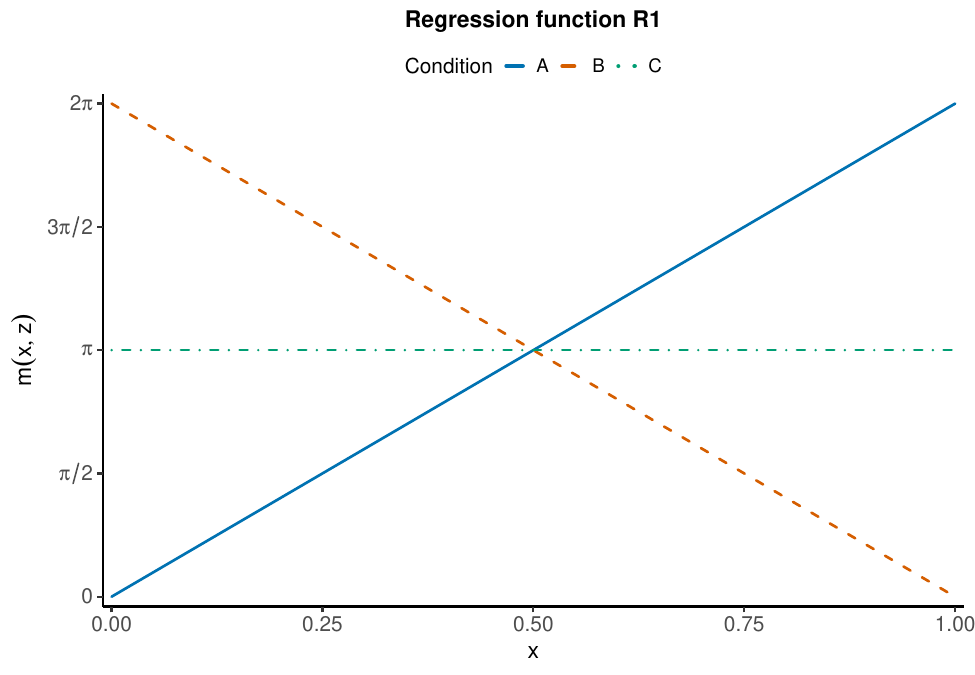}
\end{minipage}
\hfill
\begin{minipage}{0.48\textwidth}
\centering
\includegraphics[width=\linewidth]{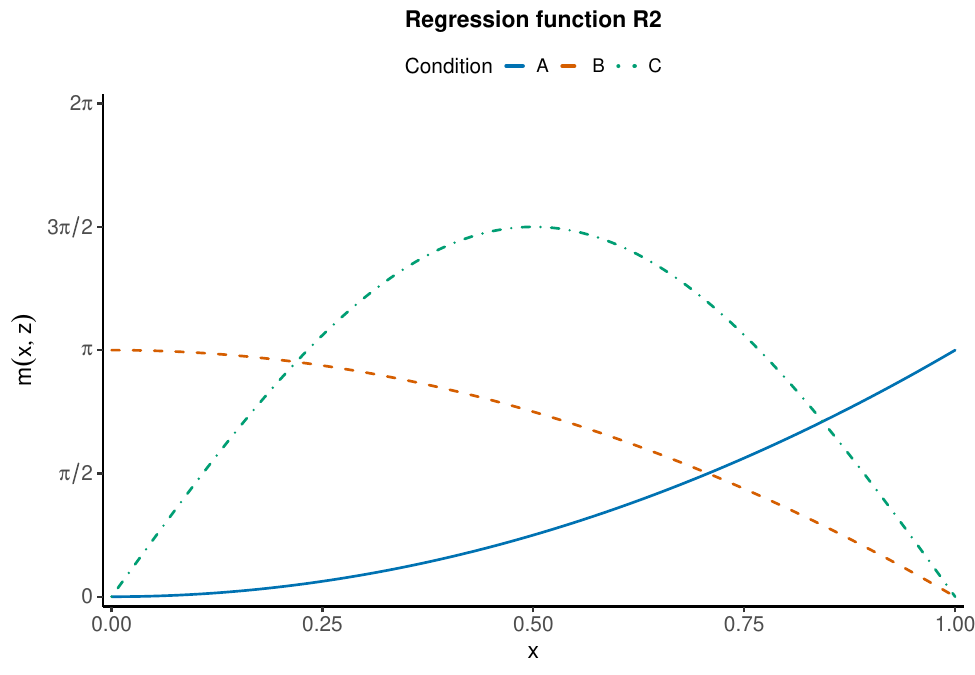}
\end{minipage}
\caption{Circular regression functions used in the simulation study: the left panel corresponds to R1 and the right panel to R2. Each line corresponds to a different value of the categorical variable \( Z \in \{A, B, C\} \).}
\label{fig:regs}
\end{figure}

The circular regression estimator given in equation \eqref{eq:est} was computed using product weights
\[
w^i_{\bm{H}}({x}, z) = \frac{K[(x - X_{i})/h]  L(z, Z_i;\lambda)}{\sum_{j=1}^n K[(x - X_{j})/h]   L(z, Z_j;\lambda)},
\]
where $\bm{H}=(h,\lambda)$. We used the Gaussian kernel \(K(u) = (2\pi)^{-1/2}\exp(-u^2/2)\) for the continuous predictor,
and the Aitchison--Aitken kernel for the categorical predictor,
\begin{equation}
L(z, z_i; \lambda) =
\begin{cases}
1 - \lambda, & \text{if } z = z_i, \\
\displaystyle\frac{\lambda}{c - 1}, & \text{if } z \neq z_i,
\end{cases}
\label{eq:AA}
\end{equation}
where \( c \) is the number of categories and \( \lambda \in [0,1] \) controls the weight assigned to mismatches.

We report two complementary summaries. First, estimation accuracy is measured by the Phase~I benchmark risk
\(\overline{\operatorname{CASE}}\!\big(\bm H_{\text{method}}^{(j)}\big)\), summarized over the \(N_2\) evaluation replications.
Second, bandwidth efficiency is assessed via the distribution of the ratios
\(\|\bm H_{\text{method}}^{(j)}\|/\|\bm H_{\overline{\operatorname{CASE}}}\|\), where \(\|\cdot\|\) denotes the Euclidean norm.

Tables~\ref{tab:sim_results_r1} and~\ref{tab:sim_results_r2} report the mean and variance (in parentheses) of
\(\overline{\operatorname{CASE}}\!\big(\bm H_{\text{method}}^{(j)}\big)\) over \(j=1,\ldots,N_2\), for R1 and R2, respectively.
Overall, the bootstrap selector yielded the smallest average error across scenarios, with the most noticeable gains
for smaller \(n\) and for the more complex regression function R2. For example, in the most demanding configuration
(\(n=100\), \(\kappa=3\)) for R2, the bootstrap selector reduced the mean benchmark risk from 0.135 (CV) and 0.140 (RoT)
to 0.128, corresponding to reductions of about 5\% and 9\%, respectively.

% Table placeholders
\begin{table}[!htb]
\centering
\caption{
Mean and variance (in parentheses) of $\overline{\operatorname{CASE}}$ given in \eqref{eq:mCASE} for the regression function R1 and the bandwidth selectors. Each entry is computed over \( N_2 = 500 \) evaluation samples of size \( n \) and different \( \kappa \).
}
\label{tab:sim_results_r1}
\begin{tabular}{lccccc}
\toprule
$\kappa$ & $n$ & \( \bm{H}_{\text{CV}}\) & $\bm{H}_{\text{boot}}$& $\bm{H}_{\text{RoT}}$  & \( \bm{H}_{\overline{\operatorname{CASE}}} \) \\
\midrule
3  & 100 & 0.091 (0.0082) & 0.087 (0.0075) & 0.094 (0.0090) & 0.081 (0) \\
   & 200 & 0.071 (0.0044) & 0.068 (0.0039) & 0.074 (0.0051) & 0.063 (0) \\
   & 500 & 0.046 (0.0015) & 0.044 (0.0012) & 0.049 (0.0018) & 0.041 (0) \\
10 & 100 & 0.059 (0.0059) & 0.055 (0.0050) & 0.061 (0.0068) & 0.050 (0) \\
   & 200 & 0.038 (0.0022) & 0.035 (0.0018) & 0.041 (0.0027) & 0.031 (0) \\
   & 500 & 0.022 (0.0006) & 0.020 (0.0005) & 0.024 (0.0008) & 0.018 (0) \\
\bottomrule
\end{tabular}
%\caption*{\footnotesize\textit{Note.} Variance values in parentheses represent the variability of $\overline{\operatorname{CMISE}}$ across the \( N_2 = 500 \) replicated samples. For the oracle selector \( \bm{H}_{\overline{\operatorname{CASE}}} \), a fixed bandwidth is used in each scenario, leading to zero variance.}
\end{table}

\begin{table}[!htb]
\centering
\caption{
Mean and variance (in parentheses) of $\overline{\operatorname{CASE}}$ given in \eqref{eq:mCASE} for the regression function R2 and the bandwidth selectors. Each entry is computed over \( N_2 = 500 \) evaluation samples of size \( n \) and different \( \kappa \).
}
\label{tab:sim_results_r2}
\begin{tabular}{lccccc}
\toprule
$\kappa$ & $n$ & \( \bm{H}_{\text{CV}}\) & $\bm{H}_{\text{boot}}$& $\bm{H}_{\text{RoT}}$  & \( \bm{H}_{\overline{\operatorname{CASE}}} \) \\
\midrule
3  & 100 & 0.135 (0.0111) & 0.128 (0.0104) & 0.140 (0.0126) & 0.116 (0) \\
   & 200 & 0.106 (0.0063) & 0.099 (0.0056) & 0.112 (0.0074) & 0.089 (0) \\
   & 500 & 0.071 (0.0027) & 0.066 (0.0022) & 0.075 (0.0031) & 0.059 (0) \\
10 & 100 & 0.094 (0.0075) & 0.088 (0.0066) & 0.096 (0.0089) & 0.076 (0) \\
   & 200 & 0.064 (0.0031) & 0.058 (0.0025) & 0.066 (0.0037) & 0.049 (0) \\
   & 500 & 0.036 (0.0009) & 0.033 (0.0007) & 0.038 (0.0011) & 0.028 (0) \\
\bottomrule
\end{tabular}
%\caption*{\footnotesize\textit{Note.} Variance values in parentheses represent the variability of $\overline{\operatorname{CMISE}}$ across the \( N_2 = 500 \) replicated samples. For the oracle selector \( \bm{H}_{\overline{\operatorname{CASE}}} \), a fixed bandwidth is used in each scenario, leading to zero variance.}
\end{table}

Figures~\ref{fig:panel_mtrue1} and~\ref{fig:panel_mtrue2} complement the results by showing the distribution of
\(\|\bm{H}_{\text{method}}^{(j)}\| / \|\bm{H}_{\overline{\operatorname{CASE}}}\|\) across scenarios for R1 and R2, respectively.
The CV selector exhibited more dispersion, particularly for small \(n\), reflecting its sensitivity to sample variability.
In contrast, the bootstrap method was more tightly concentrated around $1$, indicating a favorable accuracy--stability trade-off.
The RoT selector, while computationally cheap, tended to oversmooth, as evidenced by median ratios above $1$ in all scenarios.

\begin{figure}[!htb]
  \centering
  \begin{subfigure}{0.32\textwidth}
    \includegraphics[width=\linewidth]{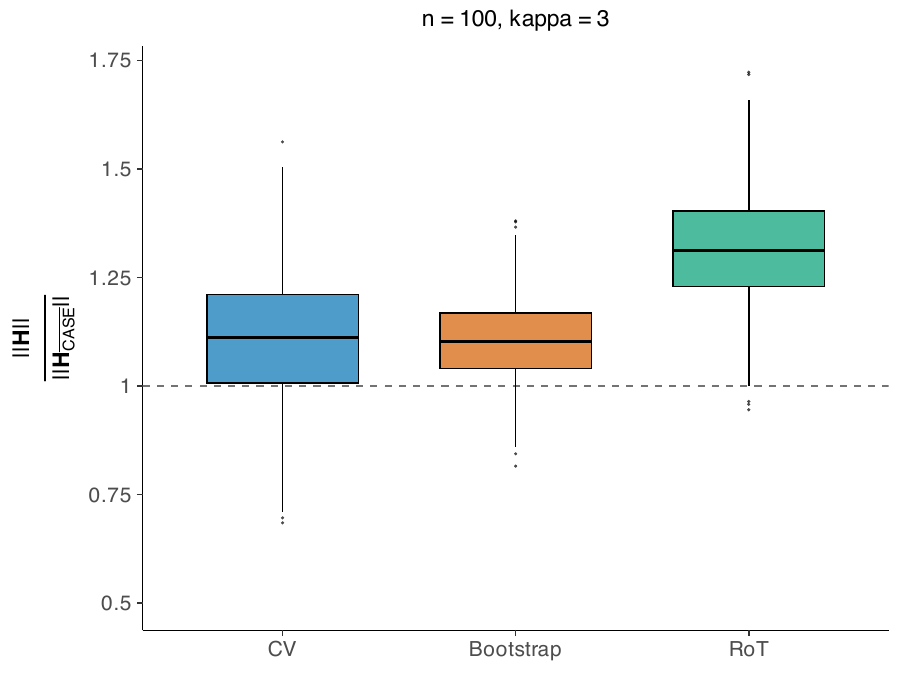}
  \end{subfigure}
  \hfill
  \begin{subfigure}{0.32\textwidth}
    \includegraphics[width=\linewidth]{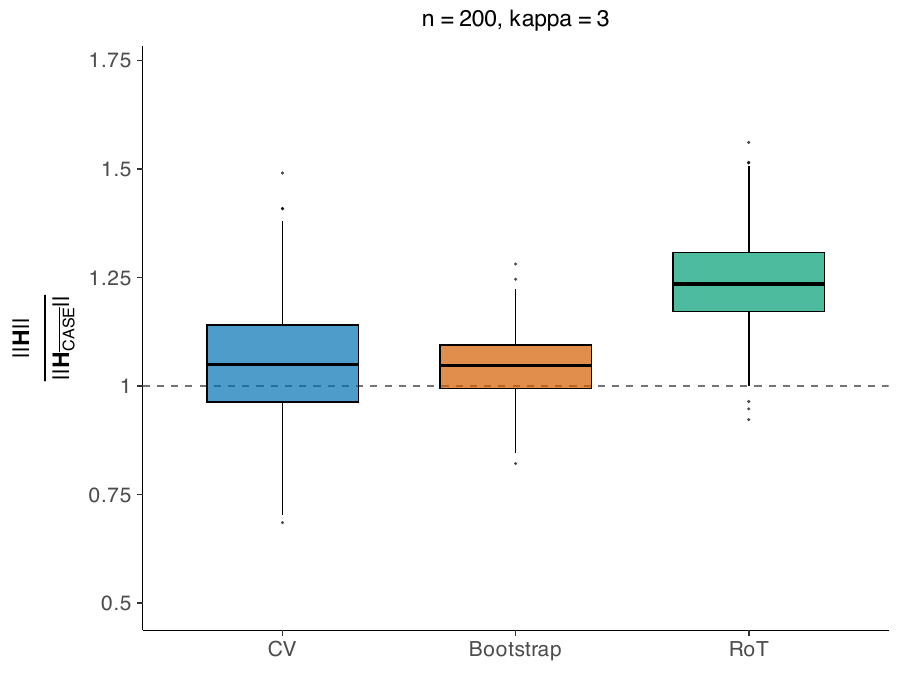}
     \end{subfigure}
  \hfill
  \begin{subfigure}{0.32\textwidth}
    \includegraphics[width=\linewidth]{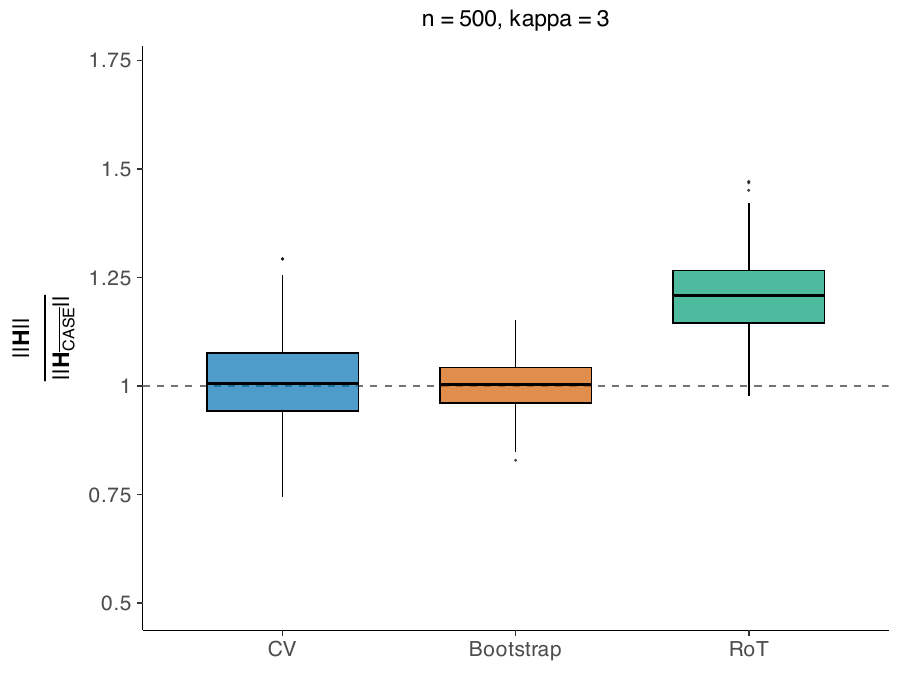}
  \end{subfigure}

  \vspace{1em}

  \begin{subfigure}{0.32\textwidth}
    \includegraphics[width=\linewidth]{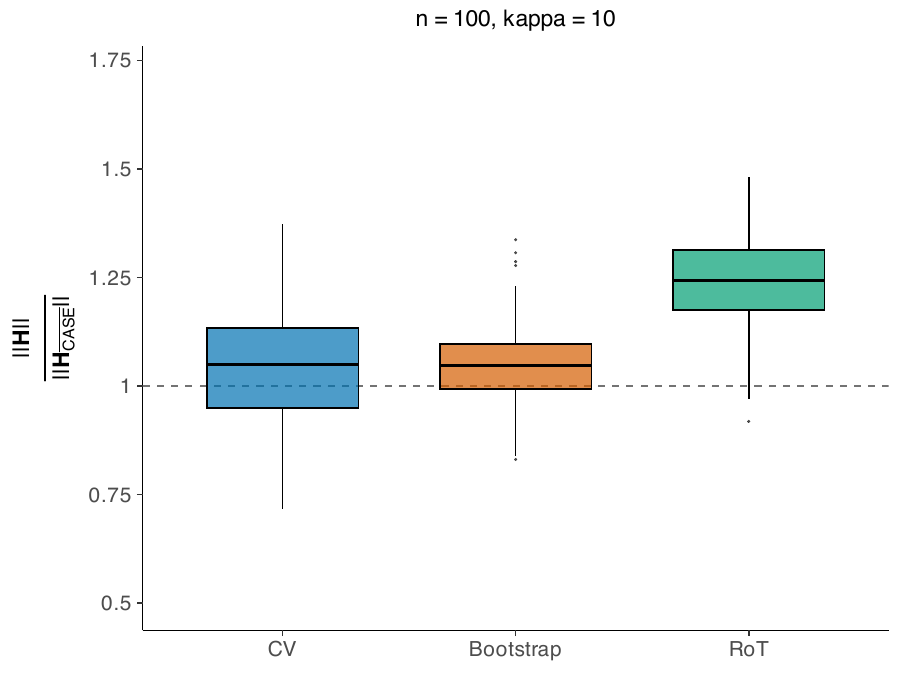}
  \end{subfigure}
  \hfill
  \begin{subfigure}{0.32\textwidth}
    \includegraphics[width=\linewidth]{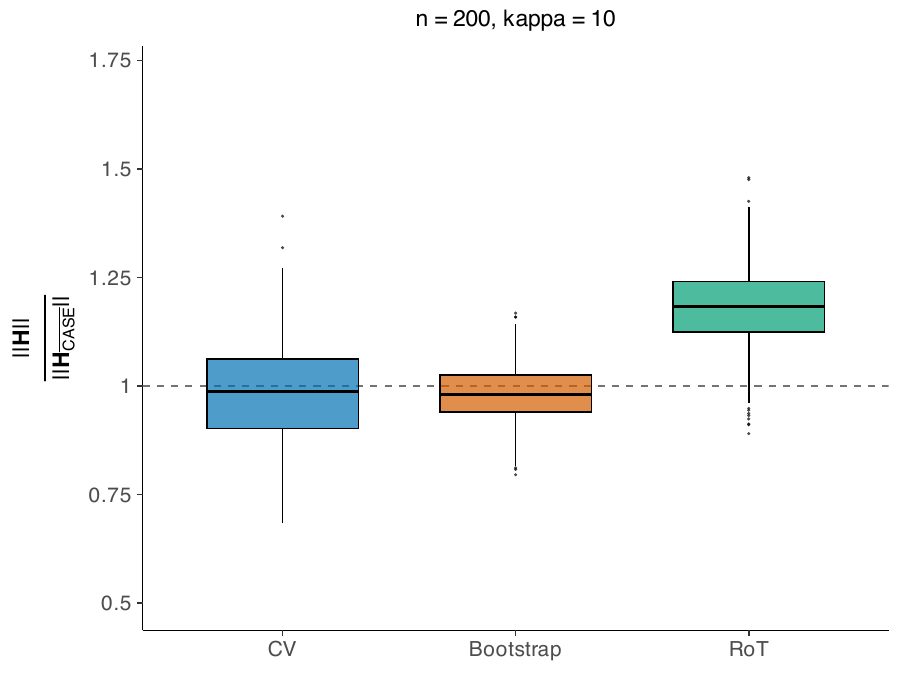}
  \end{subfigure}
  \hfill
  \begin{subfigure}{0.32\textwidth}
    \includegraphics[width=\linewidth]{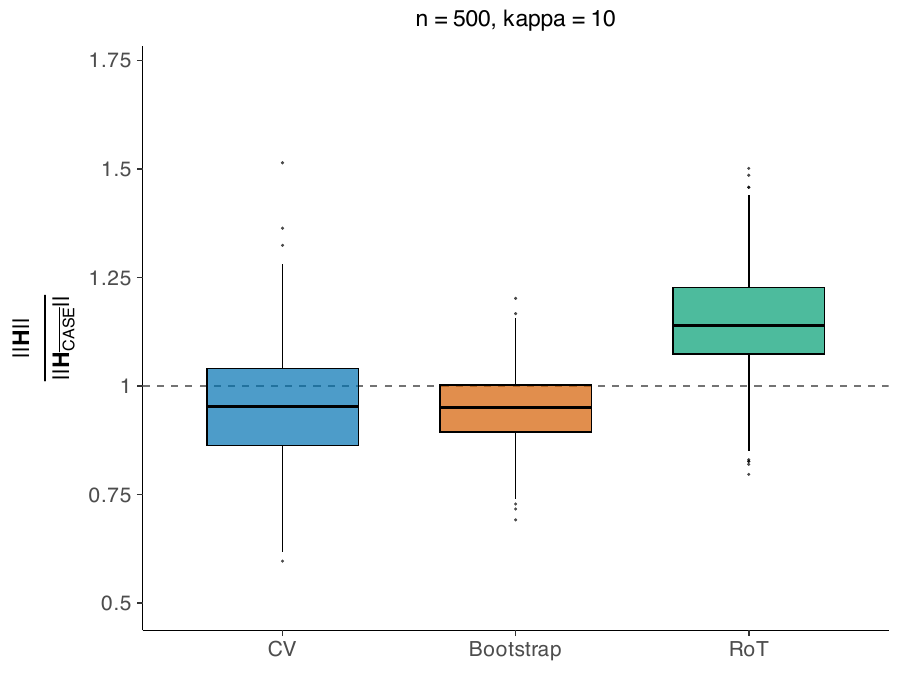}
  \end{subfigure}

  \caption{Distribution of $\|\bm{H}_{\text{method}}\| / \|\bm{H}_{\overline{\operatorname{CASE}}}\|$ for regression function R1.}
  \label{fig:panel_mtrue1}
\end{figure}

\begin{figure}[!htb]
  \centering
  \begin{subfigure}{0.32\textwidth}
    \includegraphics[width=\linewidth]{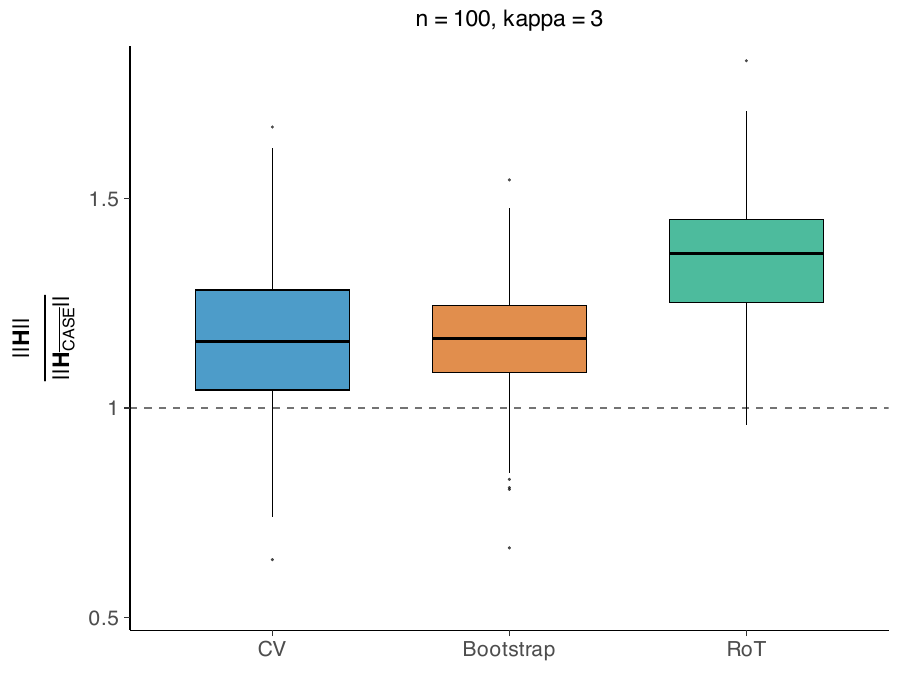}
  \end{subfigure}
  \hfill
  \begin{subfigure}{0.32\textwidth}
    \includegraphics[width=\linewidth]{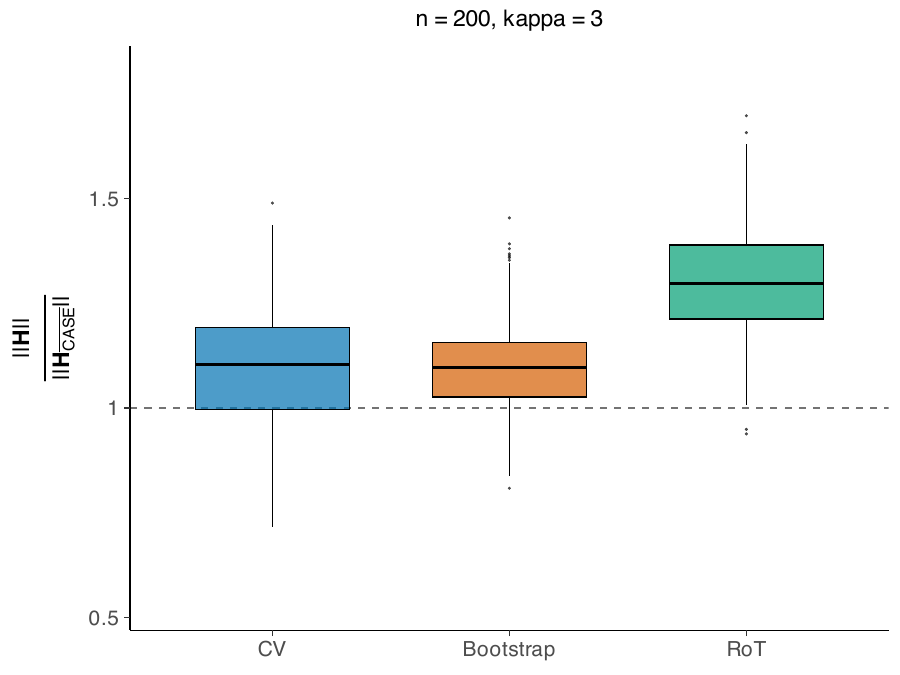}
  \end{subfigure}
  \hfill
  \begin{subfigure}{0.32\textwidth}
    \includegraphics[width=\linewidth]{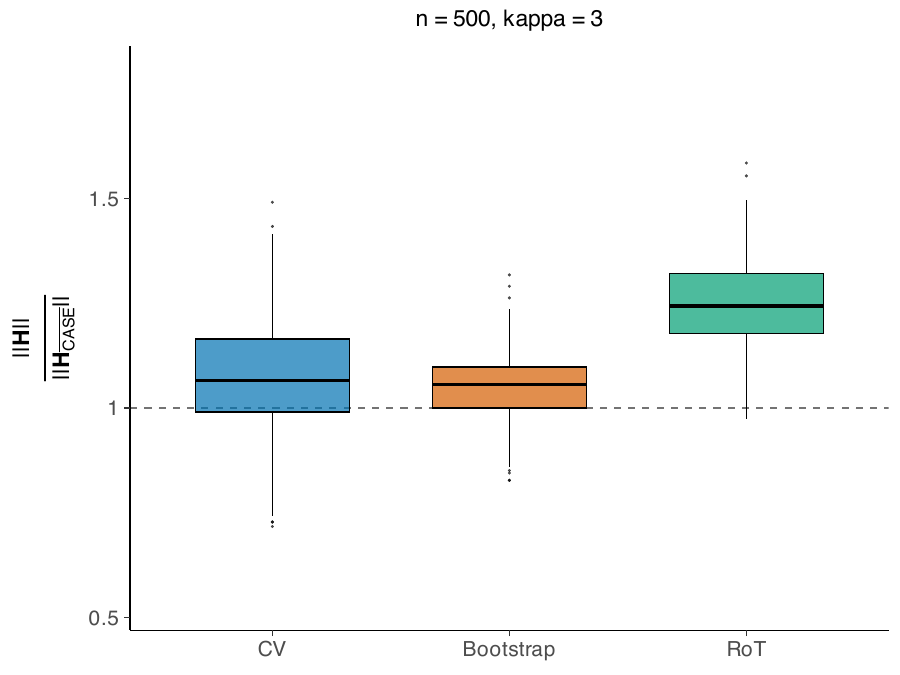}
  \end{subfigure}

  \vspace{1em}

  \begin{subfigure}{0.32\textwidth}
    \includegraphics[width=\linewidth]{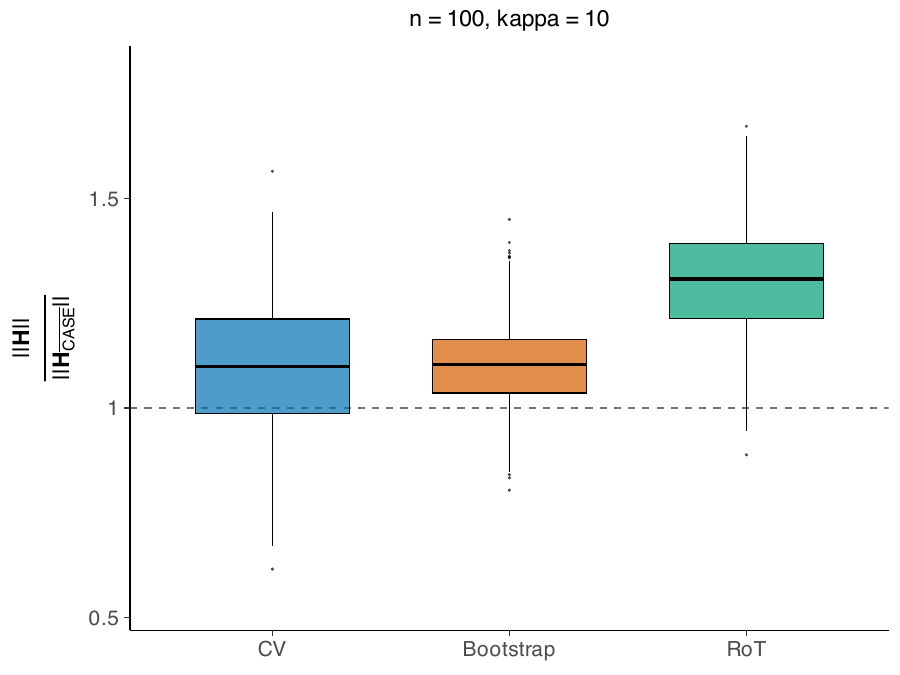}
  \end{subfigure}
  \hfill
  \begin{subfigure}{0.32\textwidth}
    \includegraphics[width=\linewidth]{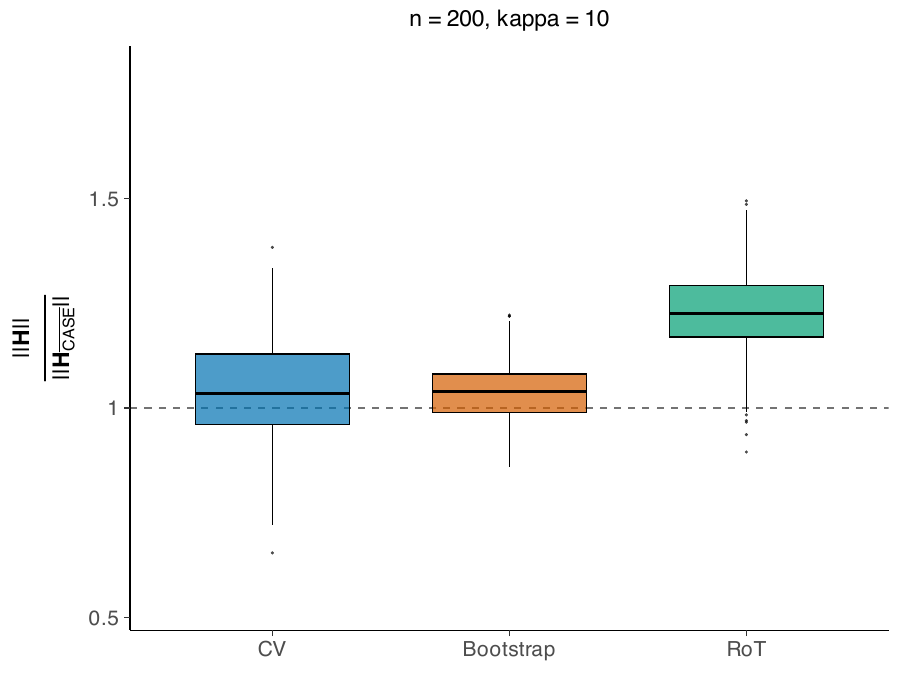}
  \end{subfigure}
  \hfill
  \begin{subfigure}{0.32\textwidth}
    \includegraphics[width=\linewidth]{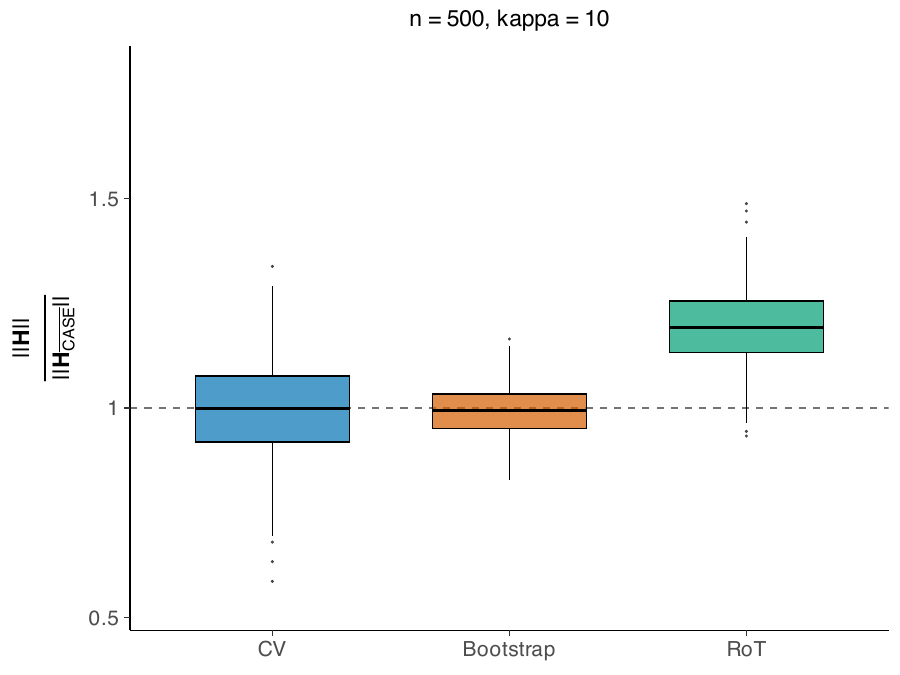}
  \end{subfigure}

  \caption{Distribution of $\|\bm{H}_{\text{method}}\| / \|\bm{H}_{\overline{\operatorname{CASE}}}\|$ for regression function R2.}
  \label{fig:panel_mtrue2}
\end{figure}

Taken together, the CASE summaries in Tables~\ref{tab:sim_results_r1} and~\ref{tab:sim_results_r2} and the norm-ratio summaries in Figures~\ref{fig:panel_mtrue1} and~\ref{fig:panel_mtrue2} indicate that the bootstrap-based selector provides the most
stable performance across sample sizes and regression complexities in the settings considered here, albeit at a higher computational cost.
In large-scale applications where runtime is a primary constraint, the RoT selector can
serve as a fast baseline (or initializer), often providing competitive fits when \(n\) is large.

\section{Analysis of the spatial orientation under sensory conditions}
\label{sec:results}

We now apply the proposed mixed-covariate circular regression methodology to the spatial-updating data of
\citet{legge2016indoor}, briefly described in the Introduction. The data are publicly available through the
University of Minnesota Data Repository (DRUM) at \url{http://hdl.handle.net/11299/182367}; for convenience,
we also distribute the analysis-ready dataset used in this paper with the \textsf{R} package \texttt{circMixedReg}.
As outlined in the Introduction, our response is the signed
angular error, computed as the circular difference between the reported and true target directions. This outcome
is directly relevant to spatial behavior, as it quantifies how accurately participants update the egocentric
location of a target after movement.
For visualization and interpretation, we represent this circular error on $(-\pi,\pi]$ so that $0$ denotes ``no
error'' and positive/negative deviations are displayed symmetrically. Section~\ref{sec:statistical_model} develops the model and estimator with the response on $[0,2\pi)$ via the
$\operatorname{atan2}$ mapping, but the two parameterizations are equivalent modulo $2\pi$, and all estimation
and inferential summaries are invariant to this wrapping convention. We therefore adopt $(-\pi,\pi]$ here solely to
aid interpretability.
Alongside the response, we consider one categorical and one continuous predictor. The categorical covariate is
the sensory condition of the trial, which modulates the available perceptual input and is therefore expected to
affect spatial-updating accuracy. As continuous input, we use the target distance in the main analysis
below, as it captures the physical demand of the task. The participant's distance-estimation error (reported minus true distance) can be
viewed as a complementary continuous measure reflecting subjective difficulty. Its inclusion as an additional
covariate is considered separately in Section~\ref{sec:two_predictors}, and an alternative one-predictor model
based solely on distance-estimation error is reported in Section~\ref{app:alt_predictor} of the Supplementary Material.
%Finally, although each participant contributes multiple trials, we analyze the data at the trial level to
%focus on the relationship between directional error, task demand, and sensory restriction. We do not include the
%vision group in the main specification, since it is constant within participants and would mix subject-level
%traits with trial-level predictors; we return to this point in the Discussion.
Finally, although the data have a nested structure (multiple trials per participant), we analyze the observations at the trial level to characterize how directional error varies with task demand and sensory restriction at the population level. In this setting, the i.i.d. framework provides a convenient benchmark for presenting the estimator, while the Discussion outlines natural extensions that explicitly account for within-subject dependence. We do not include vision group in the main specification, since it is constant within participants and would mix subject-level traits with trial-level predictors. We return to this point in the Discussion.

\begin{figure}[!htb]
\centering
\begin{subfigure}[t]{0.19\textwidth}
  \centering
  \includegraphics[width=\linewidth]{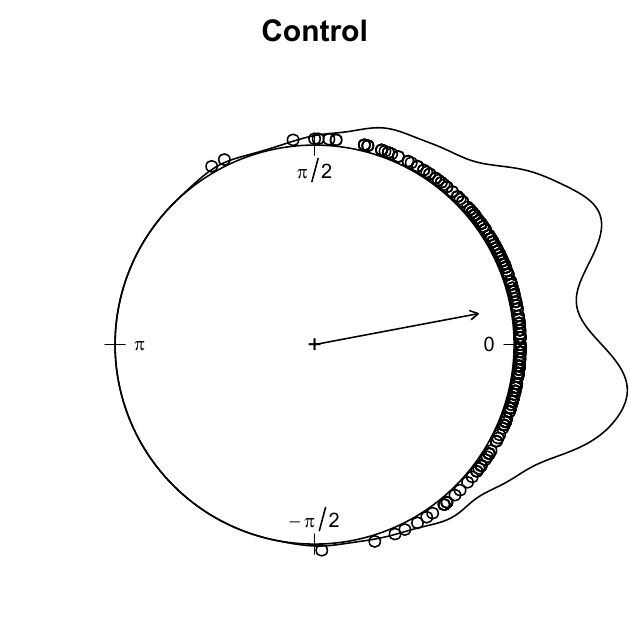}
\end{subfigure}\hfill
\begin{subfigure}[t]{0.19\textwidth}
  \centering
  \includegraphics[width=\linewidth]{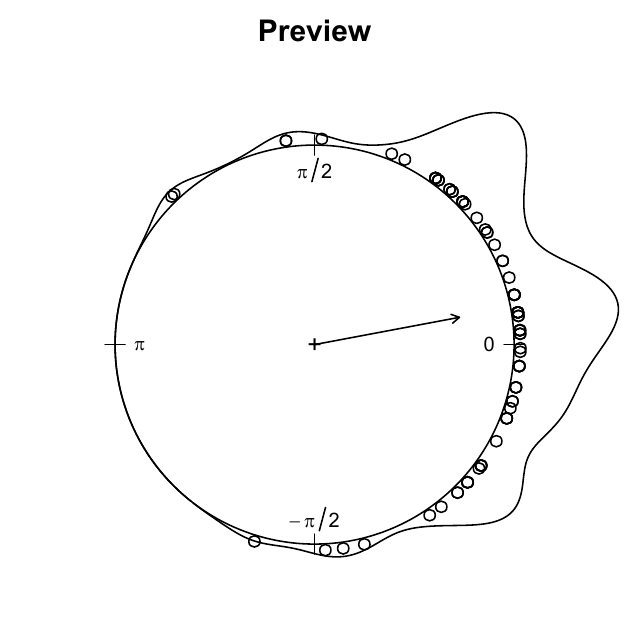}
\end{subfigure}\hfill
\begin{subfigure}[t]{0.19\textwidth}
  \centering
  \includegraphics[width=\linewidth]{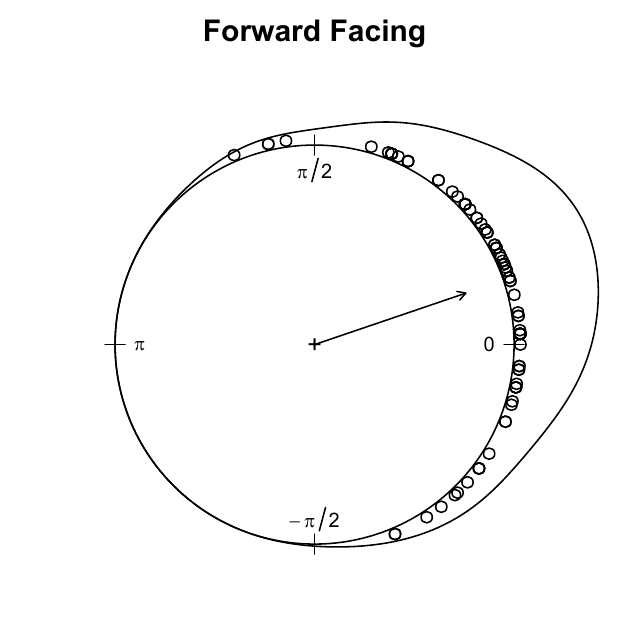}
\end{subfigure}\hfill
\begin{subfigure}[t]{0.19\textwidth}
  \centering
  \includegraphics[width=\linewidth]{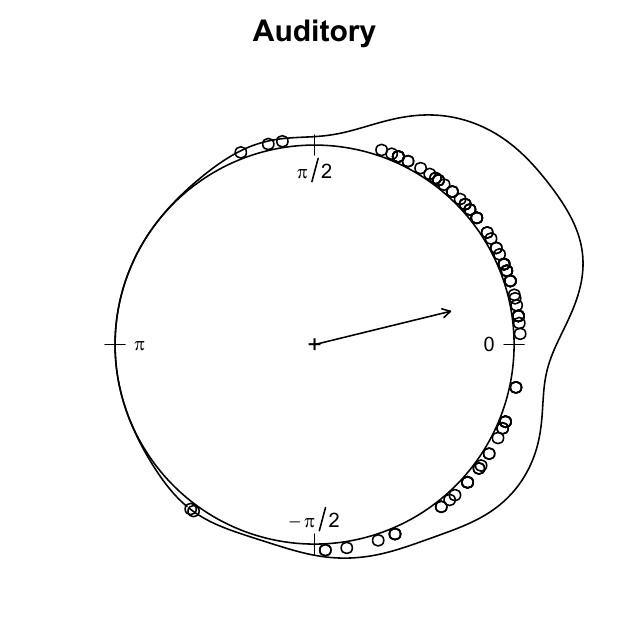}
  \end{subfigure}\hfill
\begin{subfigure}[t]{0.19\textwidth}
  \centering
  \includegraphics[width=\linewidth]{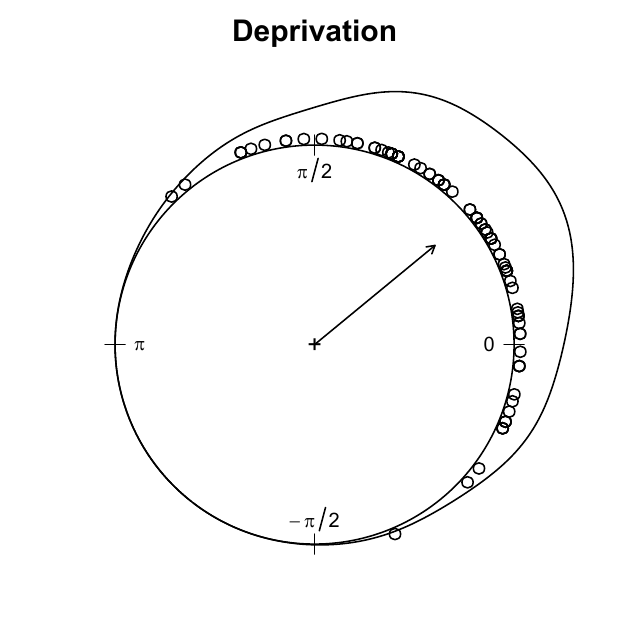}
  \end{subfigure}

\caption{Summaries of directional errors by sensory condition. Points on the outer ring are individual observations. The curve drawn outside the reference circle is a von~Mises kernel density estimates. The arrow indicates the sample mean direction, with length proportional to the mean resultant length. 
%The angular reference is $0$ to the right, $\pi/2$ at the top, $\pi$ to the left, and $-\pi/2$ at the bottom.
}
\label{polar_kde}
\end{figure}

Before presenting the condition-specific regression fits, we provide a brief exploratory summary of the signed directional error by sensory condition. Figure~\ref{polar_kde} displays circular density plots for each condition. In each panel, the points on the outer ring represent the individual observations, the curve drawn outside the reference circle is a von~Mises kernel density estimate of the error distribution, and the arrow indicates the sample mean direction, with length proportional to the mean resultant length (longer arrows indicate higher concentration). The angular reference is $0$ to the right, $\pi/2$ at the top, $\pi$ to the left, and $-\pi/2$ at the bottom.

Several qualitative patterns emerge. Under \textit{Control}, the error distribution is tightly concentrated around $0$, with a mean direction close to $0$ and the largest mean resultant length, indicating small and relatively concentrated directional errors. \textit{Preview} also remains centered near $0$, though with weaker concentration. In \textit{Forward Facing}, the mean direction shifts modestly away from $0$ while the distribution remains fairly concentrated. The \textit{Auditory} condition stays centered relatively close to $0$ but shows the lowest concentration, consistent with increased dispersion. Finally, \textit{Deprivation} exhibits the clearest departure in mean direction, with a noticeable positive shift together with substantial spread. Overall, these summaries suggest that sensory restrictions are associated with both changes in concentration and systematic directional bias, motivating flexible mixed-covariate circular regression models that can adapt across conditions.

%Before presenting the condition-specific regression fits, we carried out a descriptive exploratory analysis of
%the angular error by sensory condition. Figure~\ref{polar_kde} summarizes these directional errors for each condition using circular plots. Each panel shows: (i) the individual observations displayed on the outer ring (points), (ii) a von Mises kernel density estimate of the error distribution (shaded polygon), represented in polar coordinates so that, for each direction, the radial distance from the center is proportional to the estimated density, and drawn using a \emph{common radial scale across conditions} to enable direct visual comparison (so that differences in height reflect relative concentration rather than panel-specific rescaling), and (iii) the sample mean direction (arrow), whose length is proportional to the mean resultant length and, therefore, reflects concentration (longer arrows indicate less dispersion). The angular reference is $0$ to the right, $\pi/2$ at the top, $\pi$ to the left, and $-\pi/2$ at the bottom. 

\begin{figure}[!htb]
\centering
\includegraphics[width=0.85\textwidth]{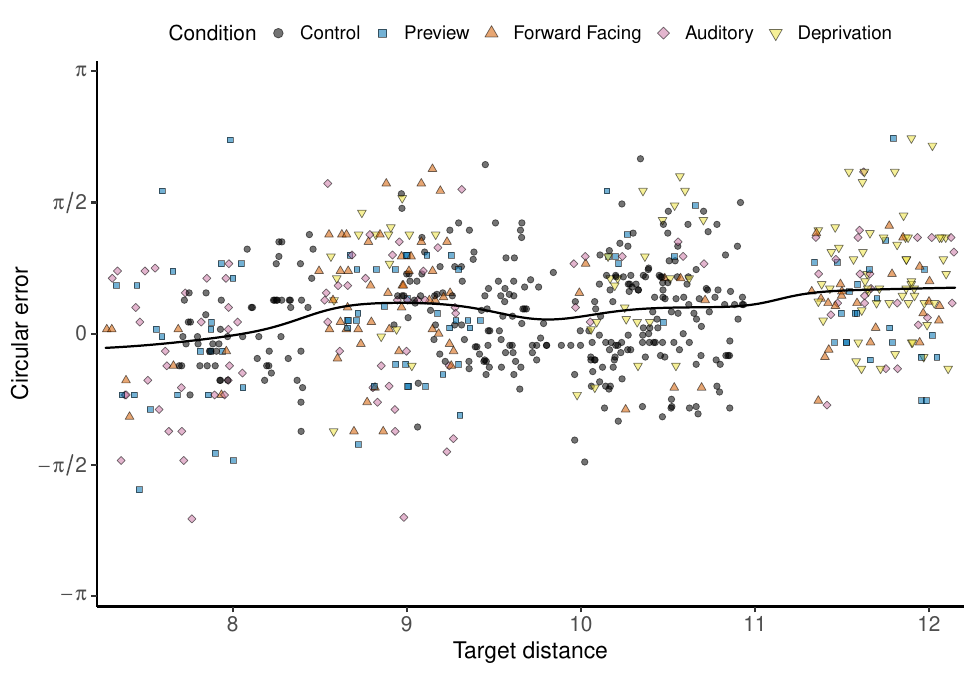}
\caption{Circular error versus target distance, colored and shaped by sensory condition (semi-transparent points to mitigate overlap). The overlaid curve is a global NW circular fit (Gaussian kernel) using a rule-of-thumb bandwidth \(h_{\text{RoT}}=0.354\). The fit ignores the condition factor and is computed on the original (non-jittered) data.}
\label{fig:scatter_global}
\end{figure}

As a complementary view of the raw data, Figure~\ref{fig:scatter_global} shows a scatterplot of circular error
versus target distance, with points colored and shaped by sensory condition. To improve readability in the
presence of repeated values, a small horizontal jitter is added to the distance values. We also overlay a global
nonparametric trend obtained from a NW circular estimator with a Gaussian kernel
\citep{dimarzio2013nonparametric}, using a robust rule-of-thumb bandwidth
\(h_{\text{RoT}} = 1.06\,\hat{\sigma}\,n^{-1/5} = 0.354\), where
\(\hat{\sigma}=\min\{s,\text{IQR}/1.349\}\) is a robust scale estimate of the predictor based on the sample standard deviation $s$ and the interquartile range \(\text{IQR}\). This choice provides a
stable descriptive smooth for visualization. The global fit indicates an overall upward tendency of the angular
error with distance, with the steepest increase occurring at the largest distances. At shorter distances, the error
remains comparatively stable, while dispersion increases as targets become more distant. Importantly, this pooled trend is only meant as an exploratory summary: it averages across sensory conditions and
therefore cannot reveal condition-specific patterns. This motivates fitting our mixed-covariate circular regression model, which estimates a separate smooth function
of distance for each sensory condition; uncertainty is then quantified by constructing simultaneous confidence
bands via the bootstrap procedure in Section~\ref{sec:bands}.

\subsection{Regression estimates and simultaneous confidence bands}
\label{sec:regs}

Building on the exploratory pooled smooth above, we now fit the mixed-kernel circular regression estimator in
\eqref{eq:est} to obtain condition-specific regression curves. The response is the signed angular error, the
continuous predictor is the target distance, and the sensory condition enters as a categorical factor. 
Following the convention adopted above, we display angular errors on $(-\pi,\pi]$ (with $0$ denoting ``no error'').
All conclusions are invariant to this representation modulo $2\pi$. The continuous component
uses a Gaussian kernel, and the categorical component uses the Aitchison--Aitken kernel. We computed the
estimator under three bandwidth selectors, cross-validation, rule-of-thumb, and the proposed bootstrap
criterion, yielding $\bm H_{\text{CV}}=(0.31,\,0.12)$, $\bm H_{\text{RoT}}=(0.37,\,0.05)$, and
$\bm H_{\text{boot}}=(0.28,\,0.08)$, respectively. For brevity, we report only the bootstrap-based curves and bands in this section. The bootstrap selector was marginally more stable in the simulation study, and in this application, all three selectors yield essentially the same fit and lead to the same substantive conclusions. The corresponding fits using CV and RoT selectors are reported in Section~\ref{app:selectors} of the Supplementary Material.

\begin{figure}[!htb]
\centering
\begin{subfigure}[b]{0.32\textwidth}
    \includegraphics[width=\textwidth]{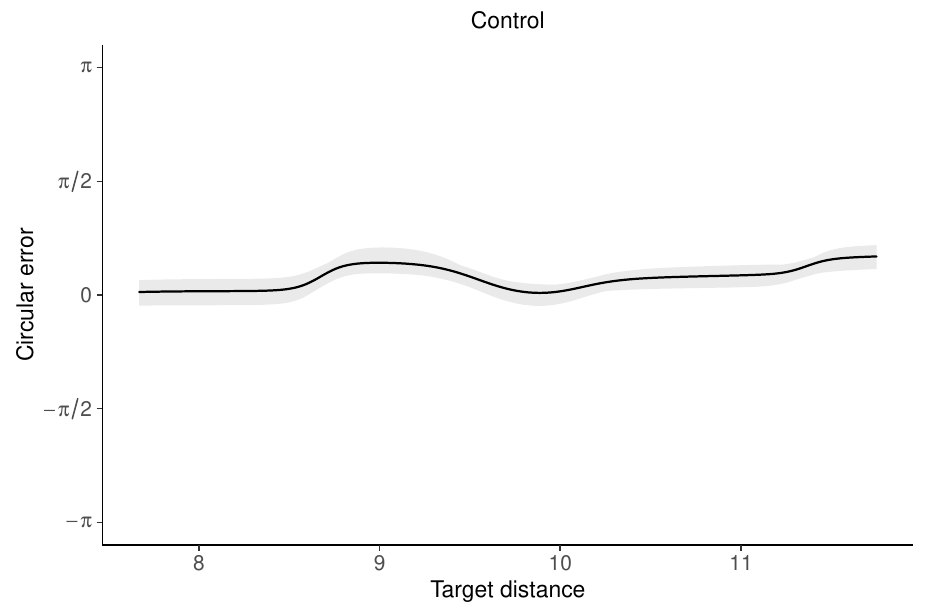}
\end{subfigure}
\begin{subfigure}[b]{0.32\textwidth}
    \includegraphics[width=\textwidth]{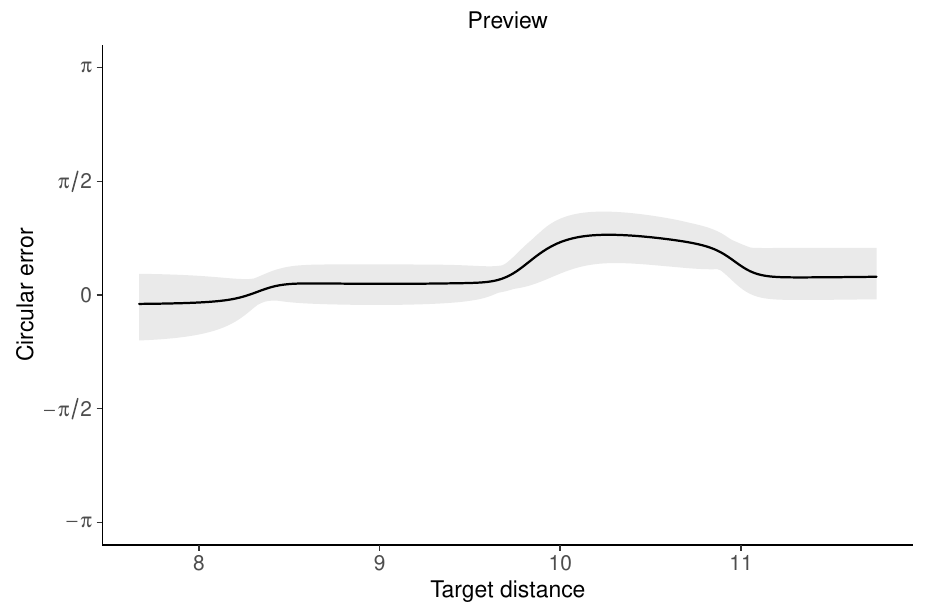}
\end{subfigure}
\begin{subfigure}[b]{0.32\textwidth}
    \includegraphics[width=\textwidth]{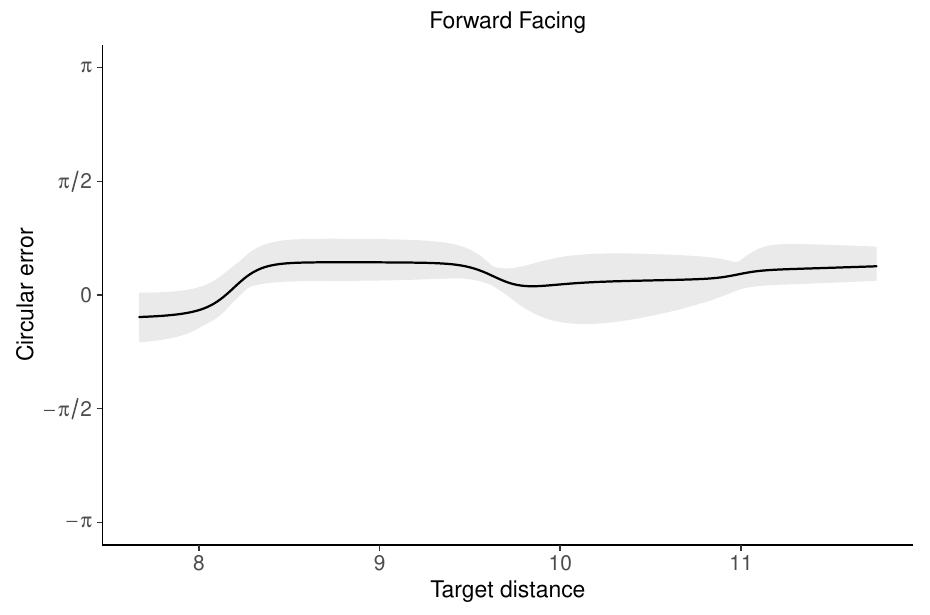}
\end{subfigure}

\vspace{0.6em}

\begin{minipage}{0.66\textwidth}
\centering
\begin{subfigure}[b]{0.48\textwidth}
  \includegraphics[width=\textwidth]{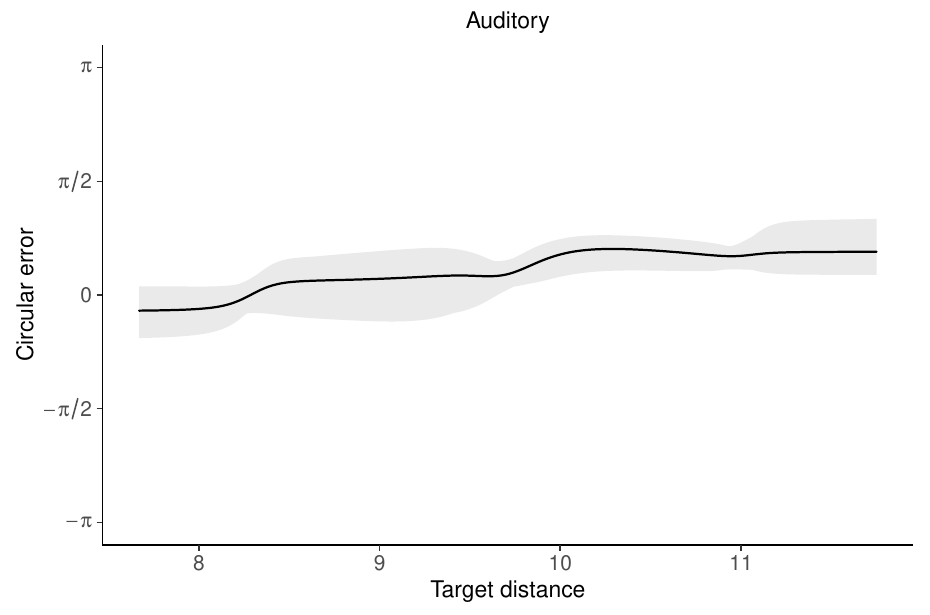}
\end{subfigure}
\hfill
\begin{subfigure}[b]{0.48\textwidth}
  \includegraphics[width=\textwidth]{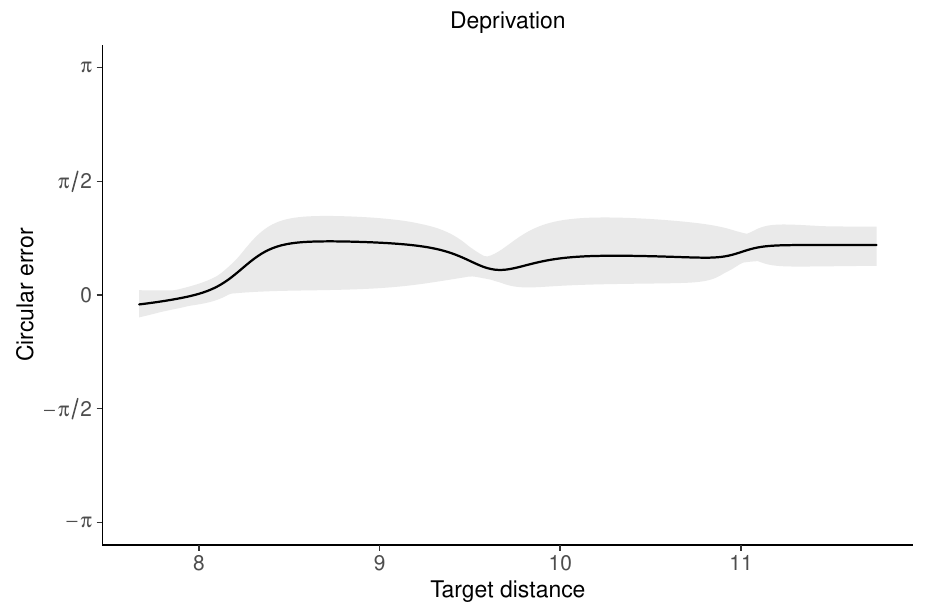}
\end{subfigure}
\end{minipage}

\caption{
Estimated circular regression curves and simultaneous confidence bands for the estimated circular regression functions under each sensory condition. Bands were computed via bootstrap with \( B = 200 \) resamples and significance level \( \alpha = 0.05 \). 
%The bottom-right panel summarizes the mean angular width of each band.
}
\label{fig:confidence_bands}
\end{figure}

Figure~\ref{fig:confidence_bands} shows, for each condition, the nonparametric regression estimate computed with the bootstrap-selected bandwidth $\bm H_{\text{boot}}$, together with $95\%$ simultaneous bootstrap confidence bands ($\alpha=0.05$, $B=200$), calibrated to provide global coverage over the full distance range within each condition.
Importantly, we do \emph{not}  fit separate models by condition. Instead, the estimator remains unified across conditions through the mixed-kernel weights $w_{\bm H}^i(\bm x,\bm z)$ in \eqref{eq:pesos}. In the present setting (one continuous covariate and one factor), these weights are proportional to $K_h(x-X_i)\,L_\lambda(z,Z_i)$ and are normalized to sum to one over $i$. Thus, all observations contribute to $\hat m_{\bm H}(x,z)$, with influence modulated by distance proximity and categorical concordance.
This induces a principled borrowing of strength across conditions, while still allowing condition-specific regression structure.

The five panels in Figure~\ref{fig:confidence_bands} reveal clear, condition-dependent relationships between
target distance and signed angular error. In \textit{Control}, the estimated curve stays close to the no-error
reference across most of the distance range, and the simultaneous band is comparatively tight, indicating stable
directional updating when both vision and hearing are available. By contrast, under reduced sensory input, the
mean error departs from zero in structured, distance-dependent ways. In \textit{Auditory}, the fitted curve
increases with distance, transitioning from slightly negative values at the shortest distances to persistently
positive errors at moderate-to-long distances, consistent with a gradual drift when orientation relies primarily
on sound. The \textit{Forward Facing} and \textit{Deprivation} conditions show an early rise to positive errors at
short-to-intermediate distances, followed by a mild flattening (and a local dip around the mid-range) before
returning to positive values toward the largest distances. Finally, \textit{Preview} exhibits a nonmonotone
profile, remaining near zero at short distances and displaying a mid-to-long-distance bump that attenuates
toward the upper end of the observed range.
Because the bands are simultaneous over the distance grid within each sensory condition, they support global
inference in $x$ for a fixed $z$. In particular, when the horizontal reference at 0 (no-error) lies entirely
outside the band over a distance interval, this provides evidence at the stated confidence level that the mean
signed directional error differs from zero throughout that interval (on the evaluated grid). Throughout, we
interpret the fitted errors on the $(-\pi,\pi]$ scale and avoid over-interpreting any behavior that could be
affected by the $\pm\pi$ wrap-around.

Overall, the fitted curves and simultaneous bands provide a compact, condition-specific summary of how the \emph{signed directional error} varies with task demand (target distance). The visually rich \textit{Control} condition stays close to the no-error reference across most of the distance range, whereas conditions with reduced or absent vision tend to show more pronounced positive deviations over moderate-to-long distances and wider uncertainty bands, particularly near the extremes of the observed distance range. These findings are consistent with a setting in which sensory restriction is associated not only with increased variability but also with structured, distance-dependent directional distortions. Practically, the results identify distance ranges and sensory regimes where systematic error is most evident, and they illustrate how the proposed mixed-covariate circular regression framework can extract interpretable, condition-specific patterns without stratifying the data or imposing restrictive parametric assumptions, an approach that quantifies both systematic angular bias and uncertainty as smooth functions of task demand under different sensory conditions, and is directly relevant to spatial navigation, rehabilitation, and human-machine interaction studies involving directional responses.

\subsection{Residual analysis}
\label{sec:resi}

As a next step, we examine residual diagnostics to check whether the fitted regression function captures the
main systematic structure in the angular errors and whether the resulting centered residuals are reasonably
compatible with the residual-resampling scheme used for our bootstrap-based bandwidth selection and band
construction. Using the regression estimates obtained with the bootstrap-selected bandwidth, we consider pooled
and condition-specific diagnostics, circular-uniformity tests, and sensitivity checks. For brevity, we report
only the main conclusions here; full results (additional figures, alternative bandwidth selectors, and detailed
test outputs) are provided in Section~\ref{app:residuals} of the Supplementary Material.

Using the bootstrap-selected bandwidth $\bm H_{\text{boot}}$, circular residuals are defined as
$
\hat{\varepsilon}_i
=\big[\Theta_i-\hat m_{\bm H_{\text{boot}}}(\bm X_i,\bm Z_i)\big]\ (\operatorname{mod}\, 2\pi),
$
and are mapped to $(-\pi,\pi]$ to retain the interpretation as signed angular deviations.
Figure~\ref{fig:residuals_boxplot_bootstrap} shows circular residual boxplots by sensory condition.
Across conditions, residuals are centered close to $0$, indicating no strong directional bias, although
dispersion varies.
\textit{Preview} exhibits the widest spread, whereas \textit{Control} and \textit{Forward Facing} are more
concentrated. \textit{Auditory} and \textit{Deprivation} lie in between, with a slight clockwise shift.
A few outliers are present, particularly in \textit{Control}. Numerical summaries in the Supplementary Material support these
patterns. A pooled residuals-versus-fitted plot (Fig.~\ref{fig:residuals_vs_fitted_sm} in the Supplementary Material) reveals no discernible trend or
heteroscedasticity.

\begin{figure}[!htb]
    \centering
    \begin{subfigure}[b]{0.19\textwidth}
        \centering
        \includegraphics[width=\textwidth]{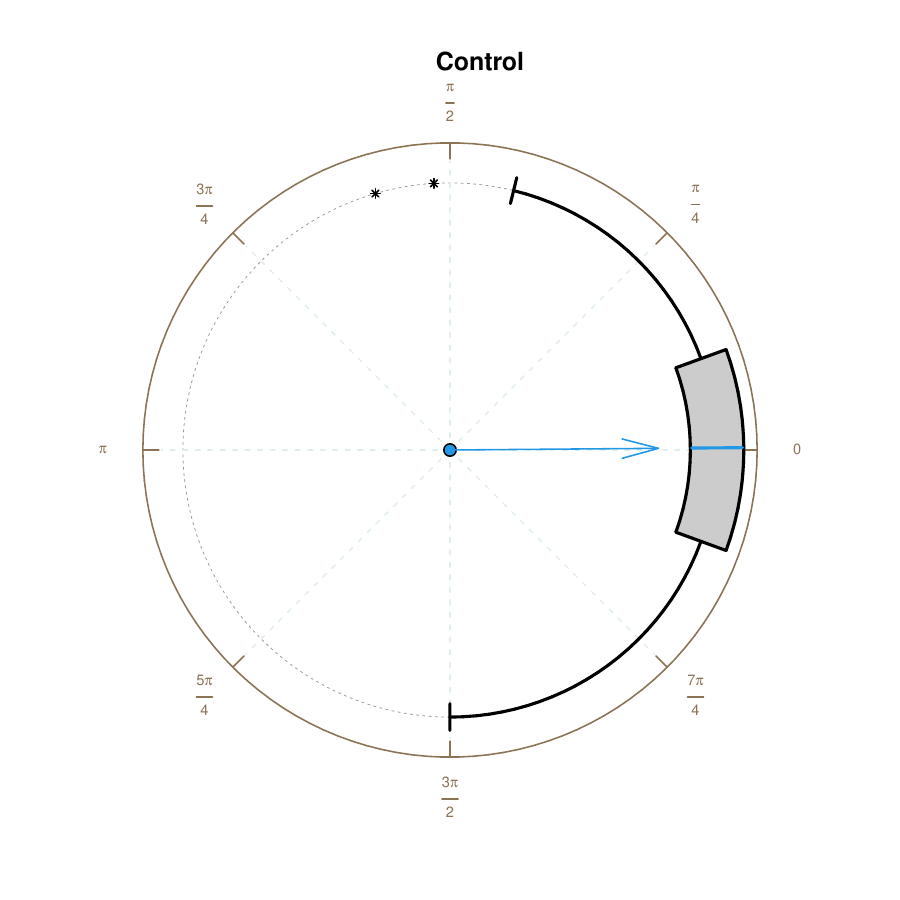}
    \end{subfigure}
    \hfill
    \begin{subfigure}[b]{0.19\textwidth}
        \centering
        \includegraphics[width=\textwidth]{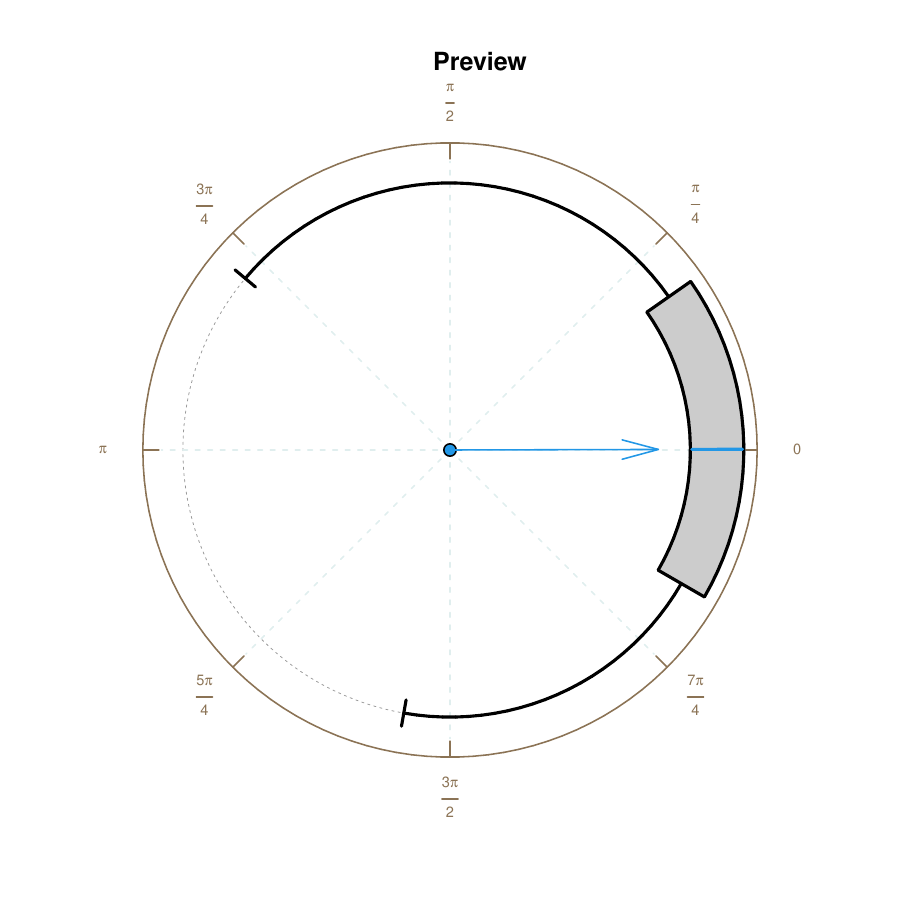}
    \end{subfigure}
    \hfill
    \begin{subfigure}[b]{0.19\textwidth}
        \centering
        \includegraphics[width=\textwidth]{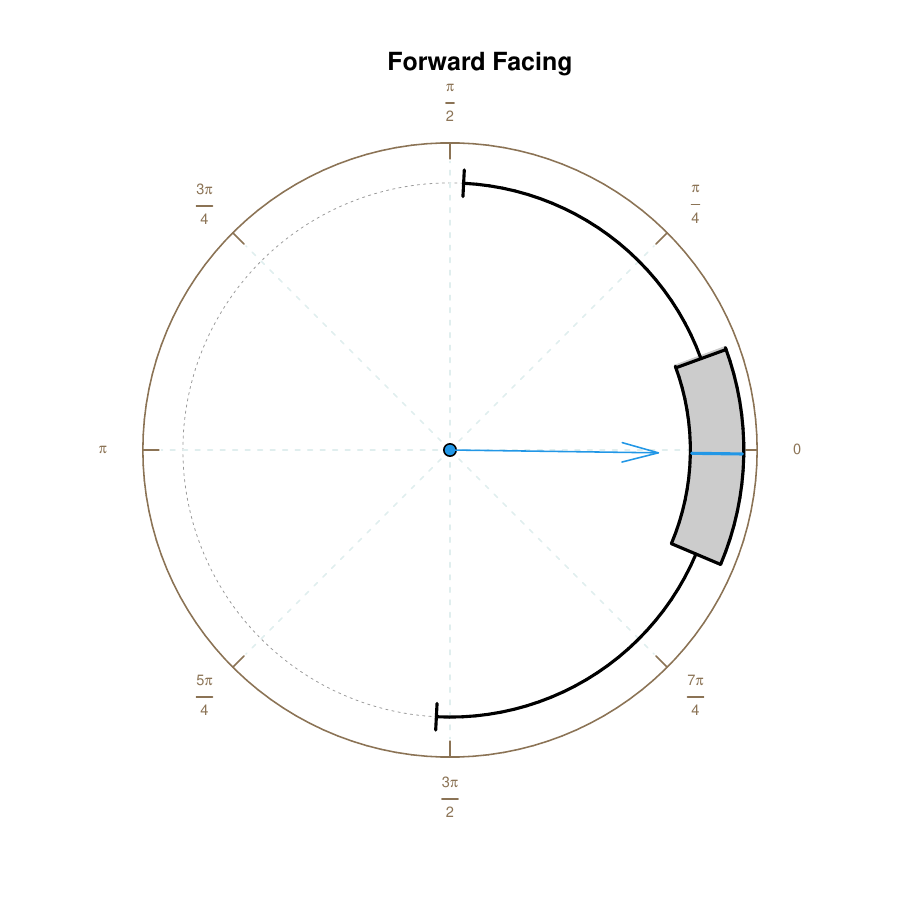}
    \end{subfigure}
    \hfill
    \begin{subfigure}[b]{0.19\textwidth}
        \centering
        \includegraphics[width=\textwidth]{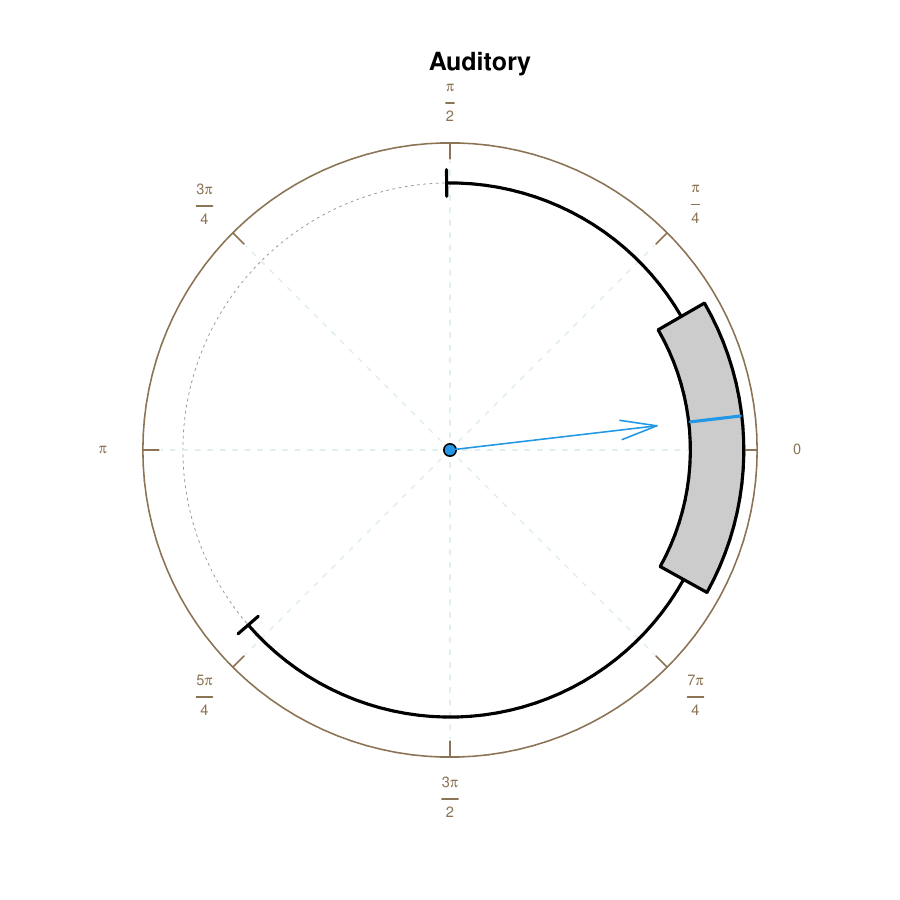}
    \end{subfigure}
    \hfill
    \begin{subfigure}[b]{0.19\textwidth}
        \centering
        \includegraphics[width=\textwidth]{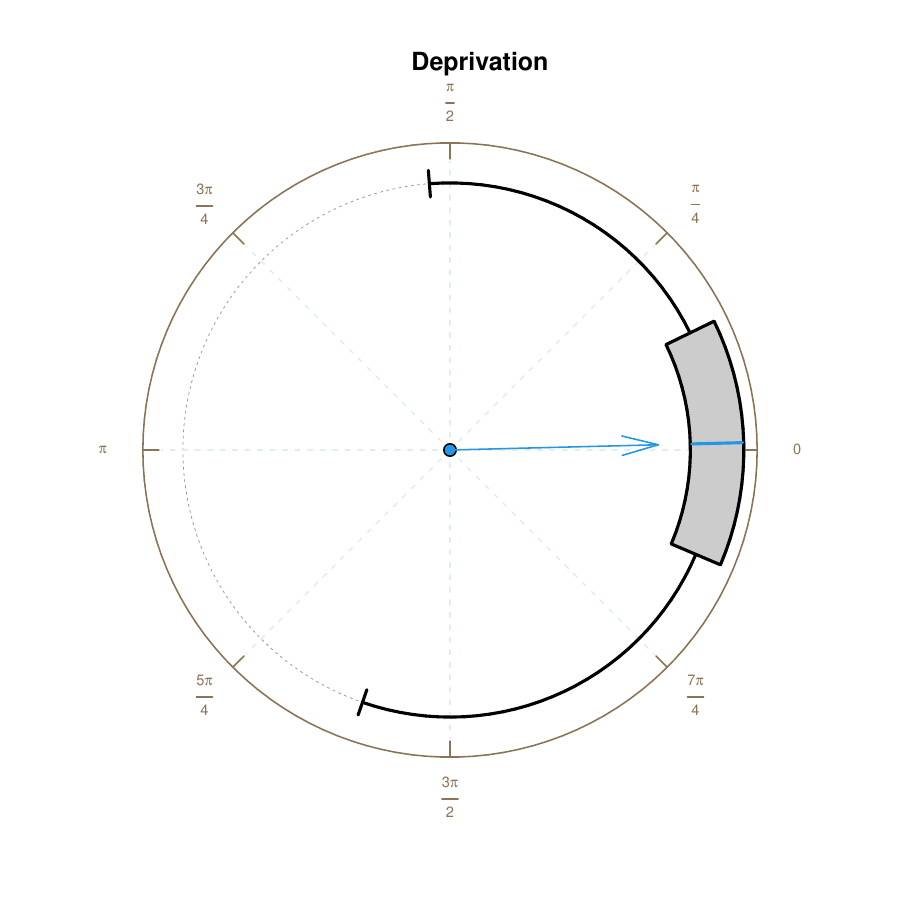}
    \end{subfigure}
    \caption{Circular residual boxplots by sensory condition for the bootstrap estimator $\hat m_{\bm H_{\text{boot}}}$.}
    \label{fig:residuals_boxplot_bootstrap}
\end{figure}

To assess whether residuals exhibit remaining circular structure, we also provide in the Supplementary Material standard tests of circular
uniformity on the pooled residuals (Rayleigh, Watson $U^2$, and Kuiper; see, e.g.,
\citealp{mardia2009directional,watson1961goodness,kuiper1960tests}). These tests reject strict uniformity,
suggesting mild residual structure under the most challenging conditions, but they do not point to a substantial
model misspecification.

To summarize global goodness-of-fit, we report two metrics tailored to the cosine loss used throughout the paper.
First, we compute an observational version of the circular average squared error, $\operatorname{CASE}_{\text{obs}}(\bm H)$, 
which corresponds to \eqref{eq:CASE} with the unknown regression function replaced by the observed response.
Second, we report a cosine-loss pseudo-coefficient of determination, $R^2_{\text{circ}}
= 1-\text{SSE}_{\text{circ}}/\text{SST}_{\text{circ}}$, 
with
$
\text{SSE}_{\text{circ}}
=\sum_{i=1}^n \{1-\cos[\Theta_i-\hat m_{\bm H}(\bm X_i,\bm Z_i)]\},
$ and $
\text{SST}_{\text{circ}}
=\sum_{i=1}^n \{1-\cos(\Theta_i-\bar\Theta)\},
$
where $\bar\Theta$ is the sample mean direction. This coefficient equals $1$ for a perfect fit and equals $0$ for
the circular-mean predictor (it may be negative if the fitted model underperforms the circular mean). See
Section~\ref{app:residuals} of the Supplementary Material for further discussion.
For the bootstrap-selected bandwidth, we obtain
$\operatorname{CASE}_{\text{obs}}(\bm H_{\text{boot}})=0.181$ and $R^2_{\text{circ}}=0.144$.
A full comparison with $\bm H_{\text{CV}}$ and $\bm H_{\text{RoT}}$ is reported in the Supplementary Material: the bootstrap selector
attains the smallest $\operatorname{CASE}_{\text{obs}}$ and the largest $R^2_{\text{circ}}$, although the
differences are small. Additional diagnostics, including condition-wise $\operatorname{CASE}_{\text{obs}}$ and
pooled residuals-versus-fitted plots, are also provided in the Supplementary Material.

Overall, these diagnostics indicate that the fitted model captures the main structure of the circular response.
Residuals are centered near zero across conditions, with dispersion patterns consistent with task difficulty, and
the pooled residuals-versus-fitted plot shows no systematic trend. The uniformity tests suggest some remaining
structure, most pronounced under \textit{Deprivation}, but not of a magnitude that undermines inference or band
construction in the present analysis. Any remaining structure could be addressed in future work by allowing
condition-specific concentration (heteroscedasticity), incorporating subject/room effects, or adopting a
hierarchical circular regression formulation.

\subsection{Extension to two continuous predictors}
\label{sec:two_predictors}

The main analysis above uses a single continuous predictor, the \emph{target distance} (true distance to the
target, in feet). The dataset also records each participant's reported distance, which captures a perceptual
component of the task. To assess whether this additional information improves predictive performance, we
consider a two-predictor specification with $X_1$ the target distance and $X_2$ the \emph{distance-estimation
error}, defined as reported minus true distance.

Let $\bm x=(x_1,x_2)$ denote covariate values and let $z$ denote the sensory condition. We estimate
$m(\bm x,z)$ using the same product-kernel circular estimator \eqref{eq:est}, now with weights
\[
w^i_{\bm H}(\bm x,z)
=\frac{K_1\!\left[(x_1-X_{i1})/h_1\right]\,
       K_2\!\left[(x_2-X_{i2})/h_2\right]\,
       L(z,Z_i;\lambda)}
      {\sum_{j=1}^n
       K_1\!\left[(x_1-X_{j1})/h_1\right]\,
       K_2\!\left[(x_2-X_{j2})/h_2\right]\,
       L(z,Z_j;\lambda)},
\]
where $\bm H=(h_1,h_2,\lambda)$, $K_1$ and $K_2$ are Gaussian kernels, and $L$ is the Aitchison--Aitken kernel for
the categorical factor. As in the one-predictor analysis, we wrap angles to $(-\pi,\pi]$ for visualization; this
display choice is immaterial to estimation.

\begin{figure}[!htb]
\centering
% --- top row ---
\begin{subfigure}[b]{0.32\textwidth}
  \includegraphics[width=\textwidth]{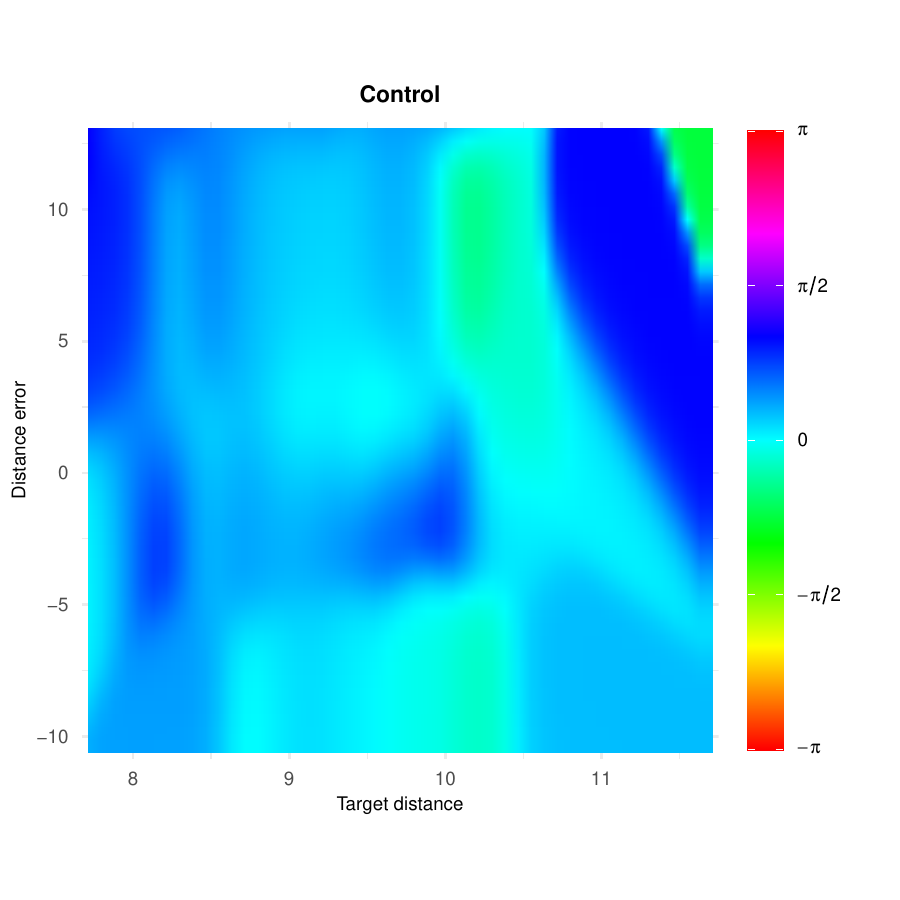}
\end{subfigure}\hfill
\begin{subfigure}[b]{0.32\textwidth}
  \includegraphics[width=\textwidth]{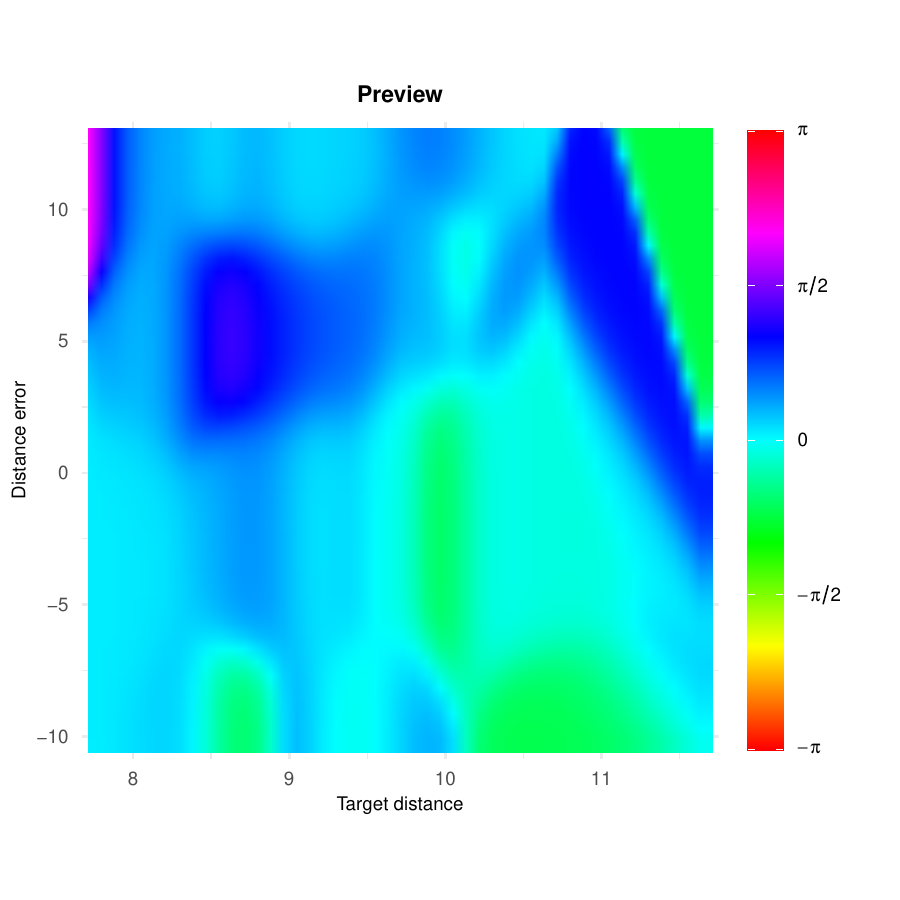}
\end{subfigure}\hfill
\begin{subfigure}[b]{0.32\textwidth}
  \includegraphics[width=\textwidth]{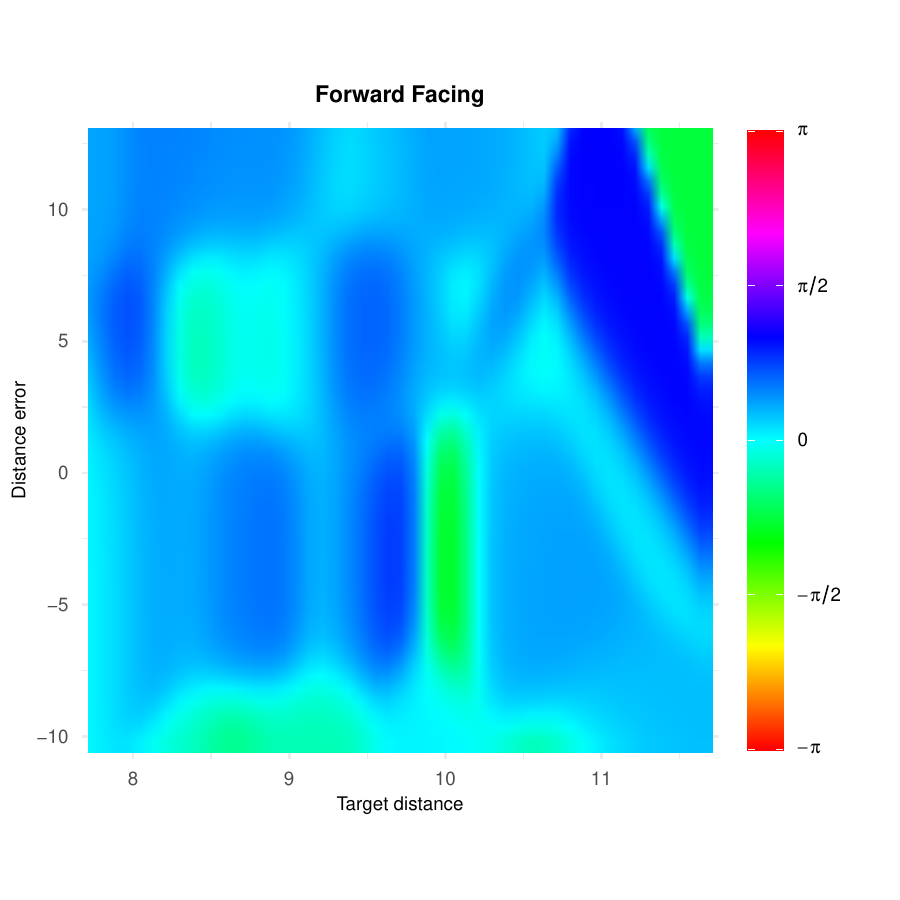}
\end{subfigure}

\vspace{0.6em}

% --- bottom row (two panels centered) ---
\begin{minipage}{0.66\textwidth}
\centering
\begin{subfigure}[b]{0.48\textwidth}
  \includegraphics[width=\textwidth]{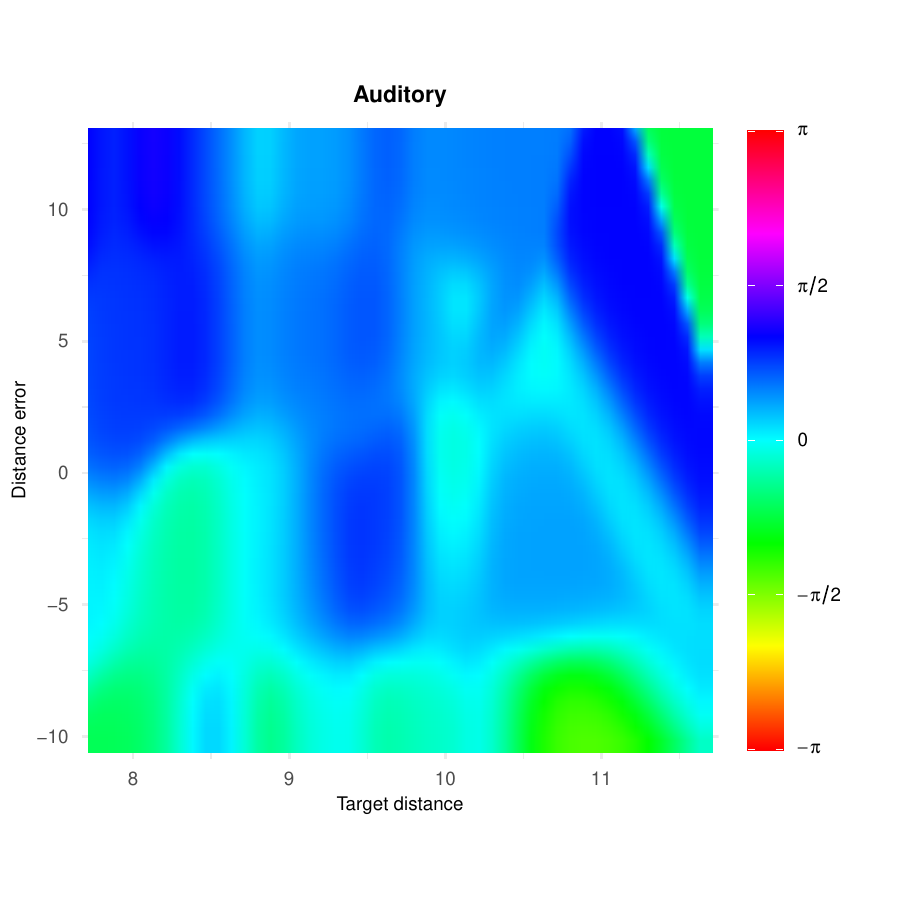}
\end{subfigure}
\hfill
\begin{subfigure}[b]{0.48\textwidth}
  \includegraphics[width=\textwidth]{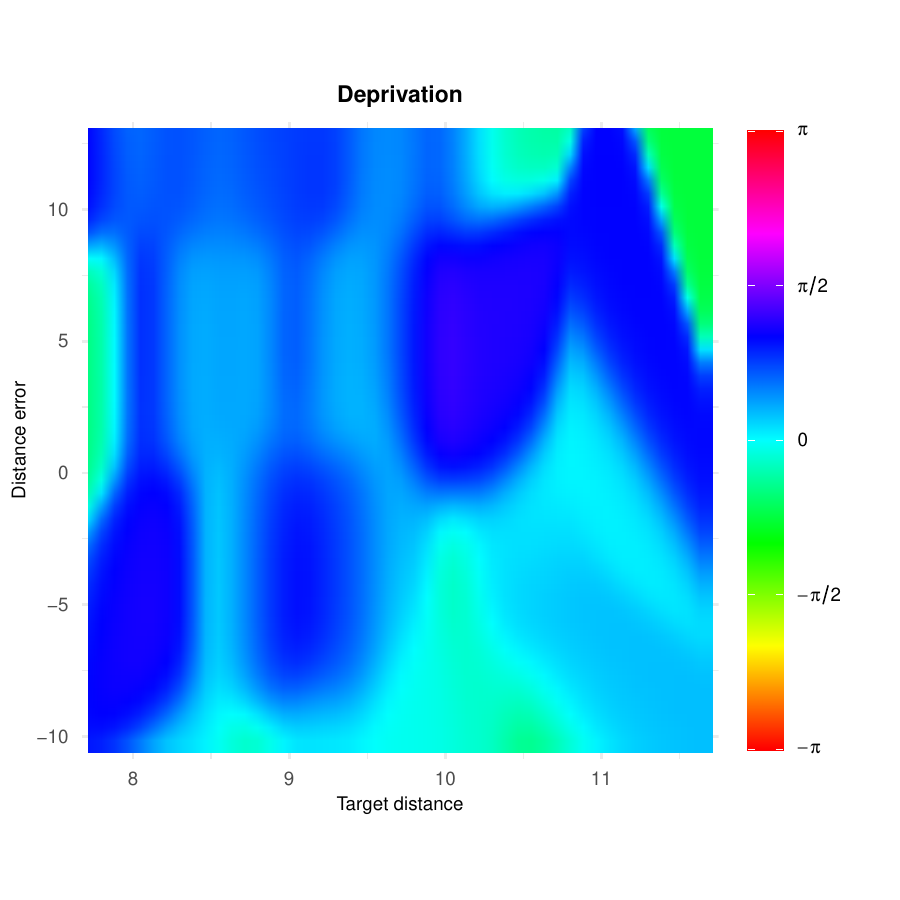}
\end{subfigure}
\end{minipage}

\caption{Estimated circular regression surfaces using a
single global bandwidth $\bm H_{\text{RoT}}=(0.37,0.89,0.05)$ shared across conditions. Color encodes the
predicted angular response; the circular colormap ensures that identical hues correspond to equivalent
directions modulo $2\pi$.}
\label{fig:surface_condition}
\end{figure}

For this two-dimensional illustration, we use the rule-of-thumb selector, which is computationally cheap and
stable in this setting. Figure~\ref{fig:surface_condition} displays the estimated regression surfaces
$\hat m_{\bm H}(x_1,x_2,z)$ for the five sensory conditions using the common bandwidth
$\bm H_{\text{RoT}}=(0.37,0.89,0.05)$.
The fits reveal systematic differences across conditions and suggest that the distance effect may vary with
misestimation error. Under \textit{Control} the surface is smooth and regular, with gradual angular variation along
both covariates; \textit{Preview} exhibits a similar overall pattern with modest local features. In
\textit{Forward Facing} a narrow vertical feature around intermediate distances ($\approx 10$~ft) is consistent
with a localized phase shift. In \textit{Auditory}, the structure becomes more heterogeneous, particularly for
negative distance errors at larger target distances, whereas \textit{Deprivation} shows the steepest gradients and
more localized irregularities for positive distance errors. Overall, variation along $x_2$ is mild in visually
rich or anticipatory-vision settings (\textit{Control}, \textit{Preview}), while under reduced or non-visual cues
(\textit{Forward Facing}, \textit{Auditory}, \textit{Deprivation}) the response changes more sharply with $x_1$ and
exhibits visible asymmetries between positive and negative $x_2$ values.

From an applied standpoint, these surfaces provide a compact description of directional error as a \emph{joint}
function of physical demand ($x_1$) and perceptual bias ($x_2$). They help delineate regions of the $(x_1,x_2)$
plane where performance is relatively stable versus regions where errors increase rapidly, and they suggest
testable hypotheses about cue use and reweighting under sensory restriction (e.g., regimes in which perceptual
bias appears to dominate versus regimes driven primarily by distance). This perspective can inform follow-up
designs by motivating targeted sampling of informative combinations of distance and misestimation, rather than
varying each factor in isolation.

Two remarks are in order. First, sharp color transitions near $\pm\pi$ reflect circular wrap-around rather than
true discontinuities of the underlying surface. Second, the combination of sparse data near the boundary of the
$(x_1,x_2)$ domain and a global rule-of-thumb bandwidth can produce mild ``banding'' artifacts. These plots are
intended to demonstrate how the proposed mixed-covariate framework extends to multiple continuous predictors. A 
more exhaustive multivariate analysis (e.g., adaptive bandwidths and formal uncertainty quantification for
surfaces) is beyond the scope of the present application.

To quantify the potential gain from including $X_2$, we compare the observational cosine-loss error
$\operatorname{CASE}_{\text{obs}}$ for the one- and two-predictor models under the same rule-of-thumb selection.
The one-predictor model (distance only) yields $\operatorname{CASE}_{\text{obs}}=0.182$, whereas the two-predictor
model (distance and distance-estimation error) yields $\operatorname{CASE}_{\text{obs}}=0.152$, indicating an
improved in-sample fit. Overall, this extension illustrates that additional continuous covariates can be
incorporated into the proposed framework with minimal changes to the estimator and can capture extra structure in
the angular response beyond that explained by distance alone.

\section{Discussion}
\label{sec:discussion}

Our analyses show that sensory conditions are strongly associated with spatial orientation performance.
Under full sensory access (\textit{Control}), participants exhibit smaller directional errors, whereas partial or
total sensory restriction is accompanied by broader error distributions and increased variability. These
patterns are consistent with the empirical findings reported by \citet{legge2016indoor}, while our statistical
framework goes beyond descriptive summaries by providing flexible regression estimates that capture nonlinear
and condition-specific trends in a data-driven manner. In contrast to the original analysis, which emphasized
group-level contrasts and mean errors, we model circular errors as a smooth function of distance within each
sensory condition, enabling the detection of subtler nonlinearities and the construction of simultaneous
confidence bands over distance (within each condition).

Beyond the statistical contribution, the fitted curves and simultaneous bands provide quantitative summaries
that may help inform subsequent work in spatial cognition and related applied settings. From a behavioral and
spatial-cognition perspective, the results are consistent with the view that non-visual modalities can support
egocentric updating, but typically with reduced precision relative to visually guided settings
\citep[e.g.,][]{schinazi2016navigation,gori2014impairment}. In clinical and rehabilitative contexts, the marked
differences across sensory conditions may help motivate more targeted assessments of orientation performance under
specific cue restrictions. More broadly, characterizing how directional error and uncertainty evolve with task
demand and sensory availability may be useful for informing the design and evaluation of assistive cues or
wayfinding strategies in built environments where sensory information is limited, degraded, or conflicting.
In our analysis, this is captured by condition-specific fitted curves together with simultaneous confidence bands
over distance (within each sensory condition).

From a methodological perspective, our circular regression framework offers several advantages. First, it
respects the intrinsic geometry of circular data. Second, the product-kernel formulation accommodates continuous
and categorical covariates in a unified nonparametric estimator, allowing interaction effects to emerge without
pre-specification. Third, the asymptotic bias and variance results extend existing circular regression theory to
mixed-type covariates for the NW estimator studied here. While not used directly in the data
analysis, they provide a foundation for future developments, including plug-in bandwidth selectors and
asymptotic inferential procedures. Fourth, the proposed bootstrap bandwidth selector performs favorably in
simulations and yields slightly improved fit summaries in the real-data analysis, offering a principled and
practically stable data-driven choice. All methods are implemented in the \textsf{R} package
\texttt{circMixedReg}, facilitating estimation, visualization, and diagnostics for circular regression with
mixed covariates.

While the proposed framework is flexible, several extensions would be worthwhile in settings such as ours.
Our real-data analysis is conducted at the trial level and does not explicitly model within-subject dependence
or other sources of clustering. Extensions to hierarchical formulations or mixed-effects circular regression,
incorporating, for instance, subject- and room-level effects, are therefore of clear interest. At the same time,
because sensory condition is the primary within-subject manipulation, the trial-level analysis provides a useful first-order characterization of population-averaged directional bias and dispersion patterns across sensory restrictions and task demands. Developing cluster-aware inference, including resampling schemes tailored to circular mixed-covariate regression, is a natural direction for future work, particularly to quantify the impact of within-subject dependence on inference.
%a trial-level analysis still provides a
%useful first-order characterization of how signed directional error varies with sensory restriction and task demand across trials. 
Another natural extension is to enrich the covariate set to separate stable subject-level
traits from trial-level sensory constraints. For example, the product-kernel formulation naturally extends to
additional categorical predictors, such as participants' vision group (sighted, low-vision, or blind). This
would help disentangle permanent visual ability from transient trial-level sensory constraints, without
requiring sample stratification or relying on classical ANOVA/ANCOVA-style designs in small subgroups, where
precision can deteriorate quickly \citep{alonso2021nonparametric}. Although adding categorical predictors
increases complexity and places additional demands on bandwidth selection, the extension is conceptually
straightforward within our framework and appears promising for future applications. Further developments could
also consider richer predictor sets (e.g., dynamic motion cues), variable-selection strategies, or adaptations
to circular functional settings.

Finally, a natural methodological extension is the LL version of the proposed mixed-covariate
circular regression estimator. LL smoothing has been investigated for circular regression with purely
continuous predictors in both univariate and multivariate settings
\citep{dimarzio2013nonparametric,meilan2021nonparametricTEST}. As noted in Remark~\ref{rem:LL} in
Section~\ref{sec:statistical_model}, an LL formulation can be constructed within our product-kernel framework by
replacing the NW local-constant fits by their LL counterparts in the continuous covariates (while keeping the categorical kernel component unchanged).
Our main theoretical development focuses on the NW estimator because it aligns most directly with the
existing kernel-regression theory for mixed-type predictors under Euclidean responses, where product-kernel
constructions and their large-sample properties are typically presented for NW-type estimators
\citep{racine2004nonparametric,li2007nonparametric}. Developing a fully parallel asymptotic theory for LL estimators in
the mixed-covariate setting (and then transferring it to circular outcomes) requires additional technical work and is
beyond the scope of this paper. Nonetheless, the \texttt{circMixedReg} package implements both NW and LL versions,
enabling empirical comparisons. Extending the mixed-covariate LL theory and its associated inference, bias correction,
and variable-selection strategies for circular responses remains a promising direction for future work.

%-----------------------------------------------%
\section*{Acknowledgments}
This work is part of the grants PID2023\allowbreak-147127OB\allowbreak-I00 and PID2024\allowbreak-158399NB\allowbreak-I00 ``ERDF/\allowbreak EU", funded by MCIN/AEI/\allowbreak10.13039/\allowbreak501100011033. It has also been supported by the Xunta de Galicia (Grupos de Referencia Competitiva, ED431C-2024/14) and by CITIC, a center accredited for excellence within the Galician University System and a member of the CIGUS Network. CITIC receives subsidies from the Department of Education, Science, Universities, and Vocational Training of the Xunta de Galicia and is co-financed by the EU through the FEDER Galicia 2021-27 operational program (Ref. ED431G 2023/01).
%-----------------------------------------------%

\bibliographystyle{apalike}
\bibliography{biblio_circ_mixed}

\newpage

\clearpage
\setcounter{figure}{0}
\setcounter{table}{0}
\setcounter{equation}{0}

\renewcommand{\thefigure}{S\arabic{figure}}
\renewcommand{\thetable}{S\arabic{table}}
\renewcommand{\theequation}{S\arabic{equation}}

\title{Supplementary material for ``Analyzing directional errors in spatial orientation using
nonparametric circular regression with mixed covariates''}
\setlength{\droptitle}{-1cm}
\predate{}%
\postdate{}%
\date{}
\author{Mario Francisco-Fernández$^{1,3}$ and Andrea Meil\'an-Vila$^{2}$}
\footnotetext[1]{Department of Mathematics, Universidade da Coruña (Spain).}
\footnotetext[2]{Department of Statistics, Universidad Carlos III de Madrid (Spain).}
\footnotetext[3]{Corresponding author. e-mail: \href{mailto:mario.francisco@udc.es}{mario.francisco@udc.es}.}
\maketitle

\begin{abstract}
 This Supplementary Material is organized as follows. Section~\ref{app:theory} states the
regularity assumptions used to derive the asymptotic bias and variance of the proposed kernel regression
estimator and provides the corresponding proofs. Section~\ref{app:data} reports additional results for the
real-data application.
\end{abstract}

\appendix

\section{Asymptotic properties of the estimator: assumptions and proofs}
\label{app:theory}

This section states the assumptions used to derive the asymptotic bias and variance of the proposed kernel
regression estimator and provides the corresponding proofs. The derivations proceed by rewriting the circular
problem in terms of two auxiliary Euclidean regression models for the sine and cosine components of the angular
response, which allows us to appeal to standard kernel-regression arguments. 

Consider independent and identically distributed (i.i.d.) observations \( \{(\bm{X}_i, \bm{Z}_i, \Theta_i)\}_{i=1}^n \) following the circular regression model
\begin{equation}
\Theta_i = [m(\bm{X}_i, \bm{Z}_i) + \varepsilon_i] \quad (\operatorname{mod}\, 2\pi), \quad i = 1, \ldots, n,
\label{eq:model_SM}
\end{equation}
where \( \varepsilon_i \), $i=1, \ldots, n$ are independent realizations of a circular variable $\varepsilon$ such that $\mathbb{E}[\sin(\varepsilon) \mid \bm{X} = \bm{x}, \bm{Z} = \bm{z}] = 0$ and having finite concentration. Throughout, define 
\( \ell(\bm{x}, \bm{z}) = \mathbb{E}[\cos(\varepsilon) \mid \bm{X} = \bm{x}, \bm{Z} = \bm{z}] \), \( \sigma_1^2(\bm{x}, \bm{z}) = \Var[\sin(\varepsilon) \mid \bm{X} = \bm{x}, \bm{Z} = \bm{z}] \),  \( \sigma_2^2(\bm{x}, \bm{z}) = \Var[\cos(\varepsilon) \mid \bm{X} = \bm{x}, \bm{Z} = \bm{z}] \),
and \( \sigma_{12}(\bm{x}, \bm{z}) = \mathbb{E}[\sin(\varepsilon) \cos(\varepsilon) \mid \bm{X} = \bm{x}, \bm{Z} = \bm{z}] \) 
(equivalently, the conditional covariance of $\sin(\varepsilon)$ and $\cos(\varepsilon)$).

As noted in Section~\ref{sec:statistical_model}, the circular regression function can be expressed as:
\[
m(\bm{x}, \bm{z}) = \operatorname{atan2}[m_1(\bm{x}, \bm{z}), m_2(\bm{x}, \bm{z})],
\]
where \( m_1(\bm{x}, \bm{z}) = \mathbb{E}[\sin(\Theta) \mid \bm{X} = \bm{x}, \bm{Z} = \bm{z}] \) and \( m_2(\bm{x}, \bm{z}) = \mathbb{E}[\cos(\Theta) \mid \bm{X} = \bm{x}, \bm{Z} = \bm{z}] \). Equivalently, $m_1$ and $m_2$ are the regression functions in the auxiliary Euclidean models with responses \( \sin(\Theta) \) and \( \cos(\Theta) \), respectively,
\begin{align}
\sin(\Theta_i) &= m_1(\bm{X}_i, \bm{Z}_i) + \xi_i, \quad i=1,\dots,n, \label{model1_SM} \\
\cos(\Theta_i) &= m_2(\bm{X}_i, \bm{Z}_i) + \zeta_i, \quad i=1,\dots,n, \label{model2_SM}
\end{align}
where \( \xi_i \) and \( \zeta_i \) are bounded error terms with conditional mean zero. Define their variances and covariance as: \( s_1^2(\bm{x}, \bm{z}) = \Var(\xi \mid \bm{X} = \bm{x}, \bm{Z} = \bm{z}) \), \( s_2^2(\bm{x}, \bm{z}) = \Var(\zeta \mid \bm{X} = \bm{x}, \bm{Z} = \bm{z}) \), \( c(\bm{x}, \bm{z}) = \mathbb{E}(\xi \zeta \mid \bm{X} = \bm{x}, \bm{Z} = \bm{z}) \).

Using trigonometric identities in \eqref{eq:model_SM} gives
\begin{align}
\sin(\Theta_i) &= \sin[m(\bm{X}_i, \bm{Z}_i)]\cos(\varepsilon_i) + \cos[m(\bm{X}_i, \bm{Z}_i)]\sin(\varepsilon_i), \label{eq_sin_SM} \\
\cos(\Theta_i) &= \cos[m(\bm{X}_i, \bm{Z}_i)]\cos(\varepsilon_i) - \sin[m(\bm{X}_i, \bm{Z}_i)]\sin(\varepsilon_i). \label{eq_cos_SM}
\end{align}
Let
\[ f_1(\bm{x}, \bm{z}) = \sin[m(\bm{x}, \bm{z})], \quad f_2(\bm{x}, \bm{z}) = \cos[m(\bm{x}, \bm{z})]. \]
Taking conditional expectations in \eqref{eq_sin_SM}--\eqref{eq_cos_SM} yields
\[ m_1(\bm{x}, \bm{z}) = f_1(\bm{x}, \bm{z}) \ell(\bm{x}, \bm{z}), \quad m_2(\bm{x}, \bm{z}) = f_2(\bm{x}, \bm{z}) \ell(\bm{x}, \bm{z}). \]
This shows that \( m_1 \) and \( m_2 \) are scaled versions of \( f_1 \) and \( f_2 \), with \( \ell(\bm{x}, \bm{z}) = [m_1^2(\bm{x}, \bm{z}) + m_2^2(\bm{x}, \bm{z})]^{1/2} \) being the mean resultant length of \( \Theta \) (or equivalently, of \( \varepsilon \)).

Combining expressions \eqref{model1_SM}--\eqref{eq_cos_SM}, the auxiliary errors can be written as
\begin{align*}
\xi_i &= f_1(\bm{X}_i, \bm{Z}_i)[\cos(\varepsilon_i) - \ell(\bm{X}_i, \bm{Z}_i)] + f_2(\bm{X}_i, \bm{Z}_i)\sin(\varepsilon_i), \\
\zeta_i &= f_2(\bm{X}_i, \bm{Z}_i)[\cos(\varepsilon_i) - \ell(\bm{X}_i, \bm{Z}_i)] - f_1(\bm{X}_i, \bm{Z}_i)\sin(\varepsilon_i). 
\end{align*}
Therefore, the variances and covariance in \eqref{model1_SM}--\eqref{model2_SM} satisfy
\begin{align}
s_1^2(\bm{x}, \bm{z}) &= f_1^2 \sigma_2^2 + 2 f_1 f_2 \sigma_{12} + f_2^2 \sigma_1^2, \label{cvar1_SM} \\
s_2^2(\bm{x}, \bm{z}) &= f_2^2 \sigma_2^2 - 2 f_1 f_2 \sigma_{12} + f_1^2 \sigma_1^2, \label{cvar2_SM} \\
c(\bm{x}, \bm{z}) &= f_1 f_2 \sigma_2^2 + f_2^2 \sigma_{12} - f_1^2 \sigma_{12} - f_1 f_2 \sigma_1^2. \label{ccovar_SM}
\end{align}

A Nadaraya--Watson-type kernel estimator of \( m(\bm{x}, \bm{z}) \) is
\begin{equation}
\hat{m}_{\bm{H}}(\bm{x}, \bm{z}) =\textrm{atan2}\left[ \hat{m}_{1,\bm{H}}(\bm{x}, \bm{z}), \hat{m}_{2,\bm{H}}(\bm{x}, \bm{z})  \right],
\label{eq:est_SM}
\end{equation}
where
\begin{equation*}
\hat{m}_{1,\bm{H}}(\bm{x}, \bm{z})=\sum_{i=1}^n w^{i}_{\bm{H}}(\bm{x}, \bm{z}) \sin (\Theta_i)
\end{equation*} 
and
\begin{equation*}
\hat{m}_{2,\bm{H}}(\bm{x}, \bm{z})=\sum_{i=1}^n w^{i}_{\bm{H}}(\bm{x}, \bm{z}) \cos (\Theta_i),
\end{equation*}
with
\begin{equation*}
w^{i}_{\bm{H}}(\bm{x}, \bm{z}) = \frac{K_{\bm{h}}(\bm{x}, \bm{X}_i)\,L_{\boldsymbol{\lambda}}(\bm{z}, \bm{Z}_i)}{\sum_{j=1}^n K_{\bm{h}}(\bm{x}, \bm{X}_j)\,L_{\boldsymbol{\lambda}}(\bm{z}, \bm{Z}_j)}.
\end{equation*}

To derive the asymptotic bias and variance of the circular estimator in \eqref{eq:est_SM}, we first study the large-sample behavior of the component Nadaraya--Watson estimators $\hat m_{j,\bm H}(\bm x,\bm z)$ of $m_j(\bm x,\bm z)$, $j=1,2$ (Proposition~\ref{prop1}). This follows standard mixed-covariate kernel-regression arguments and can be obtained by adapting the results of \citet{racine2004nonparametric} to our setting. The derivations require a set of regularity conditions, which we state next.

\begin{enumerate}[label=(A\arabic*)]
\item 
Let $\bm h_n=(h_{1,n},\ldots,h_{k,n})$ and $\boldsymbol\lambda_n=(\lambda_{1,n},\ldots,\lambda_{p,n})$ denote the
bandwidths for the $k$ continuous and $p$ categorical covariates. Assume that
$h_{j,n}=c_1\,n^{-1/(4+k+p)}$, for $j=1,\ldots,k$, and
$\lambda_{l,n}=c_2\,n^{-2/(4+k+p)}$, for $l=1,\ldots,p$,
for fixed constants $c_1,c_2>0$.
For notational simplicity, we further assume $h_{j,n}\equiv h_n$ for $j=1,\ldots,k$ and
$\lambda_{l,n}\equiv \lambda_n$ for $l=1,\ldots,p$.
We also suppress the subscript $n$ and write $h$ and $\lambda$ for
$h_n$ and $\lambda_n$.

\item
The continuous kernel is of product form, $K_{\bm{h}}(\bm{x}, \bm{X}_i) = \prod_{j=1}^k h^{-1} K_{j}[(x_j - X_{ij})/h]$,
%\[
%K_{\bm h_n}(\bm x,\bm X_i)=\prod_{j=1}^k h_n^{-1}\,
%K_j\!\left(\frac{x_j-X_{ij}}{h_n}\right),
%\]
where each $K_j$ is a nonnegative, symmetric, bounded kernel with $\int K_j(u)\,du=1$, $\int u^4K_j(u)\,du<\infty$, and $m$-times continuously differentiable with
$
\int |K_j^{(s)}(u)|\,|u|^s\,du<\infty$, $s=1,\ldots,m,
$
for some $m>\max\{2+\tfrac{4}{k+p},\,1+\tfrac{k+p}{2}\}$.
This condition is satisfied, for example, by the Gaussian kernel.
For notational simplicity, we assume $K_j\equiv K$, for $j=1,\ldots,k$.

%\item
The categorical kernel factorizes as
$
L_{\boldsymbol\lambda}(\bm z,\bm Z_i)=\prod_{l=1}^p L_l(z_l,Z_{il};\lambda),
$
where each univariate kernel $L_l(\cdot,\cdot;\lambda)$ is nonnegative and uniformly bounded in $\lambda$.
Moreover, for each $l$ and each fixed category $z_l$, the kernel concentrates its mass at $z_l$ in the sense that, as $\lambda\to 0,
$,
$
L_l(z_l,z_l;\lambda)=1+O(\lambda)$ and $\sum_{\tilde z_l\neq z_l} L_l(z_l,\tilde z_l;\lambda)=O(\lambda)$, uniformly over $z_l$.
This condition is satisfied by the Aitchison--Aitken kernel and by standard order-sensitive categorical kernels
considered in \citet{racine2004nonparametric}.
For notational simplicity, we take $L_l\equiv L$, for $l=1,\ldots,p$.

 \item The joint distribution of $(\bm X,\bm Z)$ admits a joint density-mass function $f(\bm x,\bm z)$, where
$\bm x\in\mathbb R^k$ is continuous and $\bm z\in\mathcal Z$ is discrete. Moreover, $f(\bm x,\bm z)$ is bounded
and continuously differentiable in $\bm x$ uniformly in $\bm z$, with first- and second-order partial derivatives
(w.r.t.\ $\bm x$) bounded on compact subsets of $\mathbb R^k\times\mathcal Z$, and $f(\bm x,\bm z)$ is bounded
away from zero on the region where the kernel weights are non-negligible.

\item For the auxiliary models \eqref{model1_SM}--\eqref{model2_SM}, the error terms have finite fourth moments:
$\mathbb{E}(\xi_i^4)<\infty$ and $\mathbb{E}(\zeta_i^4)<\infty$.

\item 
For $j=1,2$, the functions $m_j(\bm x,\bm z)$ and $s_j^2(\bm x,\bm z)$ belong to $C_2^4$ in $\bm x$,
uniformly in $\bm z\in\mathcal{Z}$.
Let $\ell(\bm x,\bm z)=[m_1^2(\bm x,\bm z)+m_2^2(\bm x,\bm z)]^{1/2}$, and assume that
$\ell(\bm x,\bm z)>0$ at the interior point $(\bm x,\bm z)$ under consideration, so that the
$\operatorname{atan2}$ map is locally well defined.

\item For $j=1,2$, define the continuous-smoothing  component
\[
B_{1,1}^{(j)}(\bm{x}, \bm{z})
=
\left\{
\nabla_{\bm{x}} f(\bm{x}, \bm{z})^\top \nabla_{\bm{x}} m_j(\bm{x}, \bm{z})
+ \frac{1}{2} f(\bm{x}, \bm{z})\, \operatorname{tr}\!\left[  \nabla_{\bm{x}}^2 m_j(\bm{x}, \bm{z}) \right]
\right\}\mu_2(K),
\]
where $\mu_2(K)=\int_{\mathbb R} u^2K(u)\,du$.
Define also the categorical-smoothing component
\[
B_{1,2}^{(j)}(\bm{x}, \bm{z})
=
\mathbb{E}\!\left[ m_j(\bm{x}, \bm{Z})-m_j(\bm{x}, \bm{z})
\mid \bm{X}=\bm{x},\, d_H(\bm{Z}, \bm{z})=1 \right]\,
\mathbb{P}\!\left(d_H(\bm{Z}, \bm{z})=1\mid \bm{X}=\bm{x}\right),
\]
where $d_H(\bm{Z}, \bm{z})$ denotes the Hamming distance between $\bm{Z}$ and $\bm{z}$.
Let
\[
B_1^{(j)}=\mathbb{E}\!\left\{\left[\frac{B_{1,1}^{(j)}(\bm X,\bm Z)}{f(\bm X,\bm Z)}\right]^2\right\},\qquad
B_2^{(j)}=2\,\mathbb{E}\!\left[\frac{B_{1,1}^{(j)}(\bm X,\bm Z)\,B_{1,2}^{(j)}(\bm X,\bm Z)}{f(\bm X,\bm Z)^2}\right],
\]
\[
B_3^{(j)}=\mathbb{E}\!\left\{\left[\frac{B_{1,2}^{(j)}(\bm X,\bm Z)}{f(\bm X,\bm Z)}\right]^2\right\}.
\]
Assume these expectations exist and are finite, and that
\[
4B_1^{(j)}B_3^{(j)}-\big(B_2^{(j)}\big)^2>0.
\]

\end{enumerate}

\begin{proof}[Proof of Proposition~1]
Fix an interior point $(\bm x,\bm z)$ in the support of $f$.
For $j=1,2$, $\hat m_{j,\bm H}(\bm x,\bm z)$ is a mixed-data Nadaraya--Watson estimator
applied to responses $\sin(\Theta_i)$ and $\cos(\Theta_i)$, respectively.
Under assumptions (A1)--(A6), the general mixed-kernel expansions in
\citet[Theorem~2.1]{racine2004nonparametric} apply to each Cartesian component.
Specializing that result to our product-kernel weights and rewriting the categorical contribution
in terms of Hamming neighbors yields, for $j=1,2$,
\begin{align*}
\mathbb{E}[\hat m_{j, \bm{H}}(\bm{x},\bm{z})]- m_j(\bm{x}, \bm{z})
=&\,\mu_2(K)h^2 \left\{
\frac{\nabla_{\bm{x}} f(\bm{x}, \bm{z})^\top \nabla_{\bm{x}} m_j(\bm{x}, \bm{z})}{f(\bm{x}, \bm{z})}
+ \frac{1}{2} \operatorname{tr} \left[ \nabla_{\bm{x}}^2 m_j(\bm{x}, \bm{z}) \right]
\right\} \\
&\quad+\lambda \sum_{\substack{\tilde{\bm{z}} \in \mathcal{Z} \\ d_H(\tilde{\bm{z}}, \bm{z}) = 1}}
\left[ m_j(\bm{x}, \tilde{\bm{z}}) - m_j(\bm{x}, \bm{z}) \right]
 \frac{f(\bm{x}, \tilde{\bm{z}})}{f(\bm{x}, \bm{z})}
+ o(h^2 + \lambda),
\end{align*}
which matches the bias expression stated in the proposition.

For the variance, again by \citet[Theorem~2.1]{racine2004nonparametric}, at an interior point
\[
\Var\!\big[\hat m_{j,\bm H}(\bm x,\bm z)\big]
=
\frac{R(K)\,s_j^2(\bm x,\bm z)}{n\,h^k\,f(\bm x,\bm z)}
+o\!\left(\frac{1}{n h^k}\right),
\]
where $R(K)=\big(\int K^2(u)\,du\big)^k$ for the product kernel and $s_j^2(\bm x,\bm z)$ is the conditional
variance of the auxiliary error in \eqref{model1_SM}--\eqref{model2_SM}. The categorical kernel enters only
through bounded multiplicative constants arising from the mixed-kernel normalization. Under the concentration
property in (A3) these constants remain bounded as $\lambda\to 0$ and, therefore, do not affect the leading
$1/(n h^k)$ rate.

Finally, since $\hat m_{1,\bm H}$ and $\hat m_{2,\bm H}$ use the same weights, the same mixed-kernel variance
expansion yields
\[
\Cov\!\big[\hat m_{1,\bm H}(\bm x,\bm z),\hat m_{2,\bm H}(\bm x,\bm z)\big]
=
\frac{R(K)\,c(\bm x,\bm z)}{n\,h^k\,f(\bm x,\bm z)}
+o\!\left(\frac{1}{n h^k}\right),
\]
where $c(\bm x,\bm z)=\Cov(\xi,\zeta\mid \bm X=\bm x,\bm Z=\bm z)$ is the conditional covariance in the auxiliary
models. This completes the proof.
\end{proof}

\begin{proof}[Proof of Theorem~1]
Fix an interior point $(\bm x,\bm z)$ in the support of $f$ and let
\[
\bm m(\bm x,\bm z)=\big[m_1(\bm x,\bm z),m_2(\bm x,\bm z)\big]^\top
\quad \text{and}  \quad
\hat{\bm m}_{\bm H}(\bm x,\bm z)=\big[\hat m_{1,\bm H}(\bm x,\bm z),\hat m_{2,\bm H}(\bm x,\bm z)\big]^\top.
\]
Define $g:\mathbb R^2\to(-\pi,\pi]$ by $g(u_1,u_2)=\operatorname{atan2}(u_1,u_2)$, so that
\[
m(\bm x,\bm z)=g[\bm m(\bm x,\bm z)] \quad \text{and}  \quad
\hat m_{\bm H}(\bm x,\bm z)=g[\hat{\bm m}_{\bm H}(\bm x,\bm z)].
\]
Since $\ell(\bm x,\bm z)=\|\bm m(\bm x,\bm z)\|>0$ by (A5), a second-order Taylor expansion of $g$ around
$\bm m(\bm x,\bm z)$ yields
\begin{equation}\label{eq:taylor_g_SM}
\begin{aligned}
\hat m_{\bm H}(\bm x,\bm z)-m(\bm x,\bm z)
&=
\nabla g\big[\bm m(\bm x,\bm z)\big]^\top
\Big[\hat{\bm m}_{\bm H}(\bm x,\bm z)-\bm m(\bm x,\bm z)\Big] \\
&\quad+
\frac12\Big[\hat{\bm m}_{\bm H}(\bm x,\bm z)-\bm m(\bm x,\bm z)\Big]^\top
\nabla^2 g \big[\bm m(\bm x,\bm z)\big]
\Big[\hat{\bm m}_{\bm H}(\bm x,\bm z)-\bm m(\bm x,\bm z)\Big] \\
&\quad + r_n, 
\end{aligned}
\end{equation}
where $r_n=o_p\big[h^2+\lambda+(nh^k)^{-1/2}\big]$.
For $g(u_1,u_2)=\operatorname{atan2}(u_1,u_2)$ we have
\[
\nabla g(u_1,u_2)=\frac{1}{u_1^2+u_2^2}\,(u_2,\,-u_1)^\top,
\]
and the entries of $\nabla^2 g(u_1,u_2)$ are $O(\|(u_1,u_2)\|^{-2})$.

Taking expectations in \eqref{eq:taylor_g_SM} and using Proposition~1 gives
\begin{align}\label{eq:Etaylor_g_SM}
\mathbb E\big[\hat m_{\bm H}(\bm x,\bm z)-m(\bm x,\bm z)\big]
&=
\nabla g \big[\bm m(\bm x,\bm z)\big]^\top
\mathbb E \big[\hat{\bm m}_{\bm H}(\bm x,\bm z)-\bm m(\bm x,\bm z)\big]
\nonumber\\
&\quad+
\frac12\,\operatorname{tr}\!\left\{
\nabla^2 g \big[\bm m(\bm x,\bm z)\big]\,
\Var\!\big[\hat{\bm m}_{\bm H}(\bm x,\bm z)\big]
\right\}
+o(h^2+\lambda).
\end{align}
The first term in \eqref{eq:Etaylor_g_SM} yields the contributions of order $O(h^2)$ and $O(\lambda)$. The second term is $O(1/(nh^k))$, which is $o(h^2)$ under assumption (A1). Consequently, it is absorbed into the remainder term $o(h^2+\lambda)$.

Writting $\bm m(\bm x,\bm z)=[m_1(\bm x,\bm z),m_2(\bm x,\bm z)]^\top$ and
$\ell(\bm x,\bm z)^2=m_1(\bm x,\bm z)^2+m_2(\bm x,\bm z)^2$, the linear term can be written as
\begin{equation}\label{eq:linear_term_bias}
\begin{aligned}
\nabla g\big[\bm m(\bm x,\bm z)\big]^\top\,
\mathbb E\big[\hat{\bm m}_{\bm H}(\bm x,\bm z)-\bm m(\bm x,\bm z)\big]
&=
\frac{1}{\ell(\bm x,\bm z)^2}\left\{
m_2(\bm x,\bm z)\,\mathrm{Bias} \big[\hat m_{1,\bm H}(\bm x,\bm z)\big] \right. \\
&\quad
\left. - m_1(\bm x,\bm z)\,\mathrm{Bias} \big[\hat m_{2,\bm H}(\bm x,\bm z)\big]
\right\} \\
&\quad+o(h^2+\lambda).
\end{aligned}
\end{equation}
Substituting the $O(h^2)$ biases from Proposition~1 and using
$m_1(\bm x,\bm z)=\ell(\bm x,\bm z)\sin[m(\bm x,\bm z)]$ and
$m_2(\bm x,\bm z)=\ell(\bm x,\bm z)\cos[m(\bm x,\bm z)]$
yields the continuous-smoothing contribution in the statement, namely
\[
\mu_2(K)h^2\left\{
\frac{1}{\ell(\bm x,\bm z)\, f(\bm x,\bm z)}\,
\nabla_{\bm x}\big(\ell f\big)(\bm x,\bm z)^\top \nabla_{\bm x}m(\bm x,\bm z)
+\frac12\,\operatorname{tr}\!\big[\nabla^2_{\bm x}m(\bm x,\bm z)\big]
\right\}.
\]

For the categorical part, Proposition~1 gives, for $j=1,2$,
\[
\mathrm{Bias}_\lambda \big[\hat m_{j,\bm H}(\bm x,\bm z)\big]
=
\lambda \sum_{d_H(\tilde{\bm z},\bm z)=1}
\Big[m_j(\bm x,\tilde{\bm z})-m_j(\bm x,\bm z)\Big]\,
\frac{f(\bm x,\tilde{\bm z})}{f(\bm x,\bm z)}.
\]
Define the increment
\[
\Delta m_j(\bm x,\bm z;\tilde{\bm z})
= m_j(\bm x,\tilde{\bm z})-m_j(\bm x,\bm z),
\qquad j=1,2.
\]
Then
\[
\frac{
m_2(\bm x,\bm z)\,\Delta m_1(\bm x,\bm z;\tilde{\bm z})
-
m_1(\bm x,\bm z)\,\Delta m_2(\bm x,\bm z;\tilde{\bm z})
}{\ell(\bm x,\bm z)^2}
=
\frac{
m_2(\bm x,\bm z)m_1(\bm x,\tilde{\bm z})
-
m_1(\bm x,\bm z)m_2(\bm x,\tilde{\bm z})
}{\ell(\bm x,\bm z)^2}.
\]
Using,
\[
m_2(\bm x,\bm z)m_1(\bm x,\tilde{\bm z})-m_1(\bm x,\bm z)m_2(\bm x,\tilde{\bm z})
=
\ell(\bm x,\bm z)\,\ell(\bm x,\tilde{\bm z})\,
\sin\!\Big[m(\bm x,\tilde{\bm z})-m(\bm x,\bm z)\Big],
\]
we obtain
\[
\frac{
m_2(\bm x,\bm z)\,\Delta m_1(\bm x,\bm z;\tilde{\bm z})
-
m_1(\bm x,\bm z)\,\Delta m_2(\bm x,\bm z;\tilde{\bm z})
}{\ell(\bm x,\bm z)^2}
=
\frac{1}{\ell(\bm x,\bm z)}\,
\ell(\bm x,\tilde{\bm z})\,
\sin\!\Big[m(\bm x,\tilde{\bm z})-m(\bm x,\bm z)\Big].
\]
Summing over all Hamming neighbors and multiplying by
$\lambda f(\bm x,\tilde{\bm z})/f(\bm x,\bm z)$ yields the $O(\lambda)$ term in the stated bias.

For the variance, \eqref{eq:taylor_g_SM} and Proposition~1 imply the first-order delta-method expansion
\[
\Var\!\big[\hat m_{\bm H}(\bm x,\bm z)\big]
=
\nabla g \big[\bm m(\bm x,\bm z)\big]^\top
\Var\!\big[\hat{\bm m}_{\bm H}(\bm x,\bm z)\big]\,
\nabla g \big[\bm m(\bm x,\bm z)\big]
+o\!\left(\frac{1}{nh^k}\right).
\]
Since $\nabla g \big[\bm m(\bm x,\bm z)\big]=\big[m_2(\bm x,\bm z),-m_1(\bm x,\bm z)\big]^\top/\ell(\bm x,\bm z)^2$ and
\[
\Var\!\big[\hat{\bm m}_{\bm H}(\bm x,\bm z)\big]
=
\frac{R(K)}{nh^k f(\bm x,\bm z)}
\begin{pmatrix}
s_1^2(\bm x,\bm z) & c(\bm x,\bm z)\\
c(\bm x,\bm z) & s_2^2(\bm x,\bm z)
\end{pmatrix}
+o\!\left(\frac{1}{nh^k}\right),
\]
we obtain
\[
\Var\!\big[\hat m_{\bm H}(\bm x,\bm z)\big]
=
\frac{R(K)}{nh^k\,\ell(\bm x,\bm z)^2\,f(\bm x,\bm z)}\,
\sigma_1^2(\bm x,\bm z)
+o\!\left(\frac{1}{nh^k}\right),
\]
where $\sigma_1^2(\bm x,\bm z)=\Var(\sin\varepsilon\mid \bm X=\bm x,\bm Z=\bm z)$.

Finally, the effective noise term $\sigma_1^2(\bm x,\bm z)$ can be expressed in terms of the auxiliary-model
quantities as
\[
\sigma_1^2(\bm{x}, \bm{z}) =
\frac{s_1^2(\bm{x}, \bm{z}) m_2^2(\bm{x}, \bm{z}) + s_2^2(\bm{x}, \bm{z}) m_1^2(\bm{x}, \bm{z})
- 2 c(\bm{x}, \bm{z}) m_1(\bm{x}, \bm{z}) m_2(\bm{x}, \bm{z})}{m_1^2(\bm{x}, \bm{z}) + m_2^2(\bm{x}, \bm{z})}.
\]
Using equations \eqref{cvar1_SM}--\eqref{ccovar_SM} and considering  the identities
$m_1(\bm x,\bm z)=\ell(\bm x,\bm z)\sin\!\big[m(\bm x,\bm z)\big]$ and
$m_2(\bm x,\bm z)=\ell(\bm x,\bm z)\cos\!\big[m(\bm x,\bm z)\big]$,
the above expression simplifies to
$\sigma_1^2(\bm x,\bm z)=\Var(\sin\varepsilon\mid \bm X=\bm x,\bm Z=\bm z)$,
which completes the proof.
\end{proof}

\section{Additional results from the real-data application}
\label{app:data}

This section collects complementary analyses for the real-data study in the Section~\ref{sec:results}.
%In Section~\ref{app:desc_errors}, we provide a brief descriptive summary of the circular response (directional error)
%across experimental conditions, using condition-wise polar density representations and violin plots.
In Section~\ref{app:selectors}, we compare the condition-specific regression fits obtained under the three
bandwidth selectors considered in the paper (cross-validation, rule-of-thumb, and the proposed bootstrap criterion),
thereby complementing the bootstrap-based curves and simultaneous confidence bands shown in that section.
In Section~\ref{app:residuals}, we report additional residual diagnostics and circular goodness-of-fit to those shown in Section~\ref{sec:resi}.
Finally, in Section~\ref{app:alt_predictor}, we study an alternative specification in which a different continuous
predictor is included, and we provide both global and condition-specific estimates together with simultaneous
bootstrap confidence bands.

\subsection{Condition-specific regression fits under alternative bandwidth selectors}
\label{app:selectors}

The response variable in our real-data analysis is the \emph{directional error}, defined as the signed angular
discrepancy between the participant's reported target direction and the true target direction, wrapped to the
principal interval $(-\pi,\pi]$. Thus, values close to $0$ correspond to nearly perfect updating, whereas values
near $\pm\pi$ indicate very large errors close to a reversal.

In Section~\ref{sec:regs}, we report the bootstrap-based condition-specific fits and simultaneous confidence bands.
To assess sensitivity to bandwidth choice, this section compares the condition-specific regression curves obtained
with the three selectors considered: cross-validation, rule-of-thumb, and the proposed bootstrap criterion.
In this application, the resulting bandwidth vectors were
$\bm H_{\text{CV}}=(0.31,\,0.12)$, $\bm H_{\text{RoT}}=(0.37,\,0.05)$, and
$\bm H_{\text{boot}}=(0.28,\,0.08)$.

Figure~\ref{fig:selectors_by_condition} displays the three fitted curves for each sensory condition.
Across all conditions, the differences among selectors are minor: the three curves are nearly indistinguishable over
most of the covariate range, and they lead to the same substantive conclusions as those discussed in Section~\ref{sec:regs}
based on the bootstrap selector. In particular, any condition-specific bias patterns and the increase in variability
under sensory restriction remain essentially unchanged. Therefore, the present comparison is best viewed as a
robustness check of the bandwidth choice. Since all three selectors yield essentially the same fit, we refer the
reader to the discussion in Section~\ref{sec:regs}
for the substantive interpretation.

\begin{figure}[!htb]
\centering
\begin{subfigure}[b]{0.32\textwidth}
    \includegraphics[width=\textwidth]{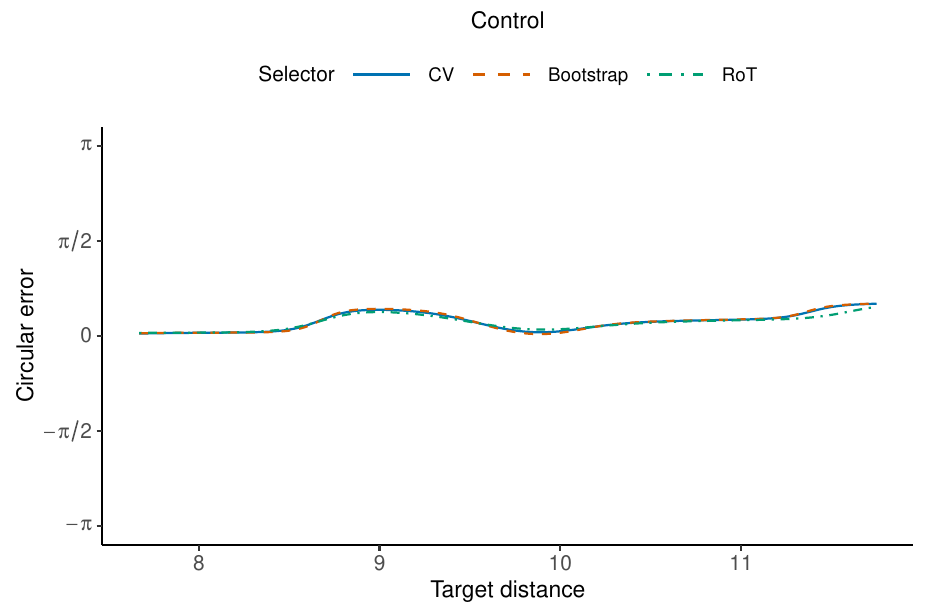}
\end{subfigure}
\begin{subfigure}[b]{0.32\textwidth}
    \includegraphics[width=\textwidth]{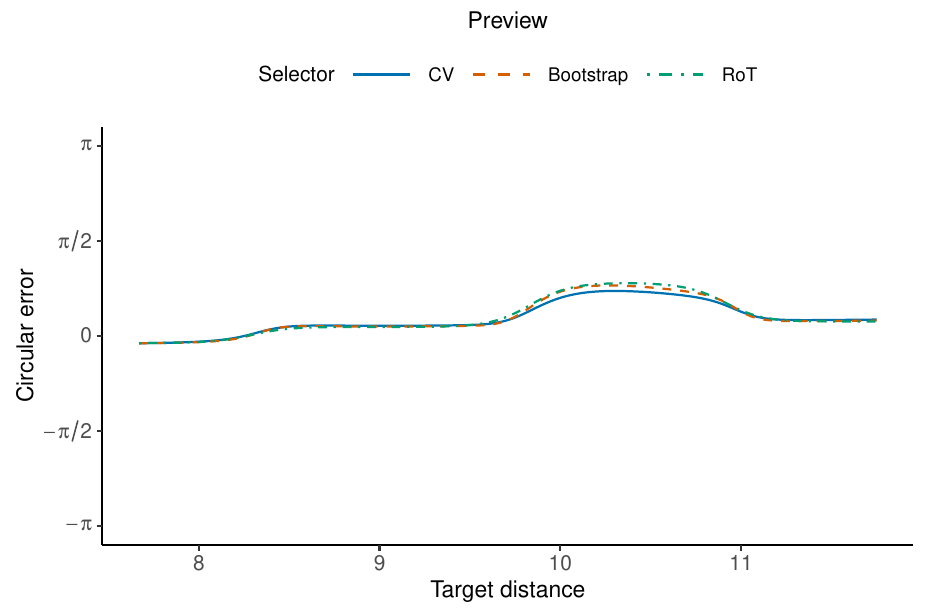}
\end{subfigure}
\begin{subfigure}[b]{0.32\textwidth}
    \includegraphics[width=\textwidth]{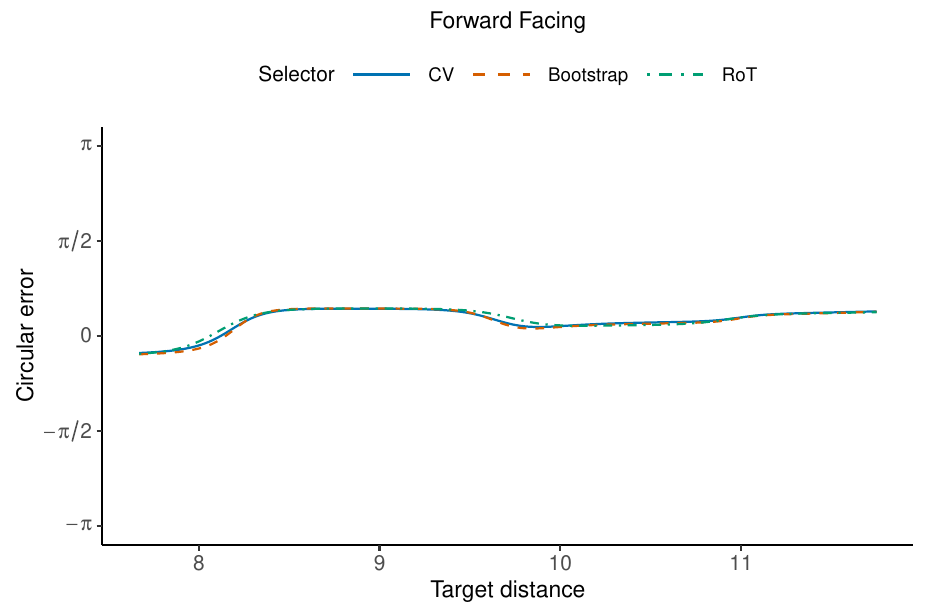}
\end{subfigure}

\vspace{0.6em}

\begin{minipage}{0.66\textwidth}
\centering
\begin{subfigure}[b]{0.48\textwidth}
  \includegraphics[width=\textwidth]{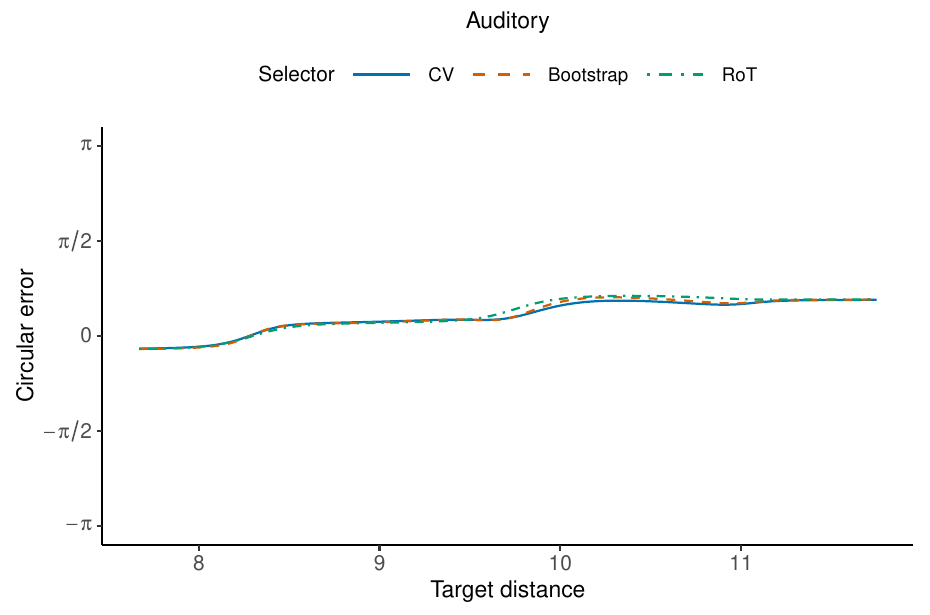}
\end{subfigure}
\hfill
\begin{subfigure}[b]{0.48\textwidth}
  \includegraphics[width=\textwidth]{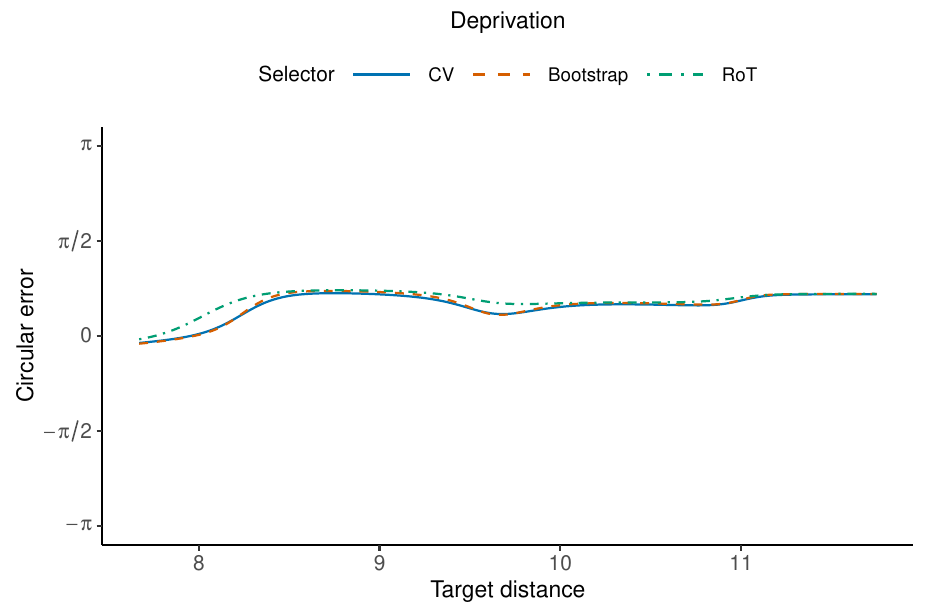}
\end{subfigure}
\end{minipage}

\caption{
Estimated regression curves for angular error as a function of target distance, stratified by sensory condition. Each panel shows one experimental condition with circular regression fits obtained using cross-validation (solid blue line), bootstrap (dashed orange line), and rule-of-thumb (dot-dashed green line) bandwidth selectors.}
\label{fig:selectors_by_condition}
\end{figure}

\subsection{Residual analysis and circular goodness-of-fit metrics}
\label{app:residuals}

% -------------------------------------------------------------------------
% Residual analysis and circular goodness-of-fit metrics (SM)
% -------------------------------------------------------------------------
This section complements the brief residual analysis reported in Section~\ref{sec:resi}. We provide the pooled residuals-versus-fitted diagnostic omitted for space, pooled tests of circular uniformity, and condition-wise summaries of the observational cosine loss $
\operatorname{CASE}_{\text{obs}}$. Unless otherwise noted, all results refer to the bootstrap-based fit $\hat m_{\bm H_{\text{boot}}}$. A short comparison across bandwidth selectors, based on $\operatorname{CASE}_{\text{obs}}$ and $R^2_{\text{circ}}$ (pooled and by condition), is reported at the end of this section.

Let $\Theta_i\in(-\pi,\pi]$ denote the observed directional error (in radians) and
$\hat m_{\bm H}(\bm X_i,\bm Z_i)$ the fitted value. Define the circular residuals as
\[
\hat{\varepsilon}_i
=
\bigl[(\Theta_i-\hat m_{\bm H}(\bm X_i,\bm Z_i)+\pi)\ (\operatorname{mod}\,2\pi)\bigr]-\pi,
\]
so that $\hat{\varepsilon}_i\in(-\pi,\pi]$.
This preserves a signed interpretation and is invariant under global rotations.
Because $\hat m_{\bm H}$ is estimated from the data, the residuals $\hat\varepsilon_i$ are not independent in a
strict sense, and the repeated-trial structure (multiple observations per participant) may further induce
within-subject dependence. Therefore, the circular uniformity tests reported below should be interpreted as
diagnostic checks rather than exact level-$\alpha$ procedures.

%Let $\Theta_i\in(-\pi,\pi]$ denote the observed directional error (in radians) and $\hat m_{\bm H}(\bm X_i,\bm Z_i)$ the fitted value. Circular residuals are defined as
%\[
%\hat{\varepsilon}_i
%= \big[\Theta_i - \hat m_{\bm H}(\bm X_i,\bm Z_i)\big]\quad (\operatorname{mod}\, 2\pi),
%\]
%using the wrap operator to $(-\pi,\pi]$,
%which preserves a signed interpretation and is invariant to global rotations. Because residuals arise from a smoothed nonparametric fit, they are not strictly i.i.d. Therefore, the formal tests below should be interpreted as heuristics rather than exact level-$\alpha$ procedures.

Figure~\ref{fig:residuals_vs_fitted_sm} displays the pooled scatterplot of residuals versus fitted values (both wrapped to $(-\pi,\pi]$). We assess trends, curvature, and changes in dispersion across fitted angles. No discernible trend or heteroscedasticity is observed, consistent with the condition-wise circular boxplots reported in Figure~\ref{fig:residuals_boxplot_bootstrap}, suggesting that the fitted regression captures the dominant structure of the angular response at the pooled level.

\begin{figure}[!htb]
\centering
\includegraphics[width=0.72\textwidth]{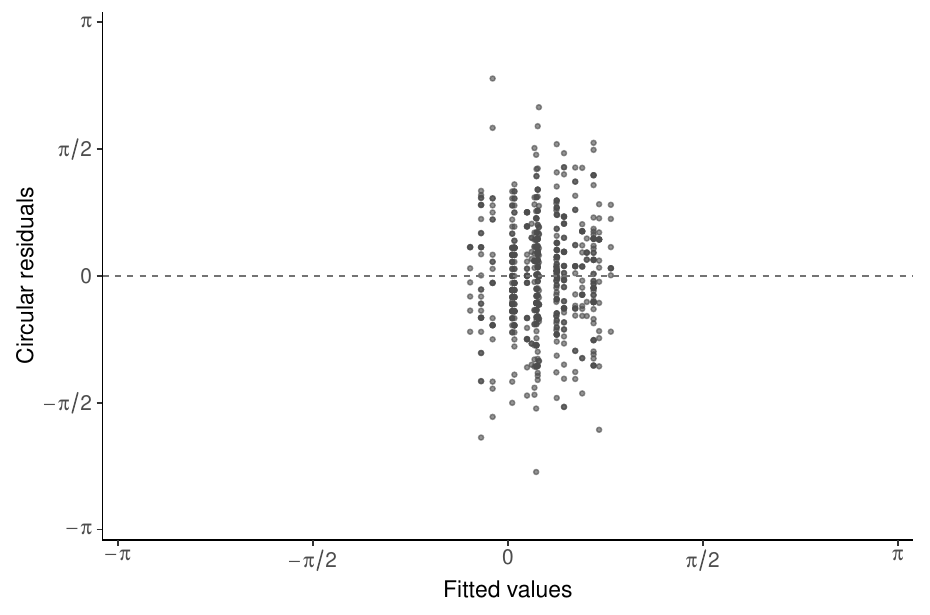}
\caption{Residuals versus fitted values for the bootstrap-based estimator. Residuals are wrapped to $(-\pi,\pi]$. No visible trend or heteroscedasticity is observed.}
\label{fig:residuals_vs_fitted_sm}
\end{figure}

Complementing Figure~\ref{fig:residuals_vs_fitted_sm}, we also tested whether the pooled residuals are compatible with circular uniformity, $H_0:\hat\varepsilon_i\sim \mathrm{Unif}(-\pi,\pi)$. Specifically, we report three standard tests: Rayleigh’s test \citep{mardia2009directional}, which is most powerful against unimodal departures driven by a nonzero mean direction; Watson’s $U^2$ test \citep{watson1961goodness}, an omnibus CDF-based test sensitive to multimodality and asymmetry; and Kuiper’s test \citep{kuiper1960tests}, the rotation-invariant analogue of the Kolmogorov--Smirnov test. All three tests reject uniformity (Rayleigh statistic $0.8122$, $p<0.001$; Watson $U^2=24.6620$, $p<0.01$; Kuiper statistic $14.973$, $p<0.01$). In light of the point made above, the residuals are not strictly i.i.d.\ because they are computed from a smoothed fit, these $p$-values should be viewed as heuristic rather than as exact level-$\alpha$ evidence. Nonetheless, the rejections suggest some remaining circular structure not fully captured by the fitted mean model. When testing separately by condition, multiplicity adjustment is recommended. Qualitatively, the strongest departures from uniformity occur under \textit{Deprivation}, with weaker effects in other settings. Taken together with the diagnostics above and the condition-wise residual boxplots in Figure~\ref{fig:residuals_boxplot_bootstrap}, these findings suggest an adequate fit of the primary mean structure, while motivating potential extensions allowing condition-specific dispersion (heteroscedasticity) or hierarchical components (see the discussion in Section~\ref{sec:discussion}).

To quantify in-sample fit, we use two complementary global summaries aligned with the cosine loss used throughout the paper. First, the observational cosine loss (denoted by $\operatorname{CASE}_{\text{obs}}$) is:
\[
\operatorname{CASE}_{\text{obs}}(\bm H)
=\frac{1}{n}\sum_{i=1}^n\left\{1-\cos\!\left[\Theta_i-\hat m_{\bm H}(\bm X_i,\bm Z_i)\right]\right\}.
\]
Unlike the loss criterion used in the simulation study (which compares $\hat m_{\bm H}$ to the true regression function), $\operatorname{CASE}_{\text{obs}}$ depends only on observed responses and fitted values and is therefore directly applicable to real data. It is rotation-invariant and does not depend on the particular wrapping convention (any representation equivalent to $(-\pi,\pi]$ yields the same value). A condition-wise decomposition,
\[
\operatorname{CASE}_{\text{obs}}^{(z)}(\bm H)
=\frac{1}{n_z}\sum_{i:\,Z_i=z}\left\{1-\cos\!\left[\Theta_i-\hat m_{\bm H}(\bm X_i,\bm Z_i)\right]\right\},
\]
highlights how performance varies across sensory settings.

Second, we report a circular analogue of the coefficient of determination,
\[
R^2_{\text{circ}}
=1-\frac{\text{SSE}_{\text{circ}}}{\text{SST}_{\text{circ}}},
\]
with 
\[
\text{SSE}_{\text{circ}}=\sum_{i=1}^n\left\{1-\cos\!\left[\Theta_i-\hat m_{\bm H}(\bm X_i,\bm Z_i)\right]\right\}
\]
and
\[
\text{SST}_{\text{circ}}=\sum_{i=1}^n\left\{1-\cos\!\left[\Theta_i-\bar\Theta\right]\right\},
\]
where $\bar\Theta$ is the sample mean direction. This coefficient is invariant to global rotations and wrapping and provides a transparent, cosine-loss--aligned measure of explained circular variability:
$R^2_{\mathrm{circ}}=1$ for a perfect fit, $R^2_{\mathrm{circ}}=0$ for the (sample) circular mean predictor,
and $R^2_{\mathrm{circ}}<0$ if the model underperforms the circular mean. We emphasize that this is not a unique
canonical circular $R^2$, but rather a convenient, cosine-loss–aligned pseudo-measure. For that reason, we also report $\operatorname{CASE}_{\text{obs}}$, which is the primary loss-based goodness-of-fit summary used throughout the paper. Note that if $\text{SST}_{\mathrm{circ}}=0$, then $R^2_{\text{circ}}$ is undefined. In practice this occurs only when the sample directions are essentially identical. For alternative circular summaries (e.g., functionals of the mean resultant
length, circular correlations, or EDF-based discrepancies), see \citet{mardia2009directional}, \citet{fisher1992statistical},
and \citet{pewsey2013circular}.

Across bandwidth selectors, performance is very similar. For $\bm H_{\text{CV}}$, we obtained $\operatorname{CASE}_{\text{obs}}=0.182$ and $R^2_{\text{circ}}=0.140$; for $\bm H_{\text{boot}}$, $\operatorname{CASE}_{\text{obs}}=0.181$ and $R^2_{\text{circ}}=0.144$; and for $\bm H_{\text{RoT}}$, $\operatorname{CASE}_{\text{obs}}=0.182$ and $R^2_{\text{circ}}=0.139$. The bootstrap selector has a slight edge (lowest $\operatorname{CASE}_{\text{obs}}$, highest $R^2_{\text{circ}}$), but differences are minimal. The modest magnitude of $R^2_{\text{circ}}$ is expected here because the global metric pools trials across sensory conditions with markedly different dispersion. Moreover, under the most demanding conditions residual angles show weak concentration, which limits explainable cosine variability.

A condition-wise view of $\operatorname{CASE}_{\text{obs}}$ clarifies this heterogeneity (Table~\ref{tab:cmase_by_condition_sm}). Fit is tightest under \textit{Control}, followed by \textit{Forward Facing}, and weakest under \textit{Auditory}. The \textit{Preview} and \textit{Deprivation} conditions show intermediate but clearly higher loss than \textit{Control} and \textit{Forward Facing}, with very similar values to each other. Differences among selectors remain small within each condition, with no practically meaningful separation. These stratified results help interpret the pooled summaries: aggregation across heterogeneous sensory contexts attenuates the overall signal, while $\operatorname{CASE}_{\text{obs}}^{(z)}$ identifies where the model performs best.

\begin{table}[!htb]
\centering
\caption{$\operatorname{CASE}_{\text{obs}}^{(z)}$ by condition and bandwidth selector.}
\label{tab:cmase_by_condition_sm}
\begin{tabular}{lccccc}
\toprule
\textbf{Selector} & \textbf{Control} & \textbf{Preview} & \textbf{Forward Facing} & \textbf{Auditory} & \textbf{Deprivation} \\
\midrule
$\bm{H}_{\text{CV}}$  & 0.1504 & 0.2138 & 0.1717 & 0.2388 & 0.2118 \\
$\bm{H}_{\text{boot}}$        & 0.1497 & 0.2116 & 0.1715 & 0.2381 & 0.2110 \\
$\bm{H}_{\text{RoT}}$     & 0.1523 & 0.2110 & 0.1716 & 0.2381 & 0.2108 \\
\bottomrule
\end{tabular}
\end{table}

\subsection{Alternative model: distance error as continuous predictor}
\label{app:alt_predictor}

In the main study in Section~\ref{sec:regs}, the continuous predictor is the actual distance to the target (\textit{target distance}). Here we
consider an alternative specification based on \textit{distance error}, defined at each trial as the perceived
distance reported by the participant (\textit{target distance subject response}) minus the actual target distance.
This variable captures trial-by-trial perceptual misestimation and provides a complementary, cognitively meaningful
covariate. Whereas \textit{target distance} reflects the physical layout of the task, \textit{distance error} quantifies
the participant’s internal distortion of distance and allows us to assess whether directional performance is more
strongly related to objective distance or to subjective misjudgment.

We begin with a pooled (global) exploratory fit of angular error as a function of \textit{distance error}.
Figure~\ref{fig:global_fit_distance_error} shows a jittered scatterplot with a global Nadaraya--Watson circular
regression curve overlaid, using a Gaussian kernel and a rule-of-thumb bandwidth (as in Figure~\ref{fig:scatter_global}). In this
case, the pooled RoT choice is $h_{\mathrm{RoT}}=0.892$. Jitter is applied only for visualization to reduce
overplotting and does not affect the fitted curve.

\begin{figure}[!htb]
\centering
\includegraphics[width=0.85\textwidth]{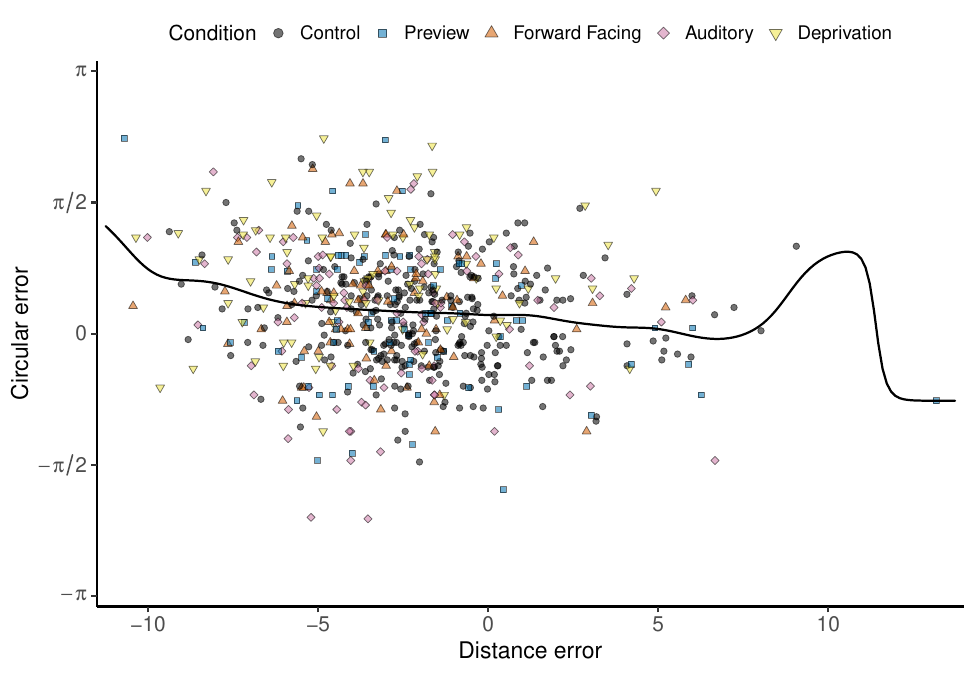}
\caption{Global circular regression curve using \textit{distance error} as predictor with rule-of-thumb bandwidth
$h_{\mathrm{RoT}}=0.892$. A small horizontal jitter was applied to the predictor values for visualization.}
\label{fig:global_fit_distance_error}
\end{figure}

Overall, the pooled fit suggests a mild decrease in the mean circular error as \textit{distance error} increases from
negative values towards zero, with the mean response remaining relatively stable across a broad central range. From a
behavioral standpoint, using \textit{distance error} re-expresses the covariate effect in terms of an internal metric
distortion. When participants \emph{underestimate} distance (negative values), the average directional error tends to
be larger, and it decreases as perceived and actual distances align. At the extremes of the predictor domain, most
noticeably for large positive \textit{distance error}, the pooled curve exhibits local fluctuations. These should be
interpreted cautiously, as they occur where the effective sample size is smaller and the Nadaraya--Watson estimator is
more sensitive to sparse design and to a potentially uneven composition of sensory conditions in that region.

We emphasize that apparent vertical spread can be visually amplified by the linear display of an inherently circular
response (angles near $-\pi$ and $\pi$ represent similar directions), reinforcing the need for circular regression and
diagnostics. Compared with the pooled curve obtained using \textit{target distance} (Figure~\ref{fig:scatter_global}), using \textit{distance error}
re-centers the continuous covariate around zero and expresses the effect in terms of signed perceptual misestimation.
In addition, the pooled curve shows a more visible edge behavior at large positive \textit{distance error}, likely driven
by sparse data and/or an uneven mix of sensory conditions, which further motivates the condition-specific analysis that follows.

To disentangle the effect of perceptual misestimation from sensory context, we fit the same circular regression model
using \textit{distance error} as the continuous covariate and sensory condition as the categorical
predictor. As in the analysis in Section~\ref{sec:regs}, bandwidth selection can be performed via cross-validation, bootstrap, or a
rule-of-thumb criterion. In this dataset (consistently with the \textit{target distance} analysis), the resulting
curves are very similar across selectors, so the choice of bandwidth selector does not affect the substantive
conclusions. Therefore, to keep this supplementary sensitivity check fully reproducible at low computational cost, we
report only the rule-of-thumb fit. The resulting bandwidth was
$\bm{H}_{\text{RoT}} = (0.892,\ 0.054)$, where the first component corresponds to Gaussian smoothing in the continuous
covariate and the second to the categorical smoothing parameter in the Aitchison--Aitken kernel. Figure~\ref{fig:confidence_bands_distance_error}
displays the corresponding condition-specific regression fits and their (calibrated) bootstrap confidence bands.

\begin{figure}[!htb]
\centering
\begin{subfigure}[b]{0.32\textwidth}
    \includegraphics[width=\textwidth]{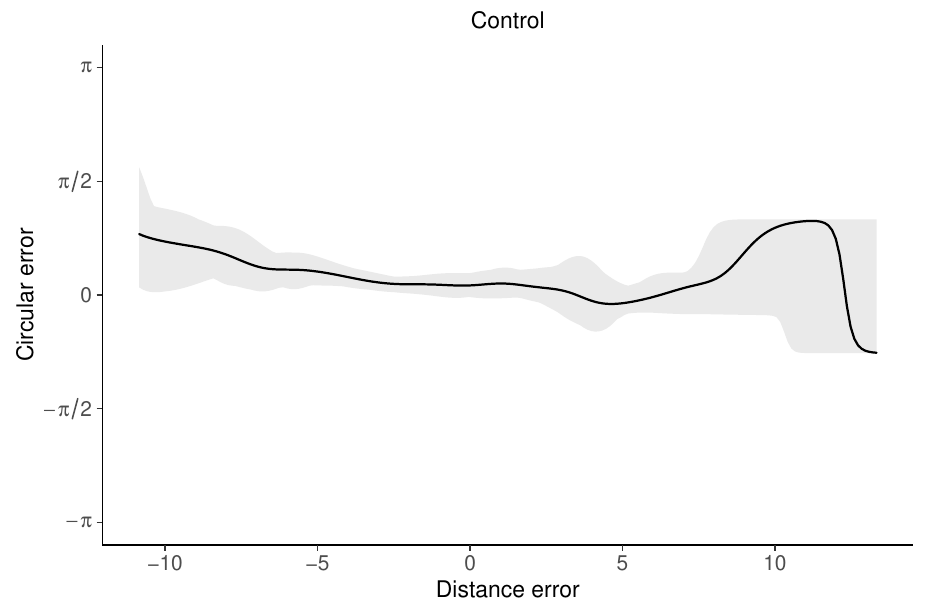}
\end{subfigure}
\begin{subfigure}[b]{0.32\textwidth}
    \includegraphics[width=\textwidth]{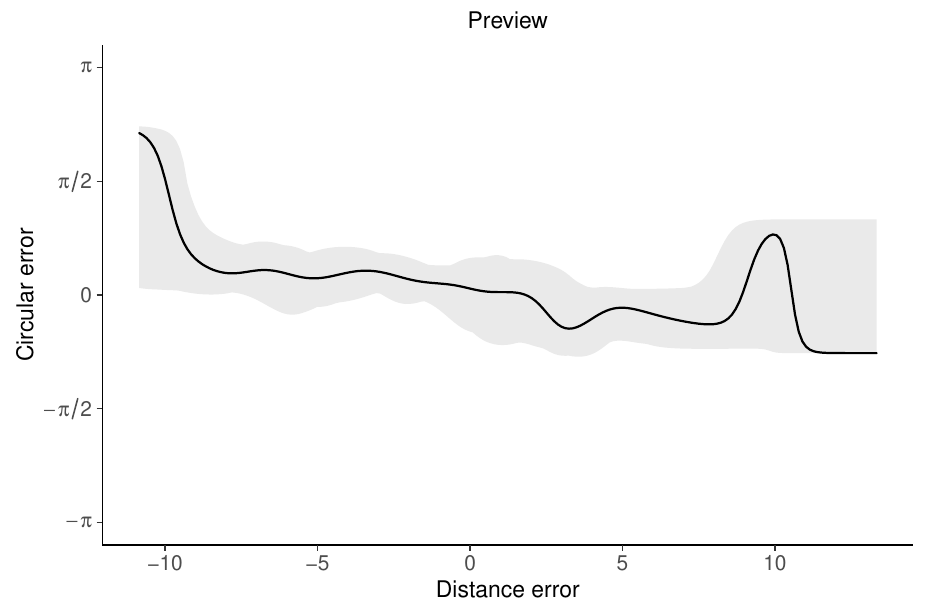}
\end{subfigure}
\begin{subfigure}[b]{0.32\textwidth}
    \includegraphics[width=\textwidth]{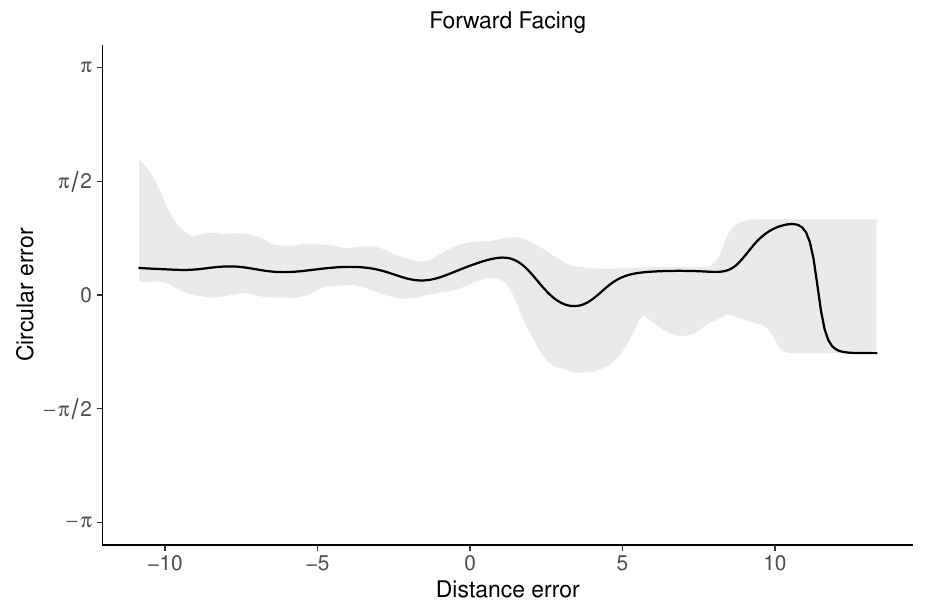}
\end{subfigure}

\vspace{0.6em}

\begin{minipage}{0.66\textwidth}
\centering
\begin{subfigure}[b]{0.48\textwidth}
  \includegraphics[width=\textwidth]{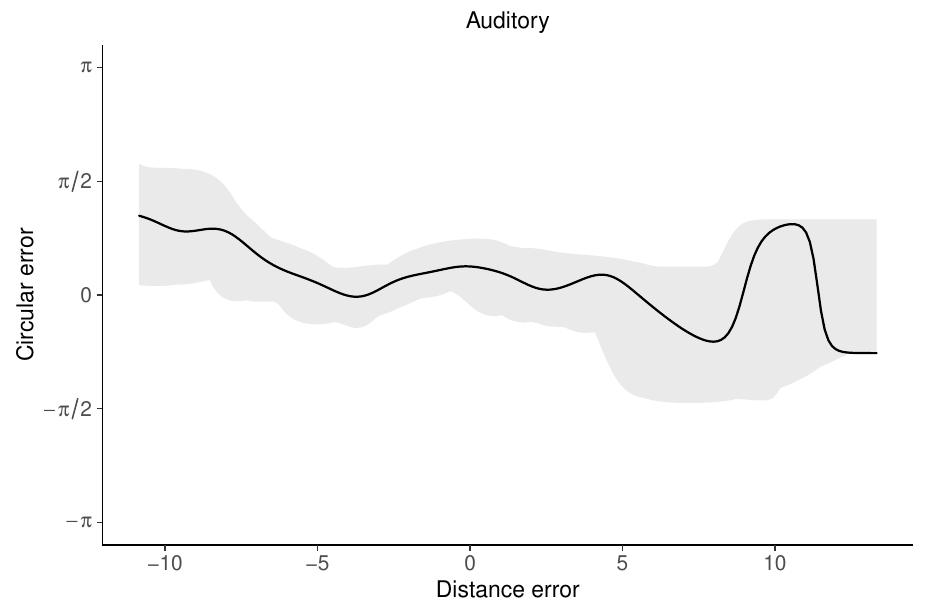}
\end{subfigure}
\hfill
\begin{subfigure}[b]{0.48\textwidth}
  \includegraphics[width=\textwidth]{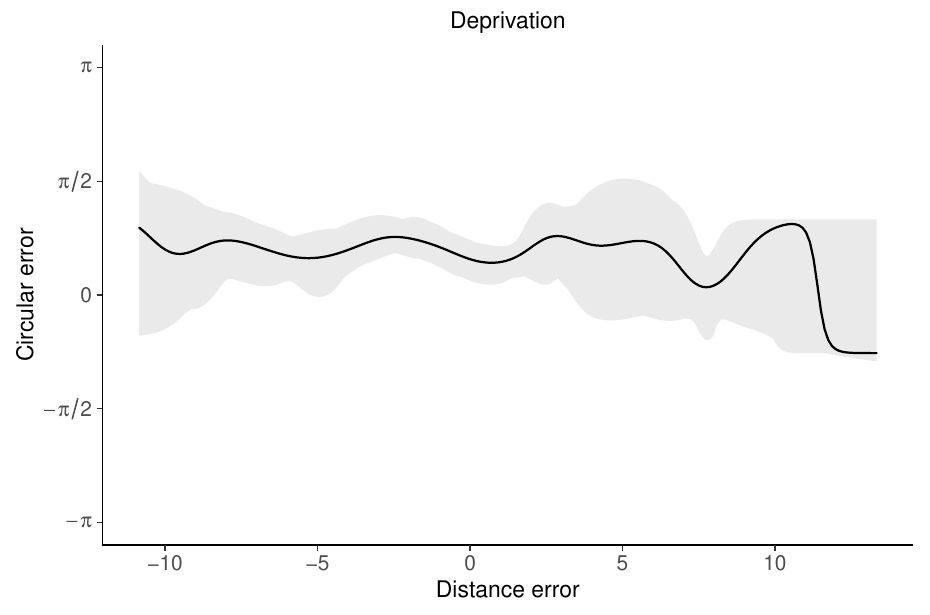}
\end{subfigure}
\end{minipage}

\caption{
Estimated circular regression curves and simultaneous confidence bands for the estimated circular regression functions under each sensory condition using \textit{distance error} as predictor. Bands were computed via bootstrap with \( B = 200 \) resamples and significance level \( \alpha = 0.05 \). 
%The bottom-right panel summarizes the mean angular width of each band.
}
\label{fig:confidence_bands_distance_error}
\end{figure}

As shown in Figure~\ref{fig:confidence_bands_distance_error}, under \textit{Control} and \textit{Preview} the estimated
mean direction is broadly stable over the central range of \textit{distance error}, with comparatively narrower bands.
In \textit{Forward Facing} the fit exhibits a modest dependence on \textit{distance error}, whereas under reduced
sensory input (most notably \textit{Auditory} and \textit{Deprivation}) the curves show a clearer association: mean
directional errors tend to be higher when \textit{distance error} is negative (targets perceived as nearer than they
truly are), and decrease as perceived and actual distances approach agreement. Across conditions, \textit{distance
error} provides an interpretable scale: negative values correspond to underestimation of distance and positive values
to overestimation, facilitating a direct link between changes in the mean direction and trial-by-trial perceptual
misestimation. Interpretation at the extremes should be made with caution: edge behavior is unstable where support is
very sparse (only a few trials). This is evident not only at large positive \textit{distance error} but also, for
\textit{Preview}, at the far negative end, where the fitted curve shows a sharp rise likely driven by limited support
rather than a robust trend.

Confidence bands in Figure~\ref{fig:confidence_bands_distance_error} are computed using the same bootstrap framework as
in Section~\ref{sec:regs}, applied within condition, with $B=200$ resamples and nominal target simultaneous level
$1-\alpha$ with $\alpha=0.05$. As in the that analysis, simultaneous coverage is obtained by calibrating the
pointwise level through the iterative procedure implemented in our software. Qualitatively, the band widths follow the expected pattern.
Uncertainty is smallest in full-vision conditions and increases as sensory information is degraded, with wider and more
irregular bands near the extremes of the predictor domain where data are sparse. 

Taken together, this alternative specification supports the main conclusions while offering a complementary
interpretation: directional updating errors depend both on sensory context and on participants’ internal misestimation
of distance. Using \textit{distance error} re-expresses the covariate effect in terms of a subjective distortion
(perceived minus actual distance), and the condition-specific fits indicate that the
association between directional error and distance misestimation is not uniform across sensory conditions.
In several settings, especially under reduced sensory input, larger directional errors tend to occur when
distance is underestimated (negative values), although the strength and shape of this pattern vary by condition.
Overall, this reinforces the motivation for modeling and inference \emph{by condition}, rather than relying on pooled
summaries that conflate heterogeneous sensory contexts. For completeness, we also ran the same residual diagnostics and
goodness-of-fit checks for this alternative predictor. The qualitative conclusions mirror those for
\textit{target distance}: residuals (wrapped to $(-\pi,\pi]$) show no systematic trend versus fitted values, dispersion
remains strongly condition-dependent, and pooled circular-uniformity tests continue to reject strict uniformity,
indicating that some residual circular structure persists in the most challenging settings and could be addressed by
future extensions (e.g., condition-specific dispersion or hierarchical components).

\end{document}